\documentclass[11pt]{article}
\usepackage{graphicx}
\usepackage{float}
\usepackage{subfigure}
\usepackage{caption}
\usepackage{multirow}
\usepackage{makecell}
\usepackage{appendix}
\usepackage[figuresright]{rotating}
\usepackage{booktabs}
\usepackage[linesnumbered,ruled]{algorithm2e}
\usepackage[margin=1in]{geometry}
\usepackage{hyperref}
\usepackage{amsfonts}
\usepackage{mathrsfs}
\usepackage{comment}
\usepackage{amsmath}
\usepackage{amssymb}
\usepackage{amsthm}
\usepackage{amscd}
\usepackage{graphicx}
\usepackage{indentfirst}
\usepackage[all]{xy}
\usepackage{titlesec}
\usepackage{enumerate}
\usepackage{bm}
\usepackage{enumitem}
\usepackage{color}
\usepackage{dsfont}
\usepackage{arydshln}
\usepackage{booktabs}
\newtheorem{theorem}{Theorem}[section]
\newtheorem{lemma}[theorem]{Lemma}
\newtheorem{proposition}[theorem]{Proposition}
\newtheorem{corollary}[theorem]{Corollary}

\newtheorem{definition}[theorem]{Definition}

\setcounter{table}{0}

\newcommand{\x}{\vec}

\begin{document}
\title{Deep holes of a class of twisted Reed-Solomon codes\footnote{The research was supported by the National Natural Science Foundation of China under the Grants 12222113 and 12441105.}}
\author{Haojie Gu\footnote{Haojie Gu is with the School of Mathematical Sciences, Capital Normal University, Beijing 100048, China. Email: 2200502051@cnu.edu.cn.},
    \and Nan Wang\footnote{Nan Wang is with the School of Mathematical Sciences, Capital Normal University, Beijing 100048, China. Email: },
	\and Jun Zhang\footnote{Jun Zhang is with the School of Mathematical Sciences, Capital Normal University, Beijing 100048, China. Email: junz@cnu.edu.cn.}
}

\date{}
\maketitle

\begin{abstract}
	
  The deep hole problem is a fundamental problem in coding theory, and it has many important applications in code constructions and cryptography. The deep hole problem of Reed-Solomon codes has gained a lot of attention. As a generalization of Reed-Solomon codes, we investigate the problem of deep holes of a class of twisted Reed-Solomon codes in this paper. 
  Firstly, we provide the necessary and sufficient conditions for $\boldsymbol{a}=(a_{0},a_{1},\cdots,a_{n-k-1})\in\mathbb{F}_{q}^{n-k}$ to be the syndrome of some deep hole of $TRS_{k}(\mathcal{A},l,\eta)$. Next, we consider the problem of determining all deep holes of the twisted Reed-Solomon codes $TRS_{k}(\mathbb{F}_{q}^{*},k-1,\eta)$. Specifically, we prove that there are no other deep holes of $TRS_{k}(\mathbb{F}_{q}^{*},k-1,\eta)$ for $\frac{3q+2\sqrt{q}-8}{4}\leq k\leq q-5$ when q is even, and  $\frac{3q+3\sqrt{q}-5}{4}\leq k\leq q-5$ when q is odd. We also completely determine their deep holes for $q-4\leq k\leq q-2$ when $q$ is even.

	\begin{flushleft}
		\textbf{Keywords:} Twisted Reed-Solomon codes, covering radius, deep holes, character sums
	\end{flushleft}
\end{abstract}

\section{Introduction}

Let $\mathbb{F}_{q}$ be a finite field with size $q$ and characteristic $p$. Let $\mathbb{F}_{q}^n$ be the $n$-dimensional vector space over the finite field $\mathbb{F}_{q}$.
  For any vector $ \boldsymbol{x}=(x_1,x_2,\cdots,x_n)\in \mathbb{F}_{q}^n$, the \emph{Hamming weight} $wt( \boldsymbol{x})$ of $ \boldsymbol{x}$ is defined to be the number of non-zero coordinates, i.e.,
$$wt( \boldsymbol{x})=|\left\{i\,|\,1\leqslant i\leqslant n,\,x_i\neq 0\right\}|.$$

An $[n,k,d]$-linear code $\mathcal{C}\subseteq \mathbb{F}_{q}^n$ is a $k$-dimensional linear subspace of $\mathbb{F}_{q}^n$ with minimum distance $d=d(\mathcal{C})$ defined as $$d(\mathcal{C})=\min\left\{wt(\boldsymbol{c}):\boldsymbol{c}\in\mathcal{C}\backslash\{0\}\right\}.$$ 
For any vector $\boldsymbol{u}\in\mathbb{F}_{q}^n$, the error distance from $\boldsymbol{u}$ to $\mathcal{C}$ is defined as:
$$d(\boldsymbol{u},\mathcal{C})=\min\{d(\boldsymbol{u},\boldsymbol{v})\,|\,\boldsymbol{v}\in C\},$$
where $d(\boldsymbol{u},\boldsymbol{v})=|\{i\,|\,u_{i}\neq v_{i},\,1\le i\le n\}|$
is the Hamming distance between vectors $\boldsymbol{u}$ and $\boldsymbol{v}$.
The error distance plays a crucial role in the decoding of the code. The maximum error distance
$$\rho(C)=\max\{d(\boldsymbol{u},\, \mathcal{C})\,|\,\boldsymbol{u}\in \mathbb{F}_{q}^n\}$$
is called the \emph{covering radius} of $\mathcal{C}$. Vectors that achieve this maximum error distance are referred to as deep holes of the code. The computation of the covering radius is a fundamental problem in coding theory. However, McLoughlin~\cite{mcloughlin1984complexity} has proven that the computational difficulty of determining the covering radius of random linear codes strictly exceeds NP-completeness.

In recent years, the problem of determining the deep holes of Reed-Solomon codes has attracted lots of attention in the literature \cite{cafure2011singularities,cheng2007deciding,keti2016deep,kaipa2017deep,li2008subset,li2010new,li2008error,wu2012deep,zhang2012new,zhang2023deep,zhang2019deep,zhu2012computing,zhuang2015determining}.

\begin{definition}
	Let $\mathcal{A}=\left\{\alpha_{1},\cdots,\alpha_{n}\right\}\subseteq \mathbb{F}_{q}$ be the evaluation set, then the Reed-Solomon code $RS_{k}(\mathcal{A})$ of length $n$ and dimension $k$ is defined as \begin{equation*}
	RS_{k}(\mathcal{A})=\left\{(f(\alpha_{1}),\cdots,f(\alpha_{n})):f(x)\in \mathbb{F}_{q}[x],\deg(f)\leq k-1\right\}.
	\end{equation*}
\end{definition}

It can be demonstrated that the covering radius of $RS_k(\mathcal{A})$ is equal to $n-k$. It was shown in~\cite{guruswami2005maximum} that the problem of determining whether a vector is a deep hole of a given Reed-Solomon code is NP-hard. Furthermore, it has been verified that vectors whose generating polynomials have degree $k$ are indeed deep holes of $RS_{k}(\mathcal{A})$~\cite{cheng2007deciding}. There may be some other deep holes of $RS_k(\mathcal{A})$ for certain subset $\mathcal{A}$ of $\mathbb{F}_q$. 

 For the standard Reed-Solomon code $RS_k(\mathbb{F}_{q},\mathbb{F}_{q}^{*})$, based on numerical
 computations, Cheng and Murray~\cite{cheng2007deciding} conjectured that vectors defined by polynomials of degree $k$ are the only deep holes possible. As a theoretical evidence,
 they proved that their conjecture is true for words $\boldsymbol{u}_{f}$ defined by polynomial
 $f$ if $d=\deg(\boldsymbol{u}_{f})-k$ is small and $q$ is sufficiently large compared to $d+k$. Li and Wan~\cite{li2008subset}(resp. Zhang et al.~\cite{zhang2012new}) proved that vectors with generating polynomials of degree $k+1$ (resp. $k+2$) are not deep holes of $RS_k(\mathbb{F}_q)$. For
 those words defined by polynomials in $\mathbb{F}_{q}[x]$ of low degrees, Li and Wan~\cite{li2010new} applied the method of Cheng and Wan~\cite{cheng2007deciding}
  to study the error distance $d(\boldsymbol{u},\mathcal{C})$ for the standard Reed-Solomon code. Liao~\cite{liao2011reed} extended the results in~\cite{li2010new} to those
 words defined by polynomials in $\mathbb{F}_{q}[x]$ of high degrees. By means of
 a deeper study of the geometry of hypersurfaces, Cafure and et al.~\cite{cafure2011singularities} made
 some improvement of the results in \cite{li2010new}. When $2\leq k\leq p-2$ or $2\leq q-p\leq k\leq q-3$, Zhuang et al. \cite{zhuang2015determining} proved that the conjecture of Cheng and Murray is ture. In particular, the conjecture of Cheng and Murray  holds for prime fields. Applying Seroussi and Roth’s results on the extension of RS codes \cite{seroussi1986mds}, Kaipa \cite{kaipa2017deep} proved that the conjecture of Cheng and Murray  holds for $k\geq \left\lfloor\frac{q-1}{2}\right\rfloor$.

Since Beelen et al. first introduced twisted Reed-Solomon (TRS) codes in \cite{beelen2017twisted,beelen2022twisted}, many coding scholars have studied TRS codes with good properties, including TRS MDS codes, TRS self-dual codes, and TRS LCD codes \cite{gu2023twisted,huang2021mds,liu2021construction,sui2022mds,zhang2022class}. It is also difficult to determine the covering radius and deep holes of twisted RS codes. Fang et al.~\cite{fang2024deep} consider the problem of determining all deep holes of the full-length twisted Reed-Solomon codes $TRS_k(\mathbb{F}_{q},\theta)$. Specifically, they prove that there are no other deep holes of $TRS_k(\mathbb{F}_q, \theta)$ for $\frac{3q-8}{4}\leq k\leq q-4$ when $q$ is even, and $\frac{3q+3\sqrt{q}-7}{4}\leq k\leq q-4$ when $q$ is odd. They also completely determine the deep holes for $q-3\leq k\leq q-1$. In this paper, we consider a class of TRS codes which is more general than the class studied by Fang et al.~\cite{fang2024deep}.

The rest of this paper is organized as follows. In Section II, we present some results on twisted Reed-Solomon codes and character sums. In Section III, we determine the covering radius and some deep holes of twisted RS codes $TRS_{k}(\mathcal{A},l,\eta)$ for a general evaluation set $\mathcal{A}\subseteq \mathbb{F}_{q}$. In Section IV, we present the results on the completeness of deep holes of $TRS_k(\mathbb{F}_{q}^{*}, k-1,\eta)$.

\section{Preliminary}

\subsection{Twisted Reed-Solomon codes $TRS_{k}(\mathcal{A},l,\eta)$}

In this paper, we will study the covering radius problem and deep hole problem of the following class of TRS codes.
\begin{definition}
	Let integers $l, k, n$ be such that $0\leq l\leq k-1\leq n-2$. For any $\eta\in \mathbb{F}_{q}^{*}$ denote by
	\begin{equation}
	\begin{aligned}
	 S_{k, l, \eta}=\left\{\sum_{i \in \{0,1,...,k-1\} \backslash \{l\}}f_i x^i+f_l\left(x^l+\eta x^k\right) \mid f_0, \cdots, f_{k-1} \in \mathbb{F}_{q}\right\}. \notag
	\end{aligned}
	\end{equation}
   For any $\mathcal{A}=\{{\alpha_1,\cdots,\alpha_n}\}\subset \mathbb{F}_{q}$ the linear code
	$$TRS_{k}(\mathcal{A},l,\eta)=\left\{(f(\alpha_1),\cdots,f(\alpha_n)):f\in S_{k, l, \eta}\right\}$$
	is called a twisted Reed-Solomon (TRS) code.
\end{definition}

Obviously, the TRS code $TRS_{k}(\mathcal{A},l,\eta)$ has
a generator matrix 

\begin{equation}
G=\small{\begin{pmatrix}
	1&1&\cdots&1\\
	\alpha_{1}&\alpha_{2}&\cdots&\alpha_{n}\\
	\vdots&\vdots&\vdots&\vdots\\
	\alpha_{1}^{l-1}&\alpha_{2}^{l-1}&\cdots&\alpha_{n}^{l-1}\\
	\alpha_{1}^{l+1}&\alpha_{2}^{l+1}&\cdots&\alpha_{n}^{l+1}\\
	\vdots&\vdots&\vdots&\vdots\\
	\alpha_{1}^{k-1}&\alpha_{2}^{k-1}&\cdots&\alpha_{n}^{k-1}\\
\alpha_{1}^{l}+\eta\alpha_{1}^{k}&\alpha_{2}^{l}+\eta\alpha_{2}^{k}&\cdots&\alpha_{n}^{l}+\eta\alpha_{n}^{k}\\
	\end{pmatrix}}.
\end{equation}

The finite geometry method of syndromes is an important method in determining deep holes of Reed-Solomon codes and related codes.
In order to computing a parity check matrix of the TRS code $TRS_{k}(\mathcal{A},l,\eta)$, we recall a useful result from~\cite{sui2022mds}.

\begin{lemma}{\cite{sui2022mds}}\label{lemcont:4.1}
	Let $\alpha_{1},\cdots,\alpha_{n}$ be distinct elements of $\mathbb{F}_{q}$ and $\prod\limits_{i=1}^{n}(x-\alpha_{i})=\sum\limits_{j=0}^{n}\sigma_{j}x^{n-j}$. Let $\Lambda_{0}=1$ and $\boldsymbol{y}=(\Lambda_{0},\Lambda_{1},\cdots,\Lambda_{n})$ be the unique solution of the following system of equations:
	\begin{equation*}
	\begin{pmatrix}
	\sigma_{0}&0&0&\cdots&0\\
	\sigma_{1}&\sigma_{0}&0&\cdots&0\\
	\sigma_{2}&\sigma_{1}&\sigma_{0}&\cdots&0\\
	\vdots&\vdots&\vdots&\ddots&\vdots\\
	\sigma_{n}&\sigma_{n-1}&\sigma_{n-2}&\cdots&\sigma_{0}
	\end{pmatrix}
	\begin{pmatrix}
	\Lambda_{0}\\\Lambda_{1}\\ \vdots\\ \Lambda_{n}
	\end{pmatrix}=
	\begin{pmatrix}
	1\\0\\ \vdots\\ 0
	\end{pmatrix}.
	\end{equation*}
	For any fixed $t\in [0,n]$, if $\alpha_{i}^{n-1+t}=\sum\limits_{j=0}^{n-1}f_{j}\alpha_{i}^j$ for all $i\in [n]$, then $f_{n-1}=\Lambda_{t}$.
\end{lemma}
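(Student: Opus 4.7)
The plan is to encode the triangular convolution system defining the $\Lambda_t$ as an identity of formal power series, recognize the resulting generating function as $\prod_{i=1}^{n}(1-\alpha_i y)^{-1}$, and then extract $f_{n-1}$ by combining Lagrange interpolation with a partial-fraction expansion.

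First, I would read the matrix equation row by row and observe that the recursion $\sum_{j=0}^{t}\sigma_{j}\Lambda_{t-j}=\delta_{t,0}$ for $0\le t\le n$ is exactly the coefficient-wise form of
\begin{equation*}
\sigma(y)\,\Lambda(y)\equiv 1\pmod{y^{n+1}},\qquad \sigma(y):=\sum_{j=0}^{n}\sigma_{j}y^{j},\ \Lambda(y):=\sum_{t=0}^{n}\Lambda_{t}y^{t}.
\end{equation*}
Substituting $x=1/y$ into $\prod_{i=1}^{n}(x-\alpha_i)=\sum_{j=0}^{n}\sigma_j x^{n-j}$ and multiplying by $y^n$ gives $\sigma(y)=\prod_{i=1}^{n}(1-\alpha_i y)$, so
\begin{equation*}
\sum_{t=0}^{n}\Lambda_{t}y^{t}\equiv\prod_{i=1}^{n}\frac{1}{1-\alpha_i y}\pmod{y^{n+1}}.
\end{equation*}

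Next, I would compute $f_{n-1}$ explicitly. Because the $\alpha_i$ are distinct, $\sum_{j=0}^{n-1}f_j x^j$ is the unique polynomial of degree $<n$ that agrees with $x^{n-1+t}$ at $\alpha_1,\dots,\alpha_n$. Hence by Lagrange interpolation
\begin{equation*}
\sum_{j=0}^{n-1}f_{j}x^{j}=\sum_{i=1}^{n}\alpha_{i}^{n-1+t}\,\ell_{i}(x),\qquad \ell_{i}(x)=\prod_{j\neq i}\frac{x-\alpha_j}{\alpha_i-\alpha_j},
\end{equation*}
and the leading coefficient of $\ell_i$ is $1/\prod_{j\neq i}(\alpha_i-\alpha_j)$; matching the coefficient of $x^{n-1}$ gives
\begin{equation*}
f_{n-1}=\sum_{i=1}^{n}\frac{\alpha_{i}^{n-1+t}}{\prod_{j\neq i}(\alpha_i-\alpha_j)}.
\end{equation*}

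Finally, I would perform a partial-fraction expansion of $\prod_{i=1}^{n}(1-\alpha_i y)^{-1}$ over the distinct simple poles $y=1/\alpha_i$ (indices $i$ with $\alpha_i=0$ contribute trivial factors and can be discarded), compute each residue as $A_i=\alpha_i^{n-1}/\prod_{j\neq i}(\alpha_i-\alpha_j)$, and expand $A_i/(1-\alpha_i y)$ as a geometric series. Collecting the coefficient of $y^{t}$ produces $\sum_{i}\alpha_{i}^{n-1+t}/\prod_{j\neq i}(\alpha_i-\alpha_j)$, matching the expression for $f_{n-1}$ above. Combined with the first step, this yields $\Lambda_{t}=f_{n-1}$. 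The boundary case $t=0$ is trivial since $x^{n-1}$ already has degree $\le n-1$, forcing $f_{n-1}=1=\Lambda_{0}$. There is no substantive obstacle: every manipulation lives over $\mathbb{F}_{q}$, and all the denominators are nonzero by distinctness of the $\alpha_i$; the mild point worth checking is consistency of the partial-fraction step when some $\alpha_i=0$, where the corresponding summand in the interpolation formula vanishes identically for $t\ge 1$.
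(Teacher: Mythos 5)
Your argument is correct. Note, however, that the paper does not prove this lemma at all: it is imported verbatim from the reference [Sui et al., 2022] and used as a black box, so there is no in-paper proof to compare against. Your proposal therefore supplies a self-contained justification, and it is a clean one: the triangular system is exactly the statement that $\Lambda(y)=\sum_t\Lambda_t y^t$ is the inverse of $\sigma(y)=\prod_i(1-\alpha_i y)$ modulo $y^{n+1}$ (valid since $\sigma_0=1$), so $\Lambda_t$ is the complete homogeneous symmetric function $h_t(\alpha_1,\dots,\alpha_n)$; Lagrange interpolation identifies $f_{n-1}=\sum_i \alpha_i^{\,n-1+t}\big/\prod_{j\neq i}(\alpha_i-\alpha_j)$; and the partial-fraction expansion of $\prod_i(1-\alpha_i y)^{-1}$ shows this sum equals $h_t$. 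All three steps are standard identities valid over any field, the denominators are nonzero by distinctness of the $\alpha_i$, and you correctly flag and resolve the only delicate point, namely that an $\alpha_i=0$ contributes a trivial factor to the product while its interpolation summand vanishes (the bookkeeping with the extra factor $\alpha_i-0$ in the Vandermonde products matches on both sides). I see no gap.
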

In \cite{gu2023twisted}, let $\eta_{2}=\cdots=\eta_{l}=0$, then we can compute the parity check matrix of the $TRS_{k}(\mathcal{A},l,l\eta)$.
\begin{theorem}\label{thm:3}
	For $n$-elements $\mathcal{A}=\left\{\alpha_{1},\cdots,\alpha_{n}\right\}\subseteq\mathbb{F}_{q}$, let  $\prod\limits_{i=1}^{n}(x-\alpha_{i})=\sum\limits_{j=0}^{n}\sigma_{j}x^{n-j}$ and $u_{i}=\prod\limits_{j=1,j\neq i}^{n}(\alpha_{i}-\alpha_{j})^{-1}$ for all $1\leq i\leq n$. For $\eta\in \mathbb{F}_{q}^{*}$ and $0\leq l\leq k-1$, the TRS code $TRS_{k}(\mathcal{A},l,\eta)$ has a parity check matrix
	$$
	H=\begin{pmatrix}
	u_{1}&\cdots&u_{n}\\
	u_{1}\alpha_{1}&\cdots&u_{n}\alpha_{n}\\
	\vdots&\vdots&\vdots\\
	u_{1}\alpha_{1}^{n-k-2}&\cdots&u_{n}\alpha_{n}^{n-k-2}\\
	u_{1}f(\alpha_{1})&\cdots&u_{n}f(\alpha_{n})
	\end{pmatrix},
	$$
	where $f(x)=x^{n-k-1}\left(1-\eta\sum\limits_{j=0}^{k-l}\sigma_{j}x^{k-l-j}\right)$.
\end{theorem}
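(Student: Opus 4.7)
The plan is to verify the two defining properties of a parity check matrix: $HG^T=0$ and $\rank H = n-k$. The main tools are the classical Lagrange identity
$$\sum_{i=1}^n u_i \alpha_i^a = 0 \quad (0 \leq a \leq n-2), \qquad \sum_{i=1}^n u_i \alpha_i^{n-1} = 1,$$
together with its extension via Lemma~\ref{lemcont:4.1}, which gives $\sum_i u_i \alpha_i^{n-1+t} = \Lambda_t$ for every $t \geq 0$, where $\Lambda_0 = 1$ and the $\Lambda_t$ satisfy the convolution recurrence $\sum_{j=0}^t \sigma_j \Lambda_{t-j} = 0$ for all $t \geq 1$. The polynomial $f(x)$ is engineered precisely so that this recurrence collapses the relevant inner products.

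First I would handle the easy cases: for the first $n-k-1$ rows of $H$ paired with any row of $G$, the inner products are sums $\sum_i u_i \alpha_i^s$ (one sum, or the sum of two in the twisted case) whose exponents satisfy $s \leq (n-k-2) + k = n-2$, so each vanishes by the Lagrange identity. Next I would tackle the last row of $H$ paired with the plain row $\alpha_i^b$ of $G$ for $b \in \{0,\ldots,k-1\} \setminus \{l\}$: expanding $f(x) = x^{n-k-1} - \eta \sum_{j=0}^{k-l}\sigma_j x^{n-1-l-j}$, the inner product splits as
$$\sum_i u_i \alpha_i^{b+n-k-1} \;-\; \eta \sum_{j=0}^{k-l}\sigma_j \sum_i u_i \alpha_i^{n-1+(b-l-j)},$$
where the first sum vanishes since $b+n-k-1 \leq n-2$. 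The second splits further according to whether $b < l$ (every exponent is $\leq n-2$, so it vanishes directly) or $b > l$ (only $j \in \{0,\ldots,b-l\}$ contribute, and their contribution is $\eta \sum_{j=0}^{b-l} \sigma_j \Lambda_{b-l-j} = 0$ by the recurrence with $t = b-l \geq 1$).

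For the last row paired with the twisted row $\alpha_i^l + \eta \alpha_i^k$, I would compute $\sum_i u_i \alpha_i^l f(\alpha_i)$ and $\sum_i u_i \alpha_i^k f(\alpha_i)$ separately: the first equals $-\eta$ (only the $j=0$ term, contributing $\sigma_0\Lambda_0 = 1$, survives), and the second equals $1 - \eta \sum_{j=0}^{k-l}\sigma_j \Lambda_{k-l-j} = 1$ (the sum vanishes by the recurrence with $t = k-l \geq 1$), so they combine to $-\eta + \eta\cdot 1 = 0$. For the rank, the first $n-k-1$ rows of $H$ are independent by the Vandermonde property, and any element of their span, when paired with the test vector $(\alpha_1^l,\ldots,\alpha_n^l)$, returns $0$ (exponents $j+l \leq n-3$), while the last row paired with that same vector yields $-\eta \neq 0$, forcing independence. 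The main bookkeeping hurdle will be tracking exponents across the three regimes $\leq n-2$, $= n-1$, and $\geq n$, and recognizing the $\Lambda$-recurrence as the single mechanism that kills every tail sum; once that dictionary is set up, every step is mechanical.
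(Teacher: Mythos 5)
Your proof is correct. Note that the paper itself offers no proof of Theorem~\ref{thm:3}: it imports the result from the reference \cite{gu2023twisted} by specializing a multi-twist parity-check computation (setting the extra twist coefficients to zero), so there is no in-paper argument to compare against. Your direct verification — checking $HG^T=0$ via the Lagrange identities $\sum_i u_i\alpha_i^a=0$ for $a\le n-2$, $\sum_i u_i\alpha_i^{n-1}=1$, and the extension $\sum_i u_i\alpha_i^{n-1+t}=\Lambda_t$ from Lemma~\ref{lemcont:4.1} together with the convolution recurrence $\sum_{j=0}^t\sigma_j\Lambda_{t-j}=0$ for $t\ge 1$ — is exactly the mechanism that makes $f$ work, and your exponent bookkeeping in all three regimes checks out, including the key cancellation $-\eta+\eta\cdot 1=0$ against the twisted row (which uses $k-l\ge 1$). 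The rank argument via the test vector $(\alpha_1^l,\ldots,\alpha_n^l)$ is also sound. This is a legitimate self-contained proof of a statement the paper only cites.
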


\subsection{Character Sums}
In this subsection, we present some basic notations and results of character sum theory~\cite{lidl1997finite}. 

Suppose $\mathbb{F}_q$ is a finite filed with characteristic $p$ and of size $q=p^m$. The absolute trace function ${\rm Tr}(x):\, \mathbb{F}_q\rightarrow \mathbb{F}_p$  is defined by 
$$
{\rm Tr}(x)=x+x^p+x^{p^2}+\cdots+x^{p^{m-1}} .
$$

An additive character $\chi$ of $\mathbb{F}_q$ is a group homomorphism from $(\mathbb{F}_q, +)$ into the multiplicative group $\mathbb{S}^1=\{c\in \mathbb{C}\mid |c|=1\}$ of complex numbers of absolute value $1$. For any positive integer $n$, denote by $\zeta_n=e^{\frac{2 \pi \sqrt{-1}}{n}}$ a $n$-th root of unity. For any $a \in \mathbb{F}_q$, the function
$$
\chi_a(x)=\zeta_p^{{\rm Tr}(a x)},\,\forall x\in \mathbb{F}_q
$$
defines an additive character of $\mathbb{F}_q$. For $a=0$, $\chi_a(x)\equiv 1$ is called the trivial additive character of $\mathbb{F}_q$. For $a=1$,  $\chi_1(x)=\zeta_p^{{\rm Tr}(x)}$ is called the canonical additive character of $\mathbb{F}_q$. 

Homomorphisms from the multiplicative group $(\mathbb{F}_q^*,\times)$ to the multiplicative group $\mathbb{S}^1$ are called multiplicative characters of $\mathbb{F}_q$. Fixing a primitive element $\xi$ of $\mathbb{F}_q$, it is known that all multiplicative characters are given by
$$
\psi_i\left(\xi^j\right)=\zeta_{q-1}^{i j} \text { for } j=0,1, \ldots, q-2,
$$
where $0 \leq i \leq q-2$. It is convenient to extend the definition of $\psi_i$ by setting $\psi_i(0)=0$ for $i\neq 0$ and $\psi_0(0)=1$. The character $\psi_0$ is called the trivial multiplicative character of $\mathbb{F}_q$. The multiplicative character $\psi_{(q-1) / 2}$ is called the quadratic character of $\mathbb{F}_q$, and is denoted by $\pi$ in this paper. That is, $\pi(x)=1$ if $x \in \mathbb{F}_q^*$ is a square; $\pi(x)=-1$ if $x \in \mathbb{F}_q^*$ is not a square. 

Let $\psi$ be a multiplicative character and $\chi$ an additive character of $\mathbb{F}_q$. The $\operatorname{Gauss} \operatorname{sum} G(\psi, \chi)$ is defined by
$$
G(\psi, \chi)=\sum_{x \in \mathbb{F}_q^*} \psi(x) \chi(x).
$$

Here we list some important facts from character sum theory.
\begin{proposition}[\cite{lidl1997finite}]\label{gauss}
	\begin{description}\label{prop1}
		\item[(i)]$G(\psi, \chi_{ab})=\overline{\psi(a)}G(\psi, \chi_b)$, for any $a \in \mathbb{F}_q^*$ and $b \in \mathbb{F}_q$.
		\item[(ii)] If $\psi \neq \psi_0$ and $\chi \neq \chi_0$, then $|G(\psi, \chi)|=\sqrt{q}$.
	\end{description}
	
\end{proposition}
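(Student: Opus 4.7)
The plan is to derive both claims directly from the definition of the Gauss sum together with the two standard orthogonality relations: $\sum_{x\in\mathbb{F}_q}\chi(x)=0$ for any nontrivial additive character $\chi$, and $\sum_{x\in\mathbb{F}_q^*}\psi(x)=0$ for any nontrivial multiplicative character $\psi$. The only mechanical tool I need beyond this is the invariance of summation over $\mathbb{F}_q^*$ (or $\mathbb{F}_q$) under multiplication by a nonzero constant.

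For part (i), I would start from the definition
$$
G(\psi,\chi_{ab})=\sum_{x\in\mathbb{F}_q^*}\psi(x)\,\zeta_p^{\mathrm{Tr}(abx)}
$$
and perform the substitution $y=ax$. As $x$ ranges over $\mathbb{F}_q^*$, so does $y$, giving
$$
G(\psi,\chi_{ab})=\sum_{y\in\mathbb{F}_q^*}\psi(a^{-1}y)\,\zeta_p^{\mathrm{Tr}(by)}=\psi(a^{-1})\,G(\psi,\chi_b).
$$
Since $\psi$ takes values on the unit circle, $\psi(a^{-1})=\overline{\psi(a)}$, and this yields the claimed identity.

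For part (ii), the standard approach is to compute $|G(\psi,\chi)|^{2}=G(\psi,\chi)\,\overline{G(\psi,\chi)}$ and collapse the resulting double sum. Expanding the product and making the substitution $x=uy$ with $u\in\mathbb{F}_q^*$ converts the sum into
$$
|G(\psi,\chi)|^{2}=\sum_{u\in\mathbb{F}_q^*}\psi(u)\sum_{y\in\mathbb{F}_q^*}\chi\bigl((u-1)y\bigr).
$$
The inner sum equals $q-1$ when $u=1$ and equals $-1$ otherwise; the latter follows because $\sum_{y\in\mathbb{F}_q}\chi((u-1)y)=0$ when $u\neq 1$ and one must subtract the $y=0$ contribution. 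Inserting these values and using $\sum_{u\in\mathbb{F}_q^*}\psi(u)=0$ yields $|G(\psi,\chi)|^{2}=(q-1)-(-1)=q$.

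There is no real obstacle: the two hypotheses are used in exactly one place each, namely nontriviality of $\chi$ to kill $\sum_{y\in\mathbb{F}_q}\chi((u-1)y)$ for $u\neq 1$, and nontriviality of $\psi$ to kill $\sum_{u\in\mathbb{F}_q^*}\psi(u)$, while the substitution trick handles the rest. The argument is essentially bookkeeping, and my write-up would track the Lidl--Niederreiter presentation closely.
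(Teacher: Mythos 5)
Your proof is correct. The paper gives no proof of this proposition --- it is quoted directly from Lidl and Niederreiter --- and your argument (the substitution $y=ax$ for (i), and the computation of $|G(\psi,\chi)|^2$ via $x=uy$ together with the two orthogonality relations for (ii)) is exactly the standard textbook derivation that the cited reference uses.
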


\begin{proposition}[\cite{lidl1997finite}, Theorems 5.32, 5.33 and 5.34]\label{prop2}
	Let $\chi\neq \chi_0$ be a nontrivial additive character of $\mathbb{F}_q$.
	\begin{description}
		\item[(i)] Suppose $n \in \mathbb{N}$ and $d=\operatorname{gcd}(n, q-1)$. Then
		$$\left|\sum_{c \in \mathbb{F}_q} \chi\left(a c^n+b\right)\right| \leqslant(d-1) q^{1 / 2}$$
		for any $a, b \in \mathbb{F}_q$ with $a \neq 0$.
		\item[(ii)] Suppose $q$ odd, let $f(x)=a_2 x^2+a_1 x+a_0 \in \mathbb{F}_q[x]$ with $a_2 \neq 0$. Then
		$$
		\sum_{c \in \mathbb{F}_q} \chi(f(c))=\chi\left(a_0-a_1^2\left(4 a_2\right)^{-1}\right) \pi\left(a_2\right) G(\pi, \chi).
		$$
		\item[(iii)] Suppose $q$ even and  $b\in \mathbb{F}_q^*$.  Let $f(x)=a_2 x^2+a_1 x+a_0 \in \mathbb{F}_q[x]$ with $a_2 \neq 0$. Then
		$$
		\sum_{c \in \mathbb{F}_q} \chi_b(f(c))=\begin{cases}\chi_b(a_0)q & {\rm if~} ba_2+b^2a_1^2=0,\\
		0 &{\rm otherwise}.
		\end{cases}
		$$
	\end{description}
	
\end{proposition}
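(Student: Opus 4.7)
The plan is to prove the three parts via a common character-sum toolkit. For part (i), I would expand the sum through multiplicative characters. The key counting identity is that, for $y \in \mathbb{F}_q^*$, the number of $c \in \mathbb{F}_q^*$ with $c^n = y$ equals $\sum_{\psi^d = \psi_0} \psi(y)$, the sum running over the $d$ multiplicative characters $\psi$ of $\mathbb{F}_q^*$ whose $d$-th power is trivial. Factoring $\chi(b)$ out, splitting off the $c = 0$ term, and regrouping $c \in \mathbb{F}_q^*$ by the value of $c^n$ gives
$$
\sum_{c \in \mathbb{F}_q} \chi(ac^n + b) \;=\; \chi(b)\Bigl(\chi(0) + \sum_{\psi^d = \psi_0} \overline{\psi(a)}\, G(\psi, \chi)\Bigr);
$$
the trivial-character summand is $G(\psi_0, \chi) = -1$, which cancels $\chi(0) = 1$, leaving $d-1$ Gauss sums over nontrivial $\psi$. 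Proposition~\ref{prop1}(ii) gives $|G(\psi, \chi)| = \sqrt{q}$ for each, and the triangle inequality yields the stated bound.

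For part (ii), since $q$ is odd I would complete the square to write $f(x) = a_2\bigl(x + a_1(2a_2)^{-1}\bigr)^2 + \bigl(a_0 - a_1^2(4a_2)^{-1}\bigr)$. After translating the summation variable, the sum becomes $\chi\bigl(a_0 - a_1^2(4a_2)^{-1}\bigr) \sum_{c} \chi(a_2 c^2)$. Applying the derivation from (i) with $n = 2$, the only nontrivial multiplicative character with $\psi^2 = \psi_0$ is the quadratic character $\pi$. Using that $\pi$ is real-valued (so $\overline{\pi(a_2)} = \pi(a_2)$) produces the asserted closed form.

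For part (iii), I would exploit that the Frobenius $x \mapsto x^2$ is a bijection on $\mathbb{F}_q$ when $q$ is a power of $2$, and that $\mathrm{Tr}(z^2) = \mathrm{Tr}(z)$ for every $z \in \mathbb{F}_q$. Writing $\chi_b(x) = (-1)^{\mathrm{Tr}(bx)}$ and $\sqrt{ba_2}$ for the unique square root of $ba_2$, the quadratic exponent turns linear: $\mathrm{Tr}(ba_2\, c^2) = \mathrm{Tr}(\sqrt{ba_2}\, c)$. Thus
$$
\sum_{c \in \mathbb{F}_q} \chi_b(f(c)) \;=\; \chi_b(a_0) \sum_{c \in \mathbb{F}_q} (-1)^{\mathrm{Tr}\bigl((\sqrt{ba_2} + ba_1)\, c\bigr)},
$$
and the inner sum equals $q$ when the linear coefficient vanishes and $0$ otherwise, by orthogonality of the nontrivial additive character $c \mapsto (-1)^{\mathrm{Tr}(\gamma c)}$. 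Because $(\sqrt{ba_2} + ba_1)^2 = ba_2 + b^2 a_1^2$ (no cross term in characteristic $2$), the coefficient vanishes exactly when $ba_2 + b^2 a_1^2 = 0$, matching the stated dichotomy.

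No step poses a serious obstacle: all three parts rest on the multiplicative-character decomposition of $n$-th roots together with the Weil-level identity $|G(\psi, \chi)| = \sqrt{q}$, both standard. The only mild subtlety lies in (iii), where the characteristic-$2$ squaring trick is what lets one state the vanishing condition in the clean polynomial form $ba_2 + b^2 a_1^2 = 0$ rather than as a square-root equality.
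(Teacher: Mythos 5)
Your three arguments are all correct: the multiplicative-character decomposition of the fibres of $c\mapsto c^n$ plus Proposition~\ref{gauss} gives (i), completing the square reduces (ii) to the $n=2$ case of (i) where the only nontrivial character with $\psi^2=\psi_0$ is $\pi$, and the identity ${\rm Tr}(z^2)={\rm Tr}(z)$ in characteristic $2$ linearizes the exponent in (iii). The paper itself gives no proof — it quotes this proposition directly from Lidl and Niederreiter — and your write-up is essentially the standard textbook proof of those cited theorems, so there is nothing to compare beyond noting that your reconstruction is sound.
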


\begin{proposition}[\cite{lidl1997finite}, Theorem 5.41]\label{prop3}
	Let $\psi$ be a multiplicative character of $\mathbb{F}_q$ of order $m>1$ and let $f \in \mathbb{F}_q[x]$ be a monic polynomial of positive degree that is not an $m$-th power of a polynomial. Let $d$ be the number of distinct roots of $f$ in its splitting field over $\mathbb{F}_q$. Then for every $a \in \mathbb{F}_q$, we have
	$$
	\left|\sum_{c \in \mathbb{F}_q} \psi(a f(c))\right| \leqslant(d-1) q^{1 / 2}.
	$$	
\end{proposition}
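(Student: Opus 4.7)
The plan is to prove the stated bound via Weil's classical approach: realize the character sum as a component of the point count on the auxiliary superelliptic curve $C: y^m = af(x)$, and then invoke the Hasse--Weil theorem for the Zeta function of $C$.

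First, I would dispose of the degenerate cases. Since $\psi \neq \psi_0$, we have $\psi(0) = 0$, so the $a = 0$ sum vanishes trivially. Assume $a \in \mathbb{F}_q^*$; the hypothesis that $f$ is not an $m$-th power of a polynomial is preserved under multiplication by a unit. Using the identity
\[
\#\{y \in \mathbb{F}_q : y^m = c\} = \sum_{j=0}^{m-1} \psi^j(c)
\]
(with the extension $\psi^0(0) = 1$ and $\psi^j(0) = 0$ otherwise), summing over $x \in \mathbb{F}_q$ with $c = af(x)$ yields
\[
N := \#\{(x,y) \in \mathbb{F}_q^2 : y^m = af(x)\} = q + \sum_{j=1}^{m-1} S_j, \quad S_j := \sum_{x \in \mathbb{F}_q} \psi^j(af(x)).
\]
The Hasse--Weil theorem applied to the normalization of the projective closure of $C$ bounds $|N - q|$ by $2g\sqrt{q}$ up to a correction of size $O_d(1)$ from points at infinity and resolved singularities; this controls the aggregated sum $\sum_{j=1}^{m-1} S_j$, but not yet the individual $S_1$ I want.

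To isolate $S_1$, I would invoke the factorization of the Zeta function of $C$ under the Galois action $\mathbb{Z}/m\mathbb{Z}$ of the cyclic Kummer cover $C \to \mathbb{P}^1$: one writes $Z_C(T) = \prod_{j=0}^{m-1} L_j(T)$, where each $L_j$ corresponds to the character $\psi^j$, has reciprocal roots of absolute value $\sqrt{q}$, and, crucially, has degree at most $d-1$. Consequently $S_j = -\sum_i \alpha_{j,i}$ with $|\alpha_{j,i}| = \sqrt{q}$ and at most $d-1$ summands, yielding the target bound $|S_1| \leq (d-1)\sqrt{q}$.

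The main obstacle is controlling the degree of each $L_j$, which amounts to a genus computation via Riemann--Hurwitz applied to the degree-$m$ cover $(x,y) \mapsto x$, branched over the $d$ distinct roots of $f$ (and possibly infinity). The hypothesis that $f$ is not an $m$-th power is exactly the geometric condition ensuring this cover is irreducible, so that $Z_C(T)$ really factors into $m$ non-trivial pieces rather than collapsing. Once the factorization and the degree bound are in place, the Weil conjectures for curves deliver the Riemann hypothesis for each $L_j$, and the claimed inequality follows.
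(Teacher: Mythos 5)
This statement is not proved in the paper at all: it is quoted verbatim from Lidl--Niederreiter (Theorem 5.41) as a known tool, so there is no in-paper argument to measure your proposal against. Judged on its own terms, your sketch follows the standard Weil route (Kummer cover $y^m=af(x)$, factorization of its zeta function into $L$-polynomials $L_j$ attached to the characters $\psi^j$, Riemann hypothesis for curves), which is indeed the canonical modern proof of this bound.

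However, as written the argument has a genuine gap, and you flag it yourself: the entire content of the theorem sits in the two facts you defer, namely that $L_1(T)$ is a \emph{polynomial} of degree at most $d-1$ and that its inverse roots have modulus $\sqrt{q}$. The degree bound is a conductor computation (each of the $d$ distinct roots of $f$ whose multiplicity is not divisible by $m$ contributes tamely to the conductor of the Hecke character $\psi\circ(af)$, plus possibly the place at infinity, giving $\deg L_1\le d-1$), and the hypothesis that $f$ is not an $m$-th power is used precisely to guarantee that this character is ramified somewhere, hence that $L_1$ is entire rather than merely rational; without carrying this out the bound $|S_1|\le (d-1)\sqrt{q}$ is asserted, not derived. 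A secondary imprecision: irreducibility of the full cover $y^m=af(x)$ is \emph{not} equivalent to ``$f$ is not an $m$-th power'' (e.g.\ $m=4$, $f=h^2$ with $h$ not a square already makes the cover reducible while satisfying your hypothesis), and in any case irreducibility of the whole curve is not what you need --- you only need the single character $\psi\circ(af)$ to be ramified, which is exactly what the non-$m$-th-power hypothesis delivers. So the strategy is right and standard, but the proof is an outline whose hard steps remain open.
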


For a nontrivial additive character $\chi$ of $\mathbb{F}_{q}$ and $a,b\in \mathbb{F}_{q}$ the sum
\begin{equation*}
K(\chi;a,b)=\sum\limits_{c\in \mathbb{F}_{q}^{*}}\chi(ac+bc^{-1})
\end{equation*}
is called a Kloosterman sum.

\begin{proposition}[\cite{lidl1997finite}, Theorem 5.45]\label{prop5}
	Let $\chi$ be a nontrivial additive character of $\mathbb{F}_{q}$ and $a,b\in \mathbb{F}_{q}$ not both $0$. Then the Kloosterman sum $K(\chi;a,b)$ satisfies
	\begin{equation*}
	\left|K(\chi;a,b)\right|\leq 2q^{1/2}.
	\end{equation*}
\end{proposition}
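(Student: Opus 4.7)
The plan is to interpret $K(\chi;a,b)$ as a character sum over the multiplicative group $\mathbb{F}_q^*$ whose absolute value is controlled by the number of $\mathbb{F}_q$-rational points on a suitable Artin--Schreier curve, and then to invoke the Hasse--Weil bound. Before that I would dispose of the degenerate cases: if exactly one of $a,b$ vanishes, say $a=0$ and $b\neq 0$, the substitution $c\mapsto c^{-1}$ gives $K(\chi;0,b)=\sum_{c\in \mathbb{F}_q^*}\chi(bc)=-1$, so $|K|=1\leq 2q^{1/2}$, and the case $b=0$ is symmetric.

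Assuming $ab\neq 0$, I would write $\chi=\chi_1(t\,\cdot\,)$ for some $t\in \mathbb{F}_q^*$ and use the substitution $c\mapsto sc$ ($s\in \mathbb{F}_q^*$) to show that $K(\chi_1;a',b')$ is invariant under $(a',b')\mapsto (a's^{-1},b's)$, so $K$ depends on $(a,b)$ essentially only through the product $ab$. This reduces matters to estimating the canonical form
\begin{equation*}
K(\chi_1;a,b)=\sum_{c\in \mathbb{F}_q^*}\zeta_p^{\mathrm{Tr}(ac+bc^{-1})}.
\end{equation*}

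Next I would consider the smooth affine Artin--Schreier cover $C:\,y^p-y=ax+bx^{-1}$ of $\mathbb{G}_m$. Since the image of $y\mapsto y^p-y$ on $\mathbb{F}_q$ is exactly the kernel of $\mathrm{Tr}$, counting affine $\mathbb{F}_q$-points and applying orthogonality of additive characters yields
\begin{equation*}
\#C(\mathbb{F}_q)=(q-1)+\sum_{t=1}^{p-1}K(\chi_1;ta,tb).
\end{equation*}
Compactifying and desingularizing $C$ at the two ramified points $x=0,\infty$ gives a smooth projective curve $\widetilde{C}$; a Riemann--Hurwitz computation from the Artin--Schreier ramification data yields its genus $g$, and the Hasse--Weil bound $|\#\widetilde{C}(\mathbb{F}_q)-(q+1)|\leq 2gq^{1/2}$ then translates into an estimate on the Galois-symmetrized sum.

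The main obstacle is isolating the individual term $K(\chi;a,b)$ from the sum $\sum_{t\neq 0}K(\chi_1;ta,tb)$ over Galois-conjugate additive characters, since the curve-point count only controls the symmetrized quantity. The cleanest resolution is cohomological: the Kloosterman sheaf $\mathrm{Kl}_2$ on $\mathbb{G}_m$ is lisse of rank $2$ and pure of weight $1$, so each Frobenius eigenvalue has absolute value $q^{1/2}$, yielding $|K|\leq 2q^{1/2}$ directly. Since this is precisely the content of \cite[Theorem 5.45]{lidl1997finite}, within the paper itself I would simply appeal to the cited bound; a self-contained alternative is Stepanov's elementary polynomial method applied to $y^q-y=ax+bx^{-1}$, which avoids $\ell$-adic machinery but requires substantial auxiliary-polynomial bookkeeping.
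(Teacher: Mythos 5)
The paper offers no proof of this proposition: it is quoted verbatim as a known classical result, with the reference to \cite[Theorem 5.45]{lidl1997finite} serving as the entire justification, exactly as is done for Propositions~\ref{prop1}--\ref{prop4}. So there is nothing internal to compare your argument against; your closing remark that within the paper one would ``simply appeal to the cited bound'' is in fact precisely what the authors do. As a sketch of how the bound is actually established in the literature, your outline is sound: the degenerate case computation $K(\chi;0,b)=-1$ is correct, the reduction to dependence on $ab$ via $c\mapsto sc$ is correct, and the point count $\#C(\mathbb{F}_q)=(q-1)+\sum_{t=1}^{p-1}K(\chi_1;ta,tb)$ is right. One refinement worth noting: the obstacle you flag --- that the curve only controls the Galois-symmetrized sum --- does not actually force you to the $\ell$-adic Kloosterman sheaf. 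The zeta function of the Artin--Schreier cover factors into Artin $L$-functions indexed by the nontrivial characters $t$, each a polynomial of degree $2$ in this case, and the Riemann hypothesis for the curve applies to each factor separately (equivalently, one counts points over all extensions $\mathbb{F}_{q^\nu}$ and identifies $K(\chi;a,b)=-(\omega_1+\omega_2)$ with $|\omega_i|=q^{1/2}$). This is essentially the route Lidl and Niederreiter take, building on their Stepanov-style treatment of the Riemann hypothesis for these curves, so the ``cohomological'' detour is heavier machinery than the cited source requires.
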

The theory of character sums is widely used in counting rational points in algebraic geometry.
\begin{proposition}[\cite{lidl1997finite}, Lemma 6.24]\label{prop4}
	For odd $q$, let $b\in \mathbb{F}_q,a_1,a_2\in\mathbb{F}^*_q$, and $\pi$ be the quadratic character of $\mathbb{F}_q$. Then ${\rm N}(a_1X^2+a_2Y^2-b)=q+v(b)\pi(-a_1a_2),$
	where $v(0)=q-1$ and $v(b)=-1$ for $b\in \mathbb{F}^*_q$.
\end{proposition}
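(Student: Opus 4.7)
The plan is to count $N=\mathrm{N}(a_1X^2+a_2Y^2-b)$ by additive-character orthogonality. Let $\chi_1$ denote the canonical additive character of $\mathbb{F}_q$. Writing the indicator of $a_1x^2+a_2y^2=b$ as $\frac{1}{q}\sum_{t\in\mathbb{F}_q}\chi_1\!\bigl(t(a_1x^2+a_2y^2-b)\bigr)$ and summing over $(x,y)\in\mathbb{F}_q^2$ yields
\[
N=q+\frac{1}{q}\sum_{t\in\mathbb{F}_q^*}\chi_1(-tb)\Bigl(\sum_{x\in\mathbb{F}_q}\chi_1(ta_1x^2)\Bigr)\Bigl(\sum_{y\in\mathbb{F}_q}\chi_1(ta_2y^2)\Bigr),
\]
where the $t=0$ contribution accounts for the leading $q$.

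For each fixed $t\in\mathbb{F}_q^*$ I would apply Proposition~\ref{prop2}(ii) to the polynomial $a_ix^2$ under the twisted additive character $\chi_t(x)=\chi_1(tx)$, obtaining $\sum_{x\in\mathbb{F}_q}\chi_1(ta_ix^2)=\pi(a_i)\,G(\pi,\chi_t)$ (here $a_0=a_1=0$ in the notation of that proposition, so the explicit $\chi$-factor collapses to $1$). By Proposition~\ref{prop1}(i) and the fact that the quadratic character $\pi$ is real-valued, $G(\pi,\chi_t)=\pi(t)\,G(\pi,\chi_1)$. Hence the product of the two inner sums equals $\pi(a_1a_2)\,\pi(t)^2\,G(\pi,\chi_1)^2$. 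Invoking $\pi(t)^2=1$ for $t\neq 0$ together with the classical identity $G(\pi,\chi_1)^2=\pi(-1)q$ (which follows from $G(\pi,\chi_1)\overline{G(\pi,\chi_1)}=q$ and $\overline{G(\pi,\chi_1)}=\pi(-1)G(\pi,\chi_1)$), this simplifies to $\pi(-a_1a_2)\,q$, and substituting back leaves
\[
N=q+\pi(-a_1a_2)\sum_{t\in\mathbb{F}_q^*}\chi_1(-tb).
\]

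A case split on $b$ finishes the argument. When $b=0$ the inner sum equals $q-1$, matching $v(0)=q-1$. When $b\neq 0$ the map $t\mapsto\chi_1(-tb)$ is a nontrivial additive character of $\mathbb{F}_q$, so $\sum_{t\in\mathbb{F}_q}\chi_1(-tb)=0$ and hence $\sum_{t\in\mathbb{F}_q^*}\chi_1(-tb)=-1$, matching $v(b)=-1$. Combining the cases produces the claimed formula $q+v(b)\pi(-a_1a_2)$. The main obstacle I anticipate is purely bookkeeping of signs: correctly tracking the twist $\chi_t$ (rather than $\chi_1$) when applying Proposition~\ref{prop2}(ii), and correctly identifying $G(\pi,\chi_1)^2=\pi(-1)q$ with the right sign. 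Once these two ingredients are pinned down, the rest of the computation is mechanical.
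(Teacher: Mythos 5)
Your argument is correct. Note, however, that the paper offers no proof of this statement at all: it is quoted verbatim from Lidl--Niederreiter (Lemma 6.24) as a known fact, so there is no in-paper argument to compare against. Your derivation via additive-character orthogonality is sound in every step: the $t=0$ term of $\frac{1}{q}\sum_{x,y,t}\chi_1\bigl(t(a_1x^2+a_2y^2-b)\bigr)$ correctly produces the leading $q$; for $t\neq 0$ the application of Proposition~\ref{prop2}(ii) to $f(x)=a_ix^2$ under the twisted character $\chi_t$ gives $\pi(a_i)G(\pi,\chi_t)$, and Proposition~\ref{gauss}(i) together with the reality of $\pi$ yields $G(\pi,\chi_t)=\pi(t)G(\pi,\chi_1)$, so the $\pi(t)^2=1$ cancellation and the identity $G(\pi,\chi_1)^2=\pi(-1)q$ (which you justify correctly from $G\overline{G}=q$ and $\overline{G}=\pi(-1)G$) reduce everything to $q+\pi(-a_1a_2)\sum_{t\neq 0}\chi_1(-tb)$, and the final case split on $b$ recovers $v(b)$ exactly. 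For comparison, the textbook proof in Lidl--Niederreiter takes a slightly more elementary route: it writes $N=\sum_{c_1+c_2=b}N(a_1x^2=c_1)\,N(a_2y^2=c_2)$, uses $N(ax^2=c)=1+\pi(a^{-1}c)$, and evaluates the resulting Jacobi-type sum $\sum_{c_1+c_2=b}\pi(c_1c_2)$ directly, avoiding Gauss sums altogether. The two computations are equivalent; yours is more mechanical and plugs directly into the character-sum machinery (Propositions~\ref{gauss} and~\ref{prop2}) already set up in Section 2, while the textbook version needs no additive characters at all.
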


\section{The covering radius and deep holes of TRS codes $TRS_{k}(\mathcal{A},l,\eta)$}

In this section, we first determine the covering radius of $TRS_{k}(\mathcal{A},l,\eta)$, and then give some classes of deep holes.


\begin{theorem}
	The covering radius $\rho(TRS_{k}(\mathcal{A},l,\eta))$ of $TRS_{k}(\mathcal{A},l,\eta)$ is equal to $n-k$. Moreover, vectors in $RS_{k+1}(\mathcal{A})\setminus TRS_{k}(\mathcal{A},l,\eta)$ are deep holes of $TRS_{k}(\mathcal{A},l,\eta)$.
\end{theorem}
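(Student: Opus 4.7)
The plan is to split the statement into two parts and handle each directly. First I would prove the upper bound $\rho(TRS_{k}(\mathcal{A},l,\eta))\leq n-k$; then I would prove that any $\boldsymbol{u}\in RS_{k+1}(\mathcal{A})\setminus TRS_{k}(\mathcal{A},l,\eta)$ already realizes distance $n-k$ to the code, which simultaneously yields the matching lower bound on the covering radius and identifies the claimed deep holes.

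For the upper bound, I would use the standard information-set argument. Since $TRS_{k}(\mathcal{A},l,\eta)$ is a $k$-dimensional subspace of $\mathbb{F}_q^n$, its generator matrix $G$ has rank $k$, so there is a subset $I\subseteq[n]$ of size $k$ such that the coordinate projection $\pi_I\colon TRS_{k}(\mathcal{A},l,\eta)\to\mathbb{F}_q^I$ is a bijection. Given any $\boldsymbol{u}\in\mathbb{F}_q^n$, there is then a unique codeword $\boldsymbol{v}$ agreeing with $\boldsymbol{u}$ on $I$, giving $d(\boldsymbol{u},\boldsymbol{v})\leq n-|I|=n-k$ and hence $\rho(TRS_{k}(\mathcal{A},l,\eta))\leq n-k$.

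For the main content, observe that by definition $S_{k,l,\eta}\subseteq\mathbb{F}_q[x]_{\leq k}$, whence $TRS_{k}(\mathcal{A},l,\eta)\subseteq RS_{k+1}(\mathcal{A})$. Since $n\geq k+1$, evaluation at the distinct points of $\mathcal{A}$ is injective on $\mathbb{F}_q[x]_{\leq k}$, so each $\boldsymbol{u}\in RS_{k+1}(\mathcal{A})\setminus TRS_{k}(\mathcal{A},l,\eta)$ corresponds to a unique $f\in\mathbb{F}_q[x]_{\leq k}\setminus S_{k,l,\eta}$ with $\boldsymbol{u}=(f(\alpha_1),\ldots,f(\alpha_n))$. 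Suppose toward a contradiction that $d(\boldsymbol{u},TRS_{k}(\mathcal{A},l,\eta))\leq n-k-1$; then there exists $g\in S_{k,l,\eta}$ such that $(f(\alpha_i))$ and $(g(\alpha_i))$ agree on at least $k+1$ coordinates. Hence $f-g\in\mathbb{F}_q[x]_{\leq k}$ has at least $k+1$ distinct roots in $\mathcal{A}$, forcing $f=g\in S_{k,l,\eta}$, contradicting $f\notin S_{k,l,\eta}$. Therefore $d(\boldsymbol{u},TRS_{k}(\mathcal{A},l,\eta))\geq n-k$, and combined with the upper bound we obtain equality.

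There is essentially no technical obstacle here; the argument rests entirely on the codimension-one inclusion $TRS_{k}(\mathcal{A},l,\eta)\subsetneq RS_{k+1}(\mathcal{A})$ together with the fact that a nonzero polynomial of degree at most $k$ has at most $k$ roots. The only mild care needed is to ensure the information-set step is available, which only requires $k\leq n$, guaranteed by the standing hypothesis $k-1\leq n-2$.
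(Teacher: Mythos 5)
Your proof is correct and follows essentially the same route as the paper: the paper gets the upper bound from the redundancy bound and the lower bound from the supercode lemma applied to the inclusion $TRS_{k}(\mathcal{A},l,\eta)\subseteq RS_{k+1}(\mathcal{A})$ (an MDS code of minimum distance $n-k$), which is exactly what your information-set argument and your root-counting argument establish directly. The only difference is that you inline the proofs of these two standard lemmas rather than citing them.
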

\begin{proof}
	Since $\dim_{\mathbb{F}_{q}}(TRS_k(\mathcal{A},l,\eta))= k$, by the redundancy bound \cite[Corollary 11.1.3]{huffman2010fundamentals}, we have $$\rho(TRS_k(\mathcal{A},l,\eta))\leq n-k.$$
	Note that $TRS_k(\mathcal{A},l,\eta)$ is a subcode of the Reed-Solomon code $RS_{k+1}(\mathcal{A})$, then by the supercode lemma \cite[Lemma 11.1.5]{huffman2010fundamentals}, we have 
	$$\rho(TRS_{k}(\mathcal{A},l,\eta))\geq d(RS_{k+1}(\mathcal{A}))=n-k.$$
    So $\rho(TRS_{k}(\mathcal{A},l,\eta))=n-k.$ From the argument above, it is easy to obtain that  vectors in $RS_{k+1}(\mathcal{A})\setminus TRS_{k}(\mathcal{A},l,\eta)$ have error distance $n-k$ from the code $TRS_{k}(\mathcal{A},l,\eta)$ and hence they are deep holes.
\end{proof}

Indeed, the above theorem can be generalized to any $1$-codimensional subcode of an MDS code.
\begin{theorem}
	Suppose $\mathcal{C}_{0}$ is an $[n,k+1,d]$ MDS codes over $\mathbb{F}_{q}$. For any $k$-dimensional subcode $\mathcal{C}\subseteq\mathcal{C}_{0}$, then we have
	
	$(1)$\ The covering radius $\rho(\mathcal{C})$ of linear code $\mathcal{C}$ is $n-k$.
	
	$(2)$ The words in $\mathcal{C}_{0}\backslash\mathcal{C}$ are deep holes of code $\mathcal{C}$.
\end{theorem}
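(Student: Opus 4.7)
The plan is to mirror the proof of the previous theorem, since the only facts used there about Reed--Solomon codes were the MDS property of $RS_{k+1}(\mathcal{A})$ and the fact that $TRS_k(\mathcal{A},l,\eta)$ is a codimension-one subcode of it. Concretely, I would first pin down the covering radius by sandwich: the redundancy bound \cite[Corollary 11.1.3]{huffman2010fundamentals} applied to $\mathcal{C}$ of dimension $k$ yields $\rho(\mathcal{C}) \leq n-k$, while the supercode lemma \cite[Lemma 11.1.5]{huffman2010fundamentals} combined with the MDS hypothesis gives
\[
\rho(\mathcal{C}) \;\geq\; d(\mathcal{C}_0) \;=\; n-(k+1)+1 \;=\; n-k.
\]
This handles part $(1)$.

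For part $(2)$, I would take an arbitrary $\boldsymbol{u}\in \mathcal{C}_0\setminus \mathcal{C}$ and show $d(\boldsymbol{u},\mathcal{C})=n-k$. The upper bound $d(\boldsymbol{u},\mathcal{C})\leq \rho(\mathcal{C})=n-k$ is immediate from $(1)$. For the lower bound, for any $\boldsymbol{v}\in \mathcal{C}$ the difference $\boldsymbol{u}-\boldsymbol{v}$ lies in $\mathcal{C}_0$ and is nonzero (since $\boldsymbol{u}\notin \mathcal{C}$ while $\boldsymbol{v}\in \mathcal{C}$), hence has Hamming weight at least $d(\mathcal{C}_0)=n-k$. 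Taking the minimum over $\boldsymbol{v}\in \mathcal{C}$ yields $d(\boldsymbol{u},\mathcal{C})\geq n-k$, completing the argument that $\boldsymbol{u}$ is a deep hole.

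I do not foresee any real obstacle: the statement is essentially a one-line observation about codimension-one subcodes of MDS codes, and the only subtlety is to remember that the supercode lemma requires $\mathcal{C}\subsetneq \mathcal{C}_0$, which is ensured by the strict dimension drop from $k+1$ to $k$. No additional lemmas or character-sum machinery are needed here; the previous theorem for $TRS_k(\mathcal{A},l,\eta)\subset RS_{k+1}(\mathcal{A})$ falls out as the special case where $\mathcal{C}_0 = RS_{k+1}(\mathcal{A})$.
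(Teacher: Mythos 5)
Your proof is correct, and part $(1)$ matches the paper's argument exactly (redundancy bound for the upper bound, supercode lemma plus the MDS hypothesis for the lower bound). For part $(2)$ the paper takes a slightly more roundabout route --- it adjoins $\boldsymbol{u}$ to a generator matrix of $\mathcal{C}$, notes that the resulting $[n,k+1]$ code is $\mathcal{C}_0$ itself and hence MDS, and concludes $d(\boldsymbol{u},\mathcal{C})=n-k$ --- whereas your direct computation $d(\boldsymbol{u},\mathcal{C})=\min_{\boldsymbol{v}\in\mathcal{C}}wt(\boldsymbol{u}-\boldsymbol{v})\geq d(\mathcal{C}_0)=n-k$ expresses the same underlying fact without the detour.
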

\begin{proof}
	$(1)$\ On the one hand, we have $\rho(\mathcal{C})\leq n-k$ by the redundancy bound \cite[Corollary 11.1.3]{huffman2010fundamentals}. On the other hand, by the supercode lemma \cite[Lemma 11.1.5]{huffman2010fundamentals}, we have $$\rho(\mathcal{C})\geq\min\left\{wt(x):x\in\mathcal{C}_{0}\backslash\mathcal{C}\right\}\geq\min\left\{wt(x):x\in\mathcal{C}_{0}\backslash\{\boldsymbol{0}\}\right\}=n-k.$$ Thus, $\rho(\mathcal{C})=n-k$.
	
	$(2)$\ Let G is a generator matrix of $\mathcal{C}$ and $\boldsymbol{u}\in \mathbb{F}_{q}^n$. Let $G^{\prime}=\begin{pmatrix}
	G\\ \boldsymbol{u} 
	\end{pmatrix}$ and denote $\mathcal{C}_{1}$ as the code generated by $G^{\prime}$. Then $d(\mathcal{C}_{1})=\min\left\{d(\mathcal{C}),d(\boldsymbol{u},\mathcal{C})\right\}$. Because $d(\mathcal{C})\geq n-k=\rho(\mathcal{C})\geq d(\boldsymbol{u},\mathcal{C})$, we have $d(\mathcal{C}_{1})=d(\boldsymbol{u},\mathcal{C})$. Therefore, $\boldsymbol{u}$ is a deep hole of $\mathcal{C}$, if and only if $d(\mathcal{C}_{1})=d(\boldsymbol{u},\mathcal{C})=n-k$, if and only if $\mathcal{C}_{1}$ is an $[n,k+1,n-k]$-MDS code over $\mathbb{F}_{q}$. Thus, the words in $\mathcal{C}_{0}\backslash\mathcal{C}$ are deep holes of code $\mathcal{C}$.
\end{proof}
The following proposition describes the geometry of deep holes.

\begin{proposition}\label{Prop:3.2}\cite{fang2024deep}
	Let $\mathcal{C}$ be an $[n,k]$-linear code with a parity-check matrix H and covering radius $\rho(\mathcal{C})$. For $\boldsymbol{u}\in \mathbb{F}_{q}^n$, $\boldsymbol{u}$ is a deep hole of $\mathcal{C}$ if and only if $H\cdot \boldsymbol{u}^T$ can not be expressed as a linear combination of any $\rho(\mathcal{C})-1$ columns of H over $\mathbb{F}_{q}$.
\end{proposition}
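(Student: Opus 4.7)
The plan is to reformulate the error distance $d(\boldsymbol{u}, \mathcal{C})$ as a syndrome minimum-weight statement and then read off the equivalence. First I would use the parity-check characterization $\mathcal{C} = \{\boldsymbol{c} \in \mathbb{F}_q^n : H\boldsymbol{c}^T = 0\}$ to rewrite
\[
d(\boldsymbol{u}, \mathcal{C}) = \min_{\boldsymbol{c}\in\mathcal{C}} wt(\boldsymbol{u}-\boldsymbol{c}) = \min\bigl\{wt(\boldsymbol{e}) : \boldsymbol{e}\in \mathbb{F}_q^n,\ H\boldsymbol{e}^T = H\boldsymbol{u}^T\bigr\},
\]
which is the standard fact that the error distance depends only on the syndrome.

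Next I would observe that for any $\boldsymbol{e}\in\mathbb{F}_q^n$, the product $H\boldsymbol{e}^T$ is precisely the linear combination of the columns of $H$ indexed by $\mathrm{supp}(\boldsymbol{e})$, with coefficients given by the corresponding nonzero entries of $\boldsymbol{e}$. Consequently, for any integer $t\ge 0$, the existence of $\boldsymbol{e}$ with $wt(\boldsymbol{e})\le t$ and $H\boldsymbol{e}^T=H\boldsymbol{u}^T$ is equivalent to $H\boldsymbol{u}^T$ being expressible as a linear combination of at most $t$ columns of $H$.

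Combining the two observations, $d(\boldsymbol{u},\mathcal{C})$ equals the smallest $t$ for which $H\boldsymbol{u}^T$ is an $\mathbb{F}_q$-linear combination of some $t$ columns of $H$. Now apply this with $t=\rho(\mathcal{C})-1$. In one direction, if $H\boldsymbol{u}^T$ is a combination of at most $\rho(\mathcal{C})-1$ columns, then $d(\boldsymbol{u},\mathcal{C})\le \rho(\mathcal{C})-1<\rho(\mathcal{C})$, so $\boldsymbol{u}$ is not a deep hole. In the other direction, if $\boldsymbol{u}$ is not a deep hole, then $d(\boldsymbol{u},\mathcal{C})\le \rho(\mathcal{C})-1$, and choosing a minimum-weight $\boldsymbol{e}$ with $H\boldsymbol{e}^T=H\boldsymbol{u}^T$ exhibits $H\boldsymbol{u}^T$ as a combination of at most $\rho(\mathcal{C})-1$ columns. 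Negating both sides yields the stated equivalence.

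There is no genuine obstacle here; the statement is essentially the dictionary between coset leaders and syndrome expressions. The only point requiring care is keeping track of the quantifier ``any $\rho(\mathcal{C})-1$ columns,'' which should be read as ``there exists a choice of at most $\rho(\mathcal{C})-1$ columns,'' since the minimality of the weight lets us pick both the support and the coefficients freely.
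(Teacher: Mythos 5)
Your argument is correct and complete: the identification of $d(\boldsymbol{u},\mathcal{C})$ with the minimum weight in the syndrome coset, followed by reading $H\boldsymbol{e}^T$ as a combination of the columns indexed by $\mathrm{supp}(\boldsymbol{e})$, is exactly the standard coset-leader argument, and your handling of the quantifier (``a combination of some at most $\rho(\mathcal{C})-1$ columns'') is the right reading. Note that the paper itself states this proposition as a citation to an earlier work and gives no proof, so there is nothing to compare against; your write-up supplies the missing (routine) justification correctly.
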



Next, we determine the deep holes of $TRS_{k}(\mathcal{A},l,\eta)$.
The following lemma plays an important  role in determining deep holes of the TRS code $TRS_{k}(\mathcal{A},l,\eta)$.

\begin{lemma}\cite[Lemma 2.3]{yan2024mutually}\label{Lem:3.5}
	Let $m$ be a fixed positive integer and $$I_{m}=\left\{0,1,\cdots,m-1\right\}=\left\{t_{1},\cdots,t_{s}\right\}\bigcup\left\{r_{1},r_{2},\cdots,r_{s^{'}}\right\}$$ be any partition of $I_{m}$ with $m=s+s^{\prime},0=t_{1}<t_2<\cdots<t_{s}=m-1$ and $r_{1}<r_{2}<\cdots<r_{s^{'}}$.
	For any $\mathcal{S}=\left\{a_{1},a_{2},\cdots,a_{s}\right\}\subseteq \mathbb{F}_{q}$, denote by $S_{i}(\mathcal{S})=\sum\limits_{1\leq j_{1}<\cdots<j_{i}\leq s}\prod\limits_{t=1}^{i}a_{j_{t}}$. Then we have the following determinant formula
	\begin{equation*}
	\det \begin{pmatrix}
	a_{1}^{t_{1}}&a_{2}^{t_{1}}&\cdots&a_{s}^{t_{1}}\\
	a_{1}^{t_{2}}&a_{2}^{t_{2}}&\cdots&a_{s}^{t_{2}}\\
	\vdots&\vdots&\vdots&\vdots\\
	a_{1}^{t_{s}}&a_{2}^{t_{s}}&\cdots&a_{s}^{t_{s}}\\
	\end{pmatrix}=\prod\limits_{1\leq i<j\leq s}(a_{j}-a_{i})\cdot\vartriangle
	\end{equation*}
	where $\vartriangle$ denotes the determinant of the following matrix
	$$\begin{pmatrix}
	S_{s-r_{1}}(\mathcal{S})&S_{s-r_{2}}(\mathcal{S})&\cdots&S_{s-r_{s^{\prime}}}(\mathcal{S})\\
	S_{s-r_{1}+1}(\mathcal{S})&S_{s-r_{2}+1}(\mathcal{S})&\cdots&S_{s-r_{s^{\prime}}+1}(\mathcal{S})\\
	\vdots&\vdots&\vdots&\vdots\\
	S_{s-r_{1}+s^{\prime}-1}(\mathcal{S})&S_{s-r_{2}+s^{\prime}-1}(\mathcal{S})&\cdots&S_{s-r_{s^{\prime}}+s^{\prime}-1}(\mathcal{S})
	\end{pmatrix}.$$

\end{lemma}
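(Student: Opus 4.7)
The plan is to interpret the left-hand side as a bialternant and then invoke the dual Jacobi--Trudi formula to rewrite the Schur polynomial that appears. First I observe that $\det(a_j^{t_i})$ is an alternating polynomial in $a_1,\ldots,a_s$, so it is divisible by the Vandermonde $\prod_{i<j}(a_j-a_i)$. Setting $\lambda_i=t_{s-i+1}-(s-i)$ for $i=1,\ldots,s$, the hypothesis $0=t_1<t_2<\cdots<t_s=m-1$ translates to $\lambda_1=s'$, $\lambda_s=0$, and $\lambda_1\ge\lambda_2\ge\cdots\ge\lambda_s\ge 0$; the classical bialternant identity then yields
\begin{equation*}
\det\bigl(a_j^{t_i}\bigr)_{1\le i,j\le s}=\Bigl(\prod_{1\le i<j\le s}(a_j-a_i)\Bigr)\cdot s_{\lambda}(a_1,\ldots,a_s),
\end{equation*}
where the two $(-1)^{\binom{s}{2}}$ signs produced by reversing row orders on each side of the bialternant cancel against one another.

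Next I apply the dual Jacobi--Trudi identity $s_\lambda=\det(e_{\lambda'_i-i+j})_{1\le i,j\le \lambda_1}$, with $\lambda'$ the conjugate partition and $e_k=S_k(\mathcal{S})$ the $k$-th elementary symmetric polynomial. Since $\lambda_1=s'$, this is an $s'\times s'$ determinant in the $S_k$. To match the matrix in the statement I need to check that $\lambda'_j=s-r_j+j-1$. A short counting argument handles this: among $\{0,1,\ldots,r_j\}$ the elements $r_1,\ldots,r_j$ account for $j$ slots, so exactly $r_j+1-j$ of the $t_k$'s lie in $\{0,\ldots,r_j\}$, giving $|\{k:t_k>r_j\}|=s-r_j+j-1$; unwinding the definition of the conjugate partition shows that this quantity equals $\lambda'_j$. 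Substituting yields Jacobi--Trudi entries $S_{s-r_i+j-1}$ at position $(i,j)$, which is the transpose of the matrix in the statement, and the two determinants agree since a matrix and its transpose have the same determinant.

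The step I expect to be most delicate is the index bookkeeping between the sequences $\{t_i\}$, $\{r_j\}$ and the partition $\lambda$ together with its conjugate $\lambda'$. The bialternant formula is usually stated with strictly decreasing exponents, while the lemma lists $t_i$ in increasing order, and tracking this reversal on both sides is a classical source of sign errors. Once the identifications $\lambda_i=t_{s-i+1}-(s-i)$ and $\lambda'_j=s-r_j+j-1$ are carefully verified, the rest of the argument reduces to a direct citation of the bialternant and dual Jacobi--Trudi formulas. An alternative approach that avoids symmetric-function machinery is to induct on $s'$, expanding along the row with the largest $t_i$ and using the recursion $a^{n}=S_1a^{n-1}-S_2a^{n-2}+\cdots+(-1)^{s+1}S_sa^{n-s}$ obtained from $\prod_i(x-a_i)=0$ at $x=a_j$; this route is less conceptual but requires only Vandermonde and standard row operations.
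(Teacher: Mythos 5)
The paper does not prove this lemma at all: it is imported verbatim as \cite[Lemma 2.3]{yan2024mutually}, so there is no in-paper argument to compare yours against. Judged on its own, your proof is correct. The identification $\lambda_i=t_{s-i+1}-(s-i)$ does give a partition with $\lambda_1=s'$ and $\lambda_s=0$, the two factors of $(-1)^{\binom{s}{2}}$ from reversing the row order in the numerator and denominator of the bialternant do cancel, and your counting argument for the conjugate is right: $\lambda'_j=\#\{i:\lambda_i\ge j\}=\#\{k: t_k-(k-1)\ge j\}$, and since $t_k-(k-1)$ is exactly the number of $r$'s below $t_k$, this set is $\{k:t_k>r_j\}$, of size $s-(r_j+1-j)=s-r_j+j-1$. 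Feeding $\lambda'_i=s-r_i+i-1$ into the dual Jacobi--Trudi determinant $\det(e_{\lambda'_i-i+j})_{1\le i,j\le s'}$ produces entries $S_{s-r_i+j-1}(\mathcal{S})$, the transpose of the matrix $\vartriangle$ in the statement, so the determinants agree; and both classical identities are polynomial identities over $\mathbb{Z}$, hence valid over $\mathbb{F}_q$. The only step you flag as delicate --- the index bookkeeping --- is indeed the only place an error could hide, and you have it right. Your proposed alternative (induction via the recursion $a^{n}=S_1a^{n-1}-S_2a^{n-2}+\cdots$ and row operations on the Vandermonde-type matrix) is the more elementary, self-contained route one would expect in the cited source; the symmetric-function route is shorter but outsources the work to the bialternant and dual Jacobi--Trudi theorems.
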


\begin{theorem}\label{The:8}
	Notations as in Theorem~\ref{thm:3} and Lemma~\ref{Lem:3.5}. For any $\boldsymbol{a}=(a_{0},a_{1},\cdots,a_{n-k-1})\in \mathbb{F}_{q}^{n-k}$,  $\boldsymbol{a}^T$ can not be expressed as a $\mathbb{F}_{q}$-linear combination of any $n-k-1$ columns of $H$, if and only if
	for each $1\leq i_{1}<i_{2}<\cdots<i_{n-k-1}\leq n$, let
	$\sum\limits_{j=0}^{n-k-1}c_{j}x^{n-k-1-j}=\prod\limits_{j=1}^{n-k-1}(x-\alpha_{i_{j}})$ and $\Lambda_{0}^{'}=1,\Lambda_{i}^{'}=-\sum\limits_{j=1}^{i}c_{j}\Lambda_{i-j}^{'},i=1,2,\cdots,n-k-1$, 
	it all holds
	$$\sum\limits_{r=0}^{n-k-2}a_{r}c_{n-k-1-r}-\eta\sum\limits_{r=0}^{n-k-2}\sum\limits_{t=0}^{k-l}a_{r}\sigma_{k-l-t}\sum\limits_{\max\{0,r-t\}\leq w\leq r}c_{n-k-1-w}\Lambda_{t+w-r}^{\prime}\neq -a_{n-k-1}.$$
    In particular, for any $a\in \mathbb{F}_{q}^{*}$, the vector $(0,\cdots,0,a)^{T}$ can not be expressed as a $\mathbb{F}_{q}$-linear combination of any $n-k-1$ columns of $H$.
\end{theorem}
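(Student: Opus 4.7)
The plan is to apply Proposition~\ref{Prop:3.2} and then translate the non-representability condition into an explicit inequality via Gaussian elimination. Fix any index set $\{i_1<\cdots<i_{n-k-1}\}\subseteq [n]$ and absorb the scalars $u_{i_j}$ into the unknowns by setting $\mu_j=u_{i_j}\lambda_j$. The first $n-k-1$ rows of the chosen submatrix of $H$ then form a Vandermonde system $\sum_j\mu_j\alpha_{i_j}^r=a_r$ ($0\le r\le n-k-2$) in the distinct nodes $\alpha_{i_1},\ldots,\alpha_{i_{n-k-1}}$, so they determine $(\mu_1,\ldots,\mu_{n-k-1})$ uniquely. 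The full system is therefore inconsistent if and only if this unique $\boldsymbol{\mu}$ fails the remaining equation $\sum_j\mu_j f(\alpha_{i_j})=a_{n-k-1}$, which is the key reformulation to work with.

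Next I would evaluate the left-hand side of that last equation. Rewriting $f(x)=x^{n-k-1}-\eta\sum_{t=0}^{k-l}\sigma_{k-l-t}x^{n-k-1+t}$, it suffices to compute each power sum $S_t:=\sum_{j}\mu_j\alpha_{i_j}^{n-k-1+t}$ for $0\le t\le k-l$. Since $P(x):=\prod_{j=1}^{n-k-1}(x-\alpha_{i_j})=\sum_{j=0}^{n-k-1}c_j x^{n-k-1-j}$ vanishes at every chosen node, reducing $x^{n-k-1+t}$ modulo $P(x)$ to $\sum_{w=0}^{n-k-2}h_w^{(t)}x^w$ gives $S_t=\sum_{w=0}^{n-k-2}h_w^{(t)}a_w$. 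The base case $h_w^{(0)}=-c_{n-k-1-w}$ is immediate. For $t\ge 1$, multiplying the expansion for $t-1$ by $x$ and applying Lemma~\ref{lemcont:4.1} to the monic polynomial $P$ --- which identifies the leading coefficient $h_{n-k-2}^{(t)}$ with $\Lambda'_{t+1}$ --- yields the two-term recursion $h_w^{(t)}=h_{w-1}^{(t-1)}-c_{n-k-1-w}\Lambda'_t$, with $h_{-1}^{(t)}=0$. Iterating this down to the appropriate boundary term and re-indexing via $w':=w-s$ produces the closed form
\begin{equation*}
h_w^{(t)}=-\sum_{w'=\max\{0,\,w-t\}}^{w}c_{n-k-1-w'}\Lambda'_{t+w'-w}.
\end{equation*}

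Substituting this into $\sum_j\mu_j f(\alpha_{i_j})=S_0-\eta\sum_{t=0}^{k-l}\sigma_{k-l-t}S_t$ and multiplying the resulting equation through by $-1$ transforms the inconsistency condition into precisely the stated inequality, the ``$-a_{n-k-1}$'' on the right coming from this sign flip. The main obstacle is the bookkeeping in deriving the closed form for $h_w^{(t)}$: the iteration splits into the two regimes $w\ge t$ and $w<t$ depending on which side of the boundary the recursion first hits, and the inner summation limits and the shift in the index of $\Lambda'$ have to be aligned carefully --- the identity $\Lambda'_0=1$ is exactly what allows both regimes to be written uniformly. Finally, for the ``in particular'' claim, taking $\boldsymbol{a}=(0,\ldots,0,a)$ with $a\in\mathbb{F}_q^*$ kills every $a_r$ with $r\le n-k-2$, so both double sums on the left collapse to zero and the inequality reduces to $0\neq -a$, which holds trivially for any nonzero $a$.
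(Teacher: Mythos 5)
Your argument is correct, and the reduction to ``the unique solution of the Vandermonde subsystem must violate the last parity-check row'' is the same skeleton the paper uses. Where you genuinely diverge is in how the quantity $\sum_j\mu_j f(\alpha_{i_j})$ is evaluated. The paper explicitly inverts the Vandermonde matrix $A$ via the determinant formula of Lemma~\ref{Lem:3.5}, writes the solution in terms of elementary symmetric polynomials $S_{n-k-2-j}(\mathcal{S}_t)$ of the punctured sets, and then must expand $S_{n-k-2-i}(\mathcal{S}_j)$ as an alternating sum of $S_{\bullet}(\mathcal{S})/\alpha_{i_j}^t$ --- a step that divides by the nodes and therefore forces a separate Case 2 when some $\alpha_{i_j}=0$. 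You instead exploit the duality built into the first $n-k-1$ syndrome equations: they assert $\sum_j\mu_j\alpha_{i_j}^w=a_w$ for $0\le w\le n-k-2$, so $\sum_j\mu_j g(\alpha_{i_j})$ is known for every $g$ of degree at most $n-k-2$, and the higher moments $S_t$ are obtained by reducing $x^{n-k-1+t}$ modulo $P(x)=\prod_j(x-\alpha_{i_j})$; Lemma~\ref{lemcont:4.1} (applied to $P$ in place of the full product) pins the leading reduction coefficient to $\Lambda'_{t+1}$ and drives the two-term recursion, whose closed form I checked matches the double sum in the statement after the re-indexing $w'=w-s$ and the final sign flip. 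This route never inverts $A$ and never divides by a node, so the paper's case split on $0\in\{\alpha_{i_1},\dots,\alpha_{i_{n-k-1}}\}$ disappears; the price is the bookkeeping in the two regimes $w\ge t$ and $w<t$ of the recursion, which you flag and which does close up correctly (the boundary uses $h^{(0)}_w=-c_{n-k-1-w}$ on one side and $h^{(t)}_{-1}=0$ on the other, with $\Lambda'_0=1$ unifying the formulas). The ``in particular'' clause is handled identically in both arguments.
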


\begin{proof}
    Denote by $\boldsymbol{h}_{i}$ the $i$-th column of $H$. The column vector $(a_{0},a_{1},\cdots,a_{n-k-1})^{T}$ can be linearly expressed by $\boldsymbol{h}_{i_{1}},\boldsymbol{h}_{i_{2}},\cdots,\boldsymbol{h}_{i_{n-k-1}}$ for some $1\leq i_{1}<i_{2}<\cdots<i_{n-k-1}\leq n$, if and only if the following system of equations has solutions:
	\begin{equation*}
	\begin{pmatrix}
	u_{i_{1}}&u_{i_{2}}&\cdots&u_{i_{n-k-1}}\\
	u_{i_{1}}\alpha_{i_{1}}&u_{i_{2}}\alpha_{i_{2}}&\cdots&u_{i_{n-k-1}}\alpha_{i_{n-k-1}}\\
	\vdots&\vdots&\vdots&\vdots\\
	u_{i_{1}}\alpha_{i_{1}}^{n-k-2}&u_{i_{2}}\alpha_{i_{2}}^{n-k-2}&\cdots&u_{i_{n-k-1}}\alpha_{i_{n-k-1}}^{n-k-2}\\
	u_{i_{1}}f(\alpha_{i_{1}})&u_{i_{2}}f(\alpha_{i_{2}})&\cdots&u_{i_{n-k-1}}f(\alpha_{i_{n-k-1}})
	\end{pmatrix}\cdot\begin{pmatrix}
	x_{1}\\ x_{2}\\ \vdots\\x_{n-k-1}
	\end{pmatrix}=\begin{pmatrix}
	a_0\\a_1\\ \vdots\\ a_{n-k-1}
	\end{pmatrix},
	\end{equation*}
    where $f(x)=x^{n-k-1}\left(1-\eta\sum\limits_{j=0}^{k-l}\sigma_{j}x^{k-l-j}\right)$. 
    Let $\mathcal{S}=\{\alpha_{i_{1}},\alpha_{i_{2}},\cdots,\alpha_{i_{n-k-1}}\}$, $\mathcal{S}_t=\mathcal{S}\setminus\{\alpha_{i_{t}}\}\, (1\leq t\leq n-k-1)$ and 
    \[
    V(\mathcal{S})=\prod\limits_{1\leq j<l\leq n-k-1}(\alpha_{i_{l}}-\alpha_{i_{j}}),\,\,  V(\mathcal{S}_t)=\prod\limits_{1\leq j<l\leq n-k-1\atop j,l\neq t}(\alpha_{i_{l}}-\alpha_{i_{j}})\,\mbox{ and }\, u'_{t}=(-1)^{n-k-1-t}\frac{V(\mathcal{S}_t)}{V(\mathcal{S})}.
    \]
     Denote by
    $$A=\begin{pmatrix}
	1&1&\cdots&1\\
	\alpha_{i_{1}}&\alpha_{i_{2}}&\cdots&\alpha_{i_{n-k-1}}\\
	\vdots&\vdots&\vdots&\vdots\\
	\alpha_{i_{1}}^{n-k-2}&\alpha_{i_{2}}^{n-k-2}&\cdots&\alpha_{i_{n-k-1}}^{n-k-2}
	\end{pmatrix}$$ and $A_{s,t}$ the algebraic complement obtained by removing the $s$-th row and $t$-th column of matrix $A$, respectively. Then by Lemma~\ref{Lem:3.5}, we have 
    \[
      A_{s,t}=(-1)^{s+t}{V(\mathcal{S}_t)}\cdot S_{n-k-1-s}(\mathcal{S}_t).
    \]

	\begin{equation*}
	\begin{aligned}
	&\begin{pmatrix}
	u_{i_{1}}x_{1}\\u_{i_{2}}x_{2}\\ \vdots\\ u_{i_{n-k-1}}x_{n-k-1}
	\end{pmatrix}=A^{-1}\begin{pmatrix}
	a_{0}\\a_{1}\\ \vdots\\a_{n-k-2}
	\end{pmatrix}=\left|A\right|^{-1}\begin{pmatrix}
	A_{1,1}&A_{2,1}&\cdots&A_{n-k-1,1}\\
	A_{1,2}&A_{2,2}&\cdots&A_{n-k-1,2}\\
	\vdots&\vdots&\vdots&\vdots\\
	A_{1,n-k-1}&A_{2,n-k-1}&\cdots&A_{n-k-1,n-k-1}
	\end{pmatrix}\cdot \begin{pmatrix}
	a_{0}\\a_{1}\\ \vdots\\ a_{n-k-2}
	\end{pmatrix}\\
	&=\small{ \begin{pmatrix}
		\cdots&
		(-1)^{j+1}\frac{V(\mathcal{S}_1)}{V(\mathcal{S})}\cdot S_{n-k-1-j}(\mathcal{S}_1)&\cdots\\
		\cdots&
		(-1)^{j+2}\frac{V(\mathcal{S}_2)}{V(\mathcal{S})}\cdot S_{n-k-1-j}(\mathcal{S}_2)&\cdots\\
		\vdots&\vdots&\vdots\\
		\cdots&
		(-1)^{j+n-k-1}\frac{V(\mathcal{S}_{n-k-1})}{V(\mathcal{S})}\cdot S_{n-k-1-j}(\mathcal{S}_{n-k-1})&\cdots\\	
		\end{pmatrix}_{1\leq j\leq n-k-1}}\cdot \begin{pmatrix}
	a_{0}\\a_{1}\\\vdots\\a_{n-k-2}
	\end{pmatrix}\\
	&=\begin{pmatrix}
	\cdots&(-1)^{n-k-1+j}\cdot u_{{1}}^{\prime}\cdot S_{n-k-1-j}(\mathcal{S}_1)&\cdots\\
	\cdots&(-1)^{n-k-1+j}\cdot u_{{2}}^{\prime}\cdot S_{n-k-1-j}(\mathcal{S}_2)&\cdots\\
	\vdots&\vdots&\vdots\\
	\cdots&(-1)^{n-k-1+j}\cdot u_{{n-k-1}}^{\prime}\cdot S_{n-k-1-j}(\mathcal{S}_{n-k-1})&\cdots\\
	\end{pmatrix}_{1\leq j\leq n-k-1}\cdot \begin{pmatrix}
	a_{0}\\a_{1}\\ \vdots\\a_{n-k-2}
	\end{pmatrix}\\
	&=\begin{pmatrix}
	u_{{1}}^{\prime}\sum\limits_{j=0}^{n-k-2}(-1)^{n-k+j}a_{j}S_{n-k-2-j}(\mathcal{S}_1)\\
	u_{{2}}^{\prime}\sum\limits_{j=0}^{n-k-2}(-1)^{n-k+j}a_{j}S_{n-k-2-j}(\mathcal{S}_2)\\
	\vdots\\
	u_{{n-k-1}}^{\prime}\sum\limits_{j=0}^{n-k-2}(-1)^{n-k+j}a_{j}S_{n-k-2-j}(\mathcal{S}_{n-k-1})
	\end{pmatrix}.
	\end{aligned}
	\end{equation*}
   Next, we will plug the above solution into the last parity-check condition associated to the polynomial $f$.
	
	For any positive integer $t$ let $w_{t}^{\prime}=\sum\limits_{j=1}^{n-k-1}u_{{j}}^{\prime}\alpha_{i_{j}}^t$, then 
	\begin{equation}\label{equation:3.1}
	    w_{t}^{\prime}=
    \begin{cases}
       0&\mbox{if}\ 1\leq t\leq n-k-3\\
	1&\mbox{if}\ t=n-k-2.
    \end{cases}    
	\end{equation}
    Let 
    \begin{equation*}
        \Lambda_{0}^{'}=1\ \mbox{and}\ \Lambda_{i}^{'}=-\sum\limits_{j=1}^{i}c_{j}\Lambda_{i-j}^{'},\ \forall 1\leq i\leq n-k-1.
    \end{equation*}
     For any fixed $0\leq t\leq n-k-2$, by the Lagrange interpolation there exist $f_{t,0},f_{t,1},\cdots, f_{t,n-k-2}$ such that
    \[
    \alpha_{i_{j}}^{n-k-2+t}=\sum\limits_{r=0}^{n-k-2}f_{t,r}\alpha_{i_{j}}^r,\,\forall 1\leq j\leq n-k-1.
    \]
     By Lemma~\ref{lemcont:4.1}, we have $f_{t,n-k-2}=\Lambda_{t}^{\prime}$. So \begin{equation}\label{equation:3.2}
	w_{n-k-2+t}^{\prime}=\sum\limits_{j=1}^{n-k-1}u_{{j}}^{\prime}\alpha_{i_{j}}^{n-k-2+t}=\sum\limits_{r=0}^{n-k-2}f_{t,r}\sum\limits_{j=1}^{n-k-1}u_{{j}}^{\prime}\alpha_{i_{j}}^r=f_{t,n-k-2}w_{n-k-2}^{\prime}=f_{t,n-k-2}=\Lambda_{t}^{\prime}.
	\end{equation}
	In addition, we have 
    \begin{equation}\label{equation:3.3}
        \sum\limits_{j=0}^{r}c_{r-j}\Lambda_{j}^{\prime}=\left\{
	\begin{array}{ll}
	1,&if\ r=0\\
	0,&if\ 1\leq r\leq n-k-1
	\end{array}\right..
    \end{equation}
    We divide our discussion into two cases:
\begin{flushleft}
    \textbf{Case 1:}  $\alpha_{i_{j}}\neq 0$ for all $1\leq j\leq n-k-1$. 
\end{flushleft}

	On the one hand, for all $0\leq i\leq n-k-2$ and $1\leq j\leq n-k-1$, we have 
	\begin{equation*}
	\begin{aligned}
	S_{n-k-1-i}(\mathcal{S})&=S_{n-k-2-i}(\mathcal{S}_{j})\alpha_{i_{j}}+S_{n-k-1-i}(\mathcal{S}_{j}).
	\end{aligned}
	\end{equation*}
Thus
	\begin{equation}\label{equation:3.4}
	S_{n-k-2-i}(\mathcal{S}_{j})=\sum\limits_{t=1}^{i+1}(-1)^{t-1}\frac{S_{n-k-2-i+t}(\mathcal{S})}{\alpha_{i_{j}}^t}.
	\end{equation}
    
    On the other hand, for all $0\leq j\leq n-k-1$, we have 
    \begin{equation}\label{equation:3.5}
    c_{j}=(-1)^j\sum\limits_{I\subseteq\{1,\cdots,n-k-1\}\atop \left|I\right|=j}\prod\limits_{s\in I}\alpha_{i_{s}}=(-1)^jS_{j}(\mathcal{S}).
    \end{equation}
    
	 Therefore,
	\begin{equation*}
	\begin{aligned}
	&\sum\limits_{j=1}^{n-k-1}u_{i_{j}}f(\alpha_{i_{j}})x_{j}=\sum\limits_{j=1}^{n-k-1}f(\alpha_{i_{j}})u_{{j}}^{\prime}\sum\limits_{r=0}^{n-k-2}(-1)^{n-k+r}a_{r}S_{n-k-2-r}(\mathcal{S}_{j})\\
	&=\sum\limits_{j=1}^{n-k-1}u_{{j}}^{\prime}\alpha_{i_{j}}^{n-k-1}\left(1-\eta\sum\limits_{t=0}^{k-l}\sigma_{k-l-t}\alpha_{i_{j}}^t\right)\sum\limits_{r=0}^{n-k-2}(-1)^{n-k+r}a_{r}S_{n-k-2-r}(\mathcal{S}_{j})\\
	&\stackrel{(i)}{=}\sum\limits_{j=1}^{n-k-1}u_{{j}}^{\prime}\alpha_{i_{j}}^{n-k-1}\left(1-\eta\sum\limits_{t=0}^{k-l}\sigma_{k-l-t}\alpha_{i_{j}}^t\right)\sum\limits_{r=0}^{n-k-2}\sum\limits_{w=1}^{r+1}(-1)^{n-k+r+w+1}a_{r}\frac{S_{n-k-2-r+w}(\mathcal{S})}{\alpha_{i_{j}}^w}\\
	&\stackrel{(ii)}{=}-\sum\limits_{r=0}^{n-k-2}\sum\limits_{w=1}^{r+1}a_{r}c_{n-k-2-r+w}\sum\limits_{j=1}^{n-k-1}u_{{j}}^{\prime}\alpha_{i_{j}}^{n-k-1-w}\\
    &~~~~~~~~~+\eta\sum\limits_{r=0}^{n-k-2}\sum\limits_{t=0}^{k-l}a_{r}\sigma_{k-l-t}\sum\limits_{w=1}^{r+1}c_{n-k-2-r+w}\sum\limits_{j=1}^{n-k-1}u_{{j}}^{\prime}\alpha_{i_{j}}^{n-k-1+t-w}\\
	&\stackrel{(iii)}{=}-\sum\limits_{r=0}^{n-k-2}a_{r}c_{n-k-1-r}+\eta\sum\limits_{r=0}^{n-k-2}\sum\limits_{t=0}^{k-l}a_{r}\sigma_{k-l-t}\sum\limits_{1\leq w\leq r+1\atop w\leq t+1}c_{n-k-2-r+w}\Lambda_{1+t-w}^{\prime}\\
	&=-\sum\limits_{r=0}^{n-k-2}a_{r}c_{n-k-1-r}+\eta\sum\limits_{r=0}^{n-k-2}\sum\limits_{t=0}^{k-l}a_{r}\sigma_{k-l-t}\sum\limits_{\max\{0,r-t\}\leq w\leq r}c_{n-k-1-w}\Lambda^{\prime}_{t-r+w},
	\end{aligned}
	\end{equation*}
where $(i)$ follows from Equation~(\ref{equation:3.4}), $(ii)$ follows from Equation~(\ref{equation:3.5}) and 
 $(iii)$ follows from Equations~(\ref{equation:3.1}),~(\ref{equation:3.2}) and~(\ref{equation:3.3}).
\begin{flushleft}
    \textbf{Case 2}: There exists $j\in\left\{1,\cdots,n-k-1\right\}$ such that $\alpha_{i_{j}}=0$.
\end{flushleft}

	For simplicity, let $\alpha_{i_{1}}=0$. Then we have $f(\alpha_{i_{1}})=0$ and
    \begin{equation*}
	S_{n-k-2-i}(\mathcal{S}_{j})=\sum\limits_{t=1}^{i+1}(-1)^{t-1}\frac{S_{n-k-2-i+t}(\mathcal{S})}{\alpha_{i_{j}}^t}
	\end{equation*}
	for all $0\leq i\leq n-k-2$ and $2\leq j\leq n-k-1$. Therefore,
	
	\begin{equation*}
	\begin{aligned}
	&\sum\limits_{j=1}^{n-k-1}u_{i_{j}}f(\alpha_{i_{j}})x_{j}=\sum\limits_{j=1}^{n-k-1}f(\alpha_{i_{j}})u_{{j}}^{\prime}\sum\limits_{r=0}^{n-k-2}(-1)^{n-k+r}a_{r}S_{n-k-2-r}(\mathcal{S}_{j})\\
	&=\sum\limits_{j=2}^{n-k-1}u_{{j}}^{\prime}\alpha_{i_{j}}^{n-k-1}\left(1-\eta\sum\limits_{t=0}^{k-l}\sigma_{k-l-t}\alpha_{i_{j}}^t\right)\sum\limits_{r=0}^{n-k-2}\sum\limits_{w=1}^{r+1}(-1)^{n-k+r+w+1}a_{r}\frac{S_{n-k-2-r+w}(\mathcal{S})}{\alpha_{i_{j}}^w}\\
	&=-\sum\limits_{r=0}^{n-k-2}\sum\limits_{w=1}^{r+1}a_{r}c_{n-k-2-r+w}\sum\limits_{j=2}^{n-k-1}u_{{j}}^{\prime}\alpha_{i_{j}}^{n-k-1-w}\\
    &~~~~~~~~+\eta\sum\limits_{r=0}^{n-k-2}\sum\limits_{t=0}^{k-l}a_{r}\sigma_{k-l-t}\sum\limits_{w=1}^{r+1}c_{n-k-2-r+w}\sum\limits_{j=2}^{n-k-1}u_{{j}}^{\prime}\alpha_{i_{j}}^{n-k-1+t-w}\\
	&=-\sum\limits_{r=0}^{n-k-2}a_{r}c_{n-k-1-r}+\eta\sum\limits_{r=0}^{n-k-2}\sum\limits_{t=0}^{k-l}a_{r}\sigma_{k-l-t}\sum\limits_{\max\{0,r-t\}\leq w\leq r}c_{n-k-1-w}\Lambda^{\prime}_{t-r+w}.
	\end{aligned}
	\end{equation*}

	In summary, $(a_{0},a_{1},\cdots,a_{n-k-1})^{T}$ can not be expressed as a linear combination of any $n-k-1$ columns of $H$ over $\mathbb{F}_{q}$, if and only if
	for each $1\leq i_{1}<i_{2}<\cdots<i_{n-k-1}\leq n$, let
	$\sum\limits_{j=0}^{n-k-1}c_{j}x^{n-k-1-j}=\prod\limits_{j=1}^{n-k-1}(x-\alpha_{i_{j}})$ and $\Lambda_{0}^{'}=1,\Lambda_{i}^{'}=-\sum\limits_{j=1}^{i}c_{j}\Lambda_{i-j}^{'},i=1,2,\cdots,n-k-1$, 
	it all holds
	$$\sum\limits_{r=0}^{n-k-2}a_{r}c_{n-k-1-r}-\eta\sum\limits_{r=0}^{n-k-2}\sum\limits_{t=0}^{k-l}a_{r}\sigma_{k-l-t}\sum\limits_{\max\{0,r-t\}\leq w\leq r}c_{n-k-1-w}\Lambda_{t+w-r}^{\prime}\neq -a_{n-k-1}.$$

    In particular, if $\boldsymbol{a}=(0,\cdots,0,a)\in\mathbb{F}_{q}^{n-k},a\in\mathbb{F}_{q}^{*}$, then for each $1\leq i_{1}<i_{2}<\cdots<i_{n-k-1}\leq n$, let
	$\sum\limits_{j=0}^{n-k-1}c_{j}x^{n-k-1-j}=\prod\limits_{j=1}^{n-k-1}(x-\alpha_{i_{j}})$ and $\Lambda_{0}^{'}=1,\Lambda_{i}^{'}=-\sum\limits_{j=1}^{i}c_{j}\Lambda_{i-j}^{'},i=1,2,\cdots,n-k-1$, 
	it all holds
	$$\sum\limits_{r=0}^{n-k-2}a_{r}c_{n-k-1-r}-\eta\sum\limits_{r=0}^{n-k-2}\sum\limits_{t=0}^{k-l}a_{r}\sigma_{k-l-t}\sum\limits_{\max\{0,r-t\}\leq w\leq r}c_{n-k-1-w}\Lambda_{t+w-r}^{\prime}=0\neq -a.$$
    Thus, $\boldsymbol{a}^T$ can not be expressed as a $\mathbb{F}_{q}$-linear combination of any $n-k-1$ columns of $H$.
\end{proof}

We need the following lemma to recover the vectors from their syndromes.
\begin{lemma}\label{Lemma:3.6}
	Notations as in Theorem~\ref{thm:3} and Lemma~\ref{lemcont:4.1}.  For $a_{0},\cdots,a_{n-k-1}\in\mathbb{F}_{q}$,
    the solutions of the following system of linear equations
	\begin{equation}\label{equ:3}
	\begin{pmatrix}
	u_{1}&\cdots&u_{n}\\
	\vdots&\vdots&\vdots\\
	u_{1}\alpha_{1}^{n-k-2}&\cdots&u_{n}\alpha_{n}^{n-k-2}\\
	u_{1}f(\alpha_{1})&\cdots&u_{n}f(\alpha_{n})
	\end{pmatrix}\cdot \begin{pmatrix}
	x_{0}\\x_{1}\\ \vdots\\ x_{n-1}
	\end{pmatrix}=\begin{pmatrix}
	a_{0}\\ a_{1}\\ \vdots\\a_{n-k-1}
	\end{pmatrix}
	\end{equation}
	are $(h(\alpha_{1}),h(\alpha_{2}),\cdots,h(\alpha_{n}))^T+TRS_{k}(\mathcal{A},l,\eta)$, where $$h(x)=\sum\limits_{i=0}^{n-k-1}\sum\limits_{j=0}^{i}\sigma_{i-j}a_{j}x^{n-1-i}+\eta\sum\limits_{i=0}^{n-k-2}\sum\limits_{j=0}^i\sigma_{i-j}a_{j}\sum\limits_{w=0}^{k-l}\sigma_{k-l-w}\Lambda_{n-k-1-i+w}x^k.$$
 \end{lemma}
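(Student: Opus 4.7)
The plan is to reduce the lemma to exhibiting one particular solution of the system~(\ref{equ:3}). The coefficient matrix on the left is precisely the parity-check matrix $H$ of $TRS_k(\mathcal{A},l,\eta)$ from Theorem~\ref{thm:3}, whose right kernel is by definition $TRS_k(\mathcal{A},l,\eta)$. Hence the full solution set equals $\boldsymbol{y}+TRS_k(\mathcal{A},l,\eta)$ for any particular solution $\boldsymbol{y}$, and it suffices to check that $\boldsymbol{y}:=(h(\alpha_1),\ldots,h(\alpha_n))^T$ satisfies $H\boldsymbol{y}^T=(a_0,\ldots,a_{n-k-1})^T$.

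The verification rests on the sums $T_t:=\sum_{s=1}^n u_s\alpha_s^t$. Standard Lagrange interpolation gives $T_t=0$ for $0\le t\le n-2$ and $T_{n-1}=1$; Lemma~\ref{lemcont:4.1} extends this to $T_{n-1+t}=\Lambda_t$ for all $t\ge 0$. A second ingredient, immediate from the recursion defining $\Lambda_t$ together with $\sigma_0=\Lambda_0=1$, is the convolution identity $\sum_{r=0}^t \sigma_r\Lambda_{t-r}=\delta_{t,0}$.

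Write $h(x)=h_1(x)+\eta Cx^k$, where $h_1(x)=\sum_{i=0}^{n-k-1}(\sum_{j=0}^i\sigma_{i-j}a_j)x^{n-1-i}$ is the first (Reed--Solomon-like) summand, supported on degrees $k,\ldots,n-1$, and $C$ is the scalar coefficient defining the second summand. For each of the first $n-k-1$ rows of $H$ (index $m$ with $0\le m\le n-k-2$), the $\eta Cx^k$ part contributes $\eta C\cdot T_{m+k}=0$ because $m+k\le n-2$; and in $\sum_s u_s\alpha_s^m h_1(\alpha_s)=\sum_{i=0}^m(\sum_{j=0}^i\sigma_{i-j}a_j)\Lambda_{m-i}$, swapping the order of summation and applying the convolution identity collapses the result to exactly $a_m$. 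For the last row, expand $f(x)h(x)$ into the four pieces $x^{n-k-1}h_1(x)$, $\eta Cx^{n-1}$, $-\eta x^{n-k-1}\bigl(\sum_j\sigma_j x^{k-l-j}\bigr)h_1(x)$, and an $\eta^2$ tail; the first piece evaluates under $\sum_s u_s\,\cdot\,$ to $a_{n-k-1}$ by the same convolution identity applied at step $n-k-1$.

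The hard part is the cancellation of the remaining $O(\eta)$ and $O(\eta^2)$ contributions in the last row. After the substitution $w=k-l-j'$, the $\eta$-piece coming from $-\eta x^{n-k-1}\bigl(\sum_j\sigma_j x^{k-l-j}\bigr)h_1(x)$ matches the triple sum defining $C$ up to a single extra $i=n-k-1$ term, which itself vanishes by the convolution identity because the hypothesis $l\le k-1$ forces $k-l\ge 1$; this is exactly what is needed to cancel the $\eta Cx^{n-1}$ piece. The leftover $\eta^2$ tail reduces to $-\eta^2 C\sum_{j=0}^{k-l}\sigma_j\Lambda_{k-l-j}$, which is again zero by the convolution identity and the same parity $k-l\ge 1$. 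Collecting everything verifies $H\boldsymbol{y}^T=(a_0,\ldots,a_{n-k-1})^T$ and hence the lemma.
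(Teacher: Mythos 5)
Your proposal is correct and follows essentially the same route as the paper's proof: reduce to verifying that $(h(\alpha_1),\ldots,h(\alpha_n))$ is a particular solution, then use the moment identities $\sum_s u_s\alpha_s^t=0$ for $t\le n-2$, $\sum_s u_s\alpha_s^{n-1+t}=\Lambda_t$ (Lemma~\ref{lemcont:4.1}), and the convolution identity $\sum_r\sigma_r\Lambda_{t-r}=\delta_{t,0}$, with the same cancellations of the $\eta$ and $\eta^2$ contributions via $\sum_{w=0}^{k-l}\sigma_{k-l-w}\Lambda_w=0$ for $k-l\ge 1$. No gaps.
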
   
    

\begin{proof}
	It is sufficient to prove  
	\begin{equation*}
	\begin{pmatrix}
	u_{1}&\cdots&u_{n}\\
	\vdots&\vdots&\vdots\\
	u_{1}\alpha_{1}^{n-k-2}&\cdots&u_{n}\alpha_{n}^{n-k-2}\\
	u_{1}f(\alpha_{1})&\cdots&u_{n}f(\alpha_{n})
	\end{pmatrix}\cdot \begin{pmatrix}
	h(\alpha_{1})\\
	\vdots\\
        h(\alpha_{n-1})\\
        h(\alpha_{n})
	\end{pmatrix}=\begin{pmatrix}
	a_{0}\\ a_{1}\\ \vdots\\a_{n-k-1}
	\end{pmatrix}.
	\end{equation*}
	For $0\leq t\leq n-k-2$, we have 
    \begin{equation*}
        \begin{aligned}
            &\sum\limits_{r=1}^{n}u_{r}\alpha_{r}^th(\alpha_{r})\\
           =&\sum\limits_{r=1}^{n}u_{r}\alpha_{r}^t
           \left(\sum\limits_{i=0}^{n-k-1}\sum\limits_{j=0}^{i}\sigma_{i-j}a_{j}\alpha_{r}^{n-1-i}+\eta\sum\limits_{i=0}^{n-k-2}\sum\limits_{j=0}^i\sigma_{i-j}a_{j}\sum\limits_{w=0}^{k-l}\sigma_{k-l-w}\Lambda_{n-k-1-i+w}\alpha_{r}^k\right)\\
           =&\sum\limits_{j=0}^{n-k-1}a_{j}\sum\limits_{i=j}^{n-k-1}\sigma_{i-j}\sum\limits_{r=1}^nu_{r}\alpha_{r}^{n-1+t-i}+\eta\sum\limits_{j=0}^{n-k-2}a_{j}\sum\limits_{i=j}^{n-k-2}\sigma_{i-j}\sum\limits_{w=0}^{k-l}\sigma_{k-l-w}\Lambda_{n-k-1-i+w}\sum\limits_{r=1}^{n}u_{r}\alpha_{r}^{t+k}\\
           =&\sum\limits_{j=0}^{t}a_{j}\sum\limits_{i=j}^{t}\sigma_{i-j}\Lambda_{t-i}=a_{t}
        \end{aligned}
    \end{equation*}
	and
	\begin{small}
	\begin{equation*}
	\begin{aligned}
	&\sum\limits_{r=1}^{n}u_{r}f(\alpha_{r})\cdot h(\alpha_{r})\\
    =&\sum\limits_{r=1}^nu_{r}\alpha_{r}^{n-k-1}\left(1-\eta\sum\limits_{w=0}^{k-l}\sigma_{k-l-w}\alpha_{r}^w\right)\cdot \left(\sum\limits_{i=0}^{n-k-1}\sum\limits_{j=0}^{i}\sigma_{i-j}a_{j}\alpha_{r}^{n-1-i}\right.\\
    &~~~~~~~~~~~~~~~~\left.+\eta\sum\limits_{i=0}^{n-k-2}\sum\limits_{j=0}^i\sigma_{i-j}a_{j}\sum\limits_{w=0}^{k-l}\sigma_{k-l-w}\Lambda_{n-k-1-i+w}\alpha_{r}^k\right)\\
	=&\sum\limits_{j=0}^{n-k-1}a_{j}\sum\limits_{i=j}^{n-k-1}\sigma_{i-j}\sum\limits_{r=1}^nu_{r}\alpha_{r}^{n-1+n-k-1-i}-\eta\sum\limits_{j=0}^{n-k-1}a_{j}\sum\limits_{i=j}^{n-k-1}\sigma_{i-j}\sum\limits_{w=0}^{k-l}\sigma_{k-l-w}\sum\limits_{r=1}^{n}u_{r}\alpha_{r}^{n-1+n-k-1-i+w}\\
    &~~~~~~~~~~~~~~~~+\eta\sum\limits_{j=0}^{n-k-2}\sum\limits_{i=j}^{n-k-2}\sigma_{i-j}a_{j}\sum\limits_{w=0}^{k-l}\sigma_{k-l-w}\Lambda_{n-k-1-i+w}\cdot\left(\sum\limits_{r=1}^{n}u_{r}\alpha_{r}^{n-1}-\eta\sum\limits_{w=0}^{k-l}\sigma_{k-l-w}\sum\limits_{r=1}^{n}u_{r}\alpha_{r}^{n-1+w}\right)\\
	=&\sum\limits_{j=0}^{n-k-1}a_{j}\sum\limits_{i=j}^{n-k-1}\sigma_{i-j}\Lambda_{n-k-1-i}-\eta\sum\limits_{j=0}^{n-k-1}a_{j}\sum\limits_{i=j}^{n-k-1}\sigma_{i-j}\sum\limits_{w=0}^{k-l}\sigma_{k-l-w}\Lambda_{n-k-1-i+w}\\
    &~~~~~~~~~~~~~~~~+\eta\sum\limits_{j=0}^{n-k-2}\sum\limits_{i=j}^{n-k-2}\sigma_{i-j}a_{j}\sum\limits_{w=0}^{k-l}\sigma_{k-l-w}\Lambda_{n-k-1-i+w}\cdot\left(1-\eta\sum\limits_{w=0}^{k-l}\sigma_{k-l-w}\Lambda_{w}\right)\\
	=&a_{n-k-1}-\eta\sum\limits_{j=0}^{n-k-2}a_{j}\sum\limits_{i=j}^{n-k-2}\sigma_{i-j}\sum\limits_{w=0}^{k-l}\sigma_{k-l-w}\Lambda_{n-k-1-i+w}+\eta\sum\limits_{j=0}^{n-k-2}a_{j}\sum\limits_{i=j}^{n-k-2}\sigma_{i-j}\sum\limits_{w=0}^{k-l}\sigma_{k-l-w}\Lambda_{n-k-1-i+w}\\
    =&a_{n-k-1}.\\
	\end{aligned}
	\end{equation*}
	\end{small}
	Thus, the solutions of the linear equation system~(\ref{equ:3}) are $(h(\alpha_{1}),h(\alpha_{2}),\cdots,h(\alpha_{n}))^T+TRS_{k}(\mathcal{A},l,\eta)$.
	
\end{proof}

\begin{theorem}\label{The:3.6}
	Notations as in Lemma~\ref{lemcont:4.1}, Theorems~\ref{thm:3} and~\ref{The:8}. For
	$a_{0},a_{1},\cdots,a_{n-k-1}\in\mathbb{F}_{q}$, if for each $1\leq i_{1}<i_{2}<\cdots<i_{n-k-1}\leq n$,
    it all holds
	\begin{equation}\label{Equ:3.1}
	\begin{aligned}
	&\sum\limits_{r=0}^{n-k-2}a_{r}c_{n-k-1-r}-\eta\sum\limits_{r=0}^{n-k-2}\sum\limits_{t=0}^{k-l}a_{r}\sigma_{k-l-t}\sum\limits_{\max\{0,r-t\}\leq w\leq r}c_{n-k-1-w}\Lambda_{t+w-r}^{\prime}\neq-a_{n-k-1}.
	\end{aligned}
	\end{equation}
	Then the vector $\boldsymbol{u}_{f}$ with generating polynomial $$f(x)=\sum\limits_{i=0}^{n-k-1}\sum\limits_{j=0}^{i}\sigma_{i-j}a_{j}x^{n-1-i}+\eta\sum\limits_{i=0}^{n-k-2}\sum\limits_{j=0}^i\sigma_{i-j}a_{j}\sum\limits_{w=0}^{k-l}\sigma_{k-l-w}\Lambda_{n-k-1-i+w}x^k+f_{k,l,\eta}(x)$$ is a deep hole of $TRS_{k}(\mathcal{A},l,\eta)$, where $f_{k,l,\eta}(x)\in\mathcal{S}_{k,l,\eta}$.     
\end{theorem}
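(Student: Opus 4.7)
The plan is to combine the three preceding results, Proposition~\ref{Prop:3.2}, Theorem~\ref{The:8}, and Lemma~\ref{Lemma:3.6}, in a direct chain of implications. By Proposition~\ref{Prop:3.2}, the vector $\boldsymbol{u}_f$ is a deep hole of $TRS_k(\mathcal{A},l,\eta)$ if and only if its syndrome $H\cdot \boldsymbol{u}_f^T$ cannot be written as an $\mathbb{F}_q$-linear combination of any $n-k-1$ columns of $H$. Hence the proof reduces to (a) computing the syndrome of $\boldsymbol{u}_f$ and (b) invoking the non-representability criterion of Theorem~\ref{The:8}.

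First, I would use the decomposition of the generating polynomial as $f(x)=h(x)+f_{k,l,\eta}(x)$, where $h(x)$ is precisely the polynomial appearing in Lemma~\ref{Lemma:3.6} associated to $(a_0,a_1,\ldots,a_{n-k-1})$. Since $f_{k,l,\eta}(x)\in S_{k,l,\eta}$, the evaluation vector $(f_{k,l,\eta}(\alpha_1),\ldots,f_{k,l,\eta}(\alpha_n))$ lies in $TRS_k(\mathcal{A},l,\eta)$ and is therefore annihilated by $H$. Consequently
\[
H\cdot \boldsymbol{u}_f^T = H\cdot \bigl(h(\alpha_1),\ldots,h(\alpha_n)\bigr)^T.
\]
By Lemma~\ref{Lemma:3.6}, the vector $(h(\alpha_1),\ldots,h(\alpha_n))^T$ is a particular solution of the linear system $H\boldsymbol{x}^T=(a_0,a_1,\ldots,a_{n-k-1})^T$. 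Therefore the syndrome of $\boldsymbol{u}_f$ is exactly the column vector $\boldsymbol{a}^T$ given in the hypothesis.

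Second, I would invoke Theorem~\ref{The:8}: the hypothesis~(\ref{Equ:3.1}) placed on $(a_0,\ldots,a_{n-k-1})$ is exactly the equivalent condition given there for $\boldsymbol{a}^T$ to fail to be an $\mathbb{F}_q$-linear combination of any $n-k-1$ columns of $H$. Combined with the syndrome identity above, Proposition~\ref{Prop:3.2} immediately yields that $\boldsymbol{u}_f$ is a deep hole of $TRS_k(\mathcal{A},l,\eta)$.

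The argument is essentially a chaining of the three preceding results, and I do not anticipate a substantive obstacle: the hard work is already encapsulated in the proofs of Theorem~\ref{The:8} and Lemma~\ref{Lemma:3.6}. The only point requiring a moment of care is verifying that the polynomial $h(x)$ recovered from $f(x)$ matches verbatim the one produced by Lemma~\ref{Lemma:3.6}, and that the residual part $f_{k,l,\eta}(x)$ indeed lies in $S_{k,l,\eta}$ so that its evaluation is a codeword; both are transparent on inspection of the two polynomial expressions, which is why the freedom to choose $f_{k,l,\eta}\in S_{k,l,\eta}$ arbitrarily is preserved in the statement.
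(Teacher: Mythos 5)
Your proposal is correct and follows essentially the same route as the paper's own proof: compute the syndrome of $\boldsymbol{u}_f$ via Lemma~\ref{Lemma:3.6} (with the codeword part $f_{k,l,\eta}$ annihilated by $H$), then apply Theorem~\ref{The:8} and Proposition~\ref{Prop:3.2} to conclude. Your write-up is in fact slightly more explicit than the paper's about why adding $f_{k,l,\eta}\in S_{k,l,\eta}$ leaves the syndrome unchanged, but the argument is the same chain of the three preceding results.
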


\begin{proof}
	On the one hand, from Lemma~\ref{Lemma:3.6}, we have $H\cdot \boldsymbol{u}_{f}^{T}=(a_{0},\cdots,a_{n-k-1})^T$.
	On the other hand, for each $1\leq i_{1}<i_{2}<\cdots<i_{n-k-1}\leq n$, let
	$\sum\limits_{j=0}^{n-k-1}c_{j}x^{n-k-1-j}=\prod\limits_{j=1}^{n-k-1}(x-\alpha_{i_{j}})$ and $$\Lambda_{0}^{'}=1,\Lambda_{i}^{'}=-\sum\limits_{j=1}^{i}\Lambda_{i-j}^{'}c_{j},\ \forall 1\leq i\leq n-k-1,$$
	it all holds
	\begin{equation*}
	\begin{aligned}
	&\sum\limits_{r=0}^{n-k-2}a_{r}c_{n-k-1-r}-\eta\sum\limits_{r=0}^{n-k-2}\sum\limits_{t=0}^{k-l}a_{r}\sigma_{k-l-t}\sum\limits_{\max\{0,r-t\}\leq w\leq r}c_{n-k-1-w}\Lambda_{t+w-r}^{\prime}\neq -a_{n-k-1}.
	\end{aligned}
	\end{equation*}
	Thus, from Theorem~\ref{The:8} and Proposition~\ref{Prop:3.2}, the vector $\boldsymbol{u}_{f}$ with generating polynomial $$f(x)=\sum\limits_{i=0}^{n-k-1}\sum\limits_{j=0}^{i}\sigma_{i-j}a_{j}x^{n-1-i}+\eta\sum\limits_{i=0}^{n-k-2}\sum\limits_{j=0}^i\sigma_{i-j}a_{j}\sum\limits_{w=0}^{k-l}\sigma_{k-l-w}\Lambda_{n-k-1-i+w}x^k+f_{k,l,\eta}(x)$$ is a deep hole of $TRS_{k}(\mathcal{A},l,\eta)$, where $f_{k,l,\eta}(x)\in\mathcal{S}_{k,l,\eta}$.     
\end{proof}

Finally, we can obtain some classes of deep holes of $TRS_{k}(\mathcal{A},l,\eta)$ from Theorem~\ref{The:3.6}.

\begin{corollary}\label{Lem:3.7}
	Suppose $a\in \mathbb{F}_{q}^{*}$ and  $g(x)=ax^k+f_{k,l,\eta}(x)$, where $f_{k,l,\eta}\in\mathcal{S}_{k,l,\eta}$. Then $\boldsymbol{u}=(g(\alpha_{1}),\cdots,g(\alpha_{n}))$  is a deep hole of $TRS_{k}(\mathcal{A},l,\eta)$.
\end{corollary}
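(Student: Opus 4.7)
The plan is to derive this corollary as a direct specialization of Theorem~\ref{The:3.6}. Concretely, I would take the syndrome vector $\boldsymbol{a}=(a_0,a_1,\ldots,a_{n-k-1})=(0,\ldots,0,a)\in\mathbb{F}_q^{n-k}$ with $a\in\mathbb{F}_q^*$ sitting in the last coordinate, and then unpack what the generating polynomial produced by Theorem~\ref{The:3.6} becomes under this choice.

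First I would simplify the first double sum
\[
\sum_{i=0}^{n-k-1}\sum_{j=0}^{i}\sigma_{i-j}a_{j}x^{n-1-i}.
\]
Since $a_j=0$ for $0\leq j\leq n-k-2$, only the term $i=j=n-k-1$ survives, and using $\sigma_0=1$ this contributes exactly $a\,x^{k}$. Next, the second double sum
\[
\eta\sum_{i=0}^{n-k-2}\sum_{j=0}^i\sigma_{i-j}a_{j}\sum_{w=0}^{k-l}\sigma_{k-l-w}\Lambda_{n-k-1-i+w}x^k
\]
vanishes identically, because in its outer index range we only ever sum $a_j$ with $j\leq n-k-2$, all of which are zero. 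Therefore the polynomial produced by Theorem~\ref{The:3.6} is exactly $a\,x^k+f_{k,l,\eta}(x)=g(x)$ for any $f_{k,l,\eta}\in\mathcal{S}_{k,l,\eta}$, which matches the polynomial in the corollary.

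It remains to check condition~(\ref{Equ:3.1}) for this choice of $\boldsymbol{a}$. Both sums on the left-hand side are indexed by $0\leq r\leq n-k-2$ and carry a factor $a_r$, so they both vanish; the right-hand side is $-a_{n-k-1}=-a\neq 0$, so the inequality
\[
0\neq -a
\]
holds for every choice of $1\leq i_1<\cdots<i_{n-k-1}\leq n$ (equivalently, every choice of $c_j$'s and $\Lambda'_j$'s). This is precisely the ``in particular'' statement appearing at the end of Theorem~\ref{The:8}, so no separate verification of the column-independence is needed. Invoking Theorem~\ref{The:3.6} then concludes that $\boldsymbol{u}_g=(g(\alpha_1),\ldots,g(\alpha_n))$ is a deep hole of $TRS_k(\mathcal{A},l,\eta)$.

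There is no real obstacle: the entire content of the corollary is the observation that the specific syndrome $(0,\ldots,0,a)$ is always realized by a word of the form $a x^k+f_{k,l,\eta}$, and that this syndrome automatically satisfies the nondegeneracy condition of Theorem~\ref{The:3.6}. The only thing to be careful about is the bookkeeping in collapsing the two double sums to confirm that the ``deep-hole polynomial'' produced by Theorem~\ref{The:3.6} agrees on the nose with $ax^k+f_{k,l,\eta}(x)$ modulo $\mathcal{S}_{k,l,\eta}$, which is the step sketched above.
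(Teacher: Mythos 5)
Your proposal is correct and follows essentially the same route as the paper: both specialize Theorem~\ref{The:3.6} to the syndrome $\boldsymbol{a}=(0,\ldots,0,a)$, note that condition~(\ref{Equ:3.1}) holds trivially (this is the ``in particular'' clause of Theorem~\ref{The:8}), and identify the resulting generating polynomial with $ax^k+f_{k,l,\eta}(x)$. Your explicit collapse of the two double sums is a detail the paper leaves implicit, but the argument is the same.
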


\begin{proof}
    From Proposition~\ref{Prop:3.2} and Theorem~\ref{The:8}, $\boldsymbol{a}=(0,\cdots,0,a)\in\mathbb{F}_{q}^{n-k}$ is the syndrome of some deep hole of $TRS_{k}(\mathcal{A},l,\eta)$. Thus, by Theorem~\ref{The:3.6}, the vector $\boldsymbol{u}_{f}$ with generating polynomial $f(x)=ax^k+f_{k,l,\eta}$ is a deep hole of $TRS_{k}(\mathcal{A},l,\eta)$, where $f_{k,l,\eta}(x)\in\mathcal{S}_{k,l,\eta}$.
\end{proof}

\begin{corollary}
		Notations as in Lemma~\ref{lemcont:4.1}, Theorems~\ref{thm:3} and~\ref{The:8}.
        \begin{enumerate}
            \item For $a\in\mathbb{F}_{q}$,
		if for each $1\leq i_{1}<i_{2}<\cdots<i_{n-k-1}\leq n$, it all holds
		\begin{equation*}
		\begin{aligned}
		c_{1}-\eta\sum\limits_{t=0}^{k-l}\sigma_{k-l-t}\sum\limits_{\max\{0,n-k-2-t\}\leq w\leq n-k-2}c_{n-k-1-w}\Lambda_{t+w-(n-k-2)}^{'}\neq -a.
		\end{aligned}
		\end{equation*}
		then the vector $\boldsymbol{u}_{f}$ with generating polynomial $f(x)=x^{k+1}+\left(\sigma_{1}+a-\eta\sigma_{1+k-l}\right)x^k+f_{k,l,\eta}(x)$ is a deep hole of $TRS_{k}(\mathcal{A},l,\eta)$, where $f_{k,l,\eta}(x)\in\mathcal{S}_{k,l,\eta}$. 
        \item If $q>\binom{n}{k+1}$, then there exists $a\in\mathbb{F}_{q}$ such that the vector $\boldsymbol{u}_{f}$ with generating polynomial $f(x)=x^{k+1}+\left(\sigma_{1}+a-\eta\sigma_{1+k-l}\right)x^k+f_{k,l,\eta}(x)$ is a deep hole of $TRS_{k}(\mathcal{A},l,\eta)$, where $f_{k,l,\eta}(x)\in\mathcal{S}_{k,l,\eta}$. 
        \end{enumerate}
           
\end{corollary}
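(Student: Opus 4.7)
The plan is to apply Theorem~\ref{The:3.6} with the specific syndrome vector $\boldsymbol{a}=(0,\ldots,0,1,a)\in\mathbb{F}_{q}^{n-k}$, namely $a_{r}=0$ for $0\leq r\leq n-k-3$, $a_{n-k-2}=1$, and $a_{n-k-1}=a$. Plugging this choice into the nondegeneracy inequality of Theorem~\ref{The:3.6}, every term with $r<n-k-2$ drops out, so the left-hand side collapses to
\[
c_{n-k-1-(n-k-2)}-\eta\sum_{t=0}^{k-l}\sigma_{k-l-t}\sum_{\max\{0,n-k-2-t\}\leq w\leq n-k-2}c_{n-k-1-w}\Lambda_{t+w-(n-k-2)}^{\prime},
\]
which is exactly the quantity appearing in the hypothesis of part (1); the required inequality is $\neq -a_{n-k-1}=-a$, which is precisely the hypothesis.

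Next, I will compute the generating polynomial $f(x)$ of the resulting deep hole from the formula in Theorem~\ref{The:3.6}. In the first sum $\sum_{i=0}^{n-k-1}\sum_{j=0}^{i}\sigma_{i-j}a_{j}x^{n-1-i}$, only $i=n-k-2$ and $i=n-k-1$ contribute; they yield $\sigma_{0}\cdot 1\cdot x^{k+1}=x^{k+1}$ and $(\sigma_{1}\cdot 1+\sigma_{0}\cdot a)x^{k}=(\sigma_{1}+a)x^{k}$, respectively. In the second sum $\eta\sum_{i=0}^{n-k-2}\sum_{j=0}^{i}\sigma_{i-j}a_{j}\sum_{w=0}^{k-l}\sigma_{k-l-w}\Lambda_{n-k-1-i+w}x^{k}$, only $i=n-k-2$ contributes, giving $\eta\bigl(\sum_{w=0}^{k-l}\sigma_{k-l-w}\Lambda_{1+w}\bigr)x^{k}$. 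The key algebraic simplification is to use the defining relation of $\Lambda_{j}$ from Lemma~\ref{lemcont:4.1}, namely $\sum_{j=0}^{i}\sigma_{i-j}\Lambda_{j}=[i=0]$ (which holds for $0\leq i\leq n$). Reindexing $j=1+w$, one obtains
\[
\sum_{w=0}^{k-l}\sigma_{k-l-w}\Lambda_{1+w}=\sum_{j=0}^{k-l+1}\sigma_{k-l+1-j}\Lambda_{j}-\sigma_{k-l+1}\Lambda_{0}=-\sigma_{1+k-l}.
\]
Combining everything, $f(x)=x^{k+1}+(\sigma_{1}+a-\eta\sigma_{1+k-l})x^{k}+f_{k,l,\eta}(x)$, which matches the statement of part (1).

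For part (2), I would note that for each choice of indices $1\leq i_{1}<\cdots<i_{n-k-1}\leq n$, the associated condition in part (1) excludes exactly one value of $a\in\mathbb{F}_{q}$ (the unique value that turns the inequality into equality). Thus at most $\binom{n}{n-k-1}=\binom{n}{k+1}$ values of $a$ are forbidden, so under the hypothesis $q>\binom{n}{k+1}$ there exists at least one $a\in\mathbb{F}_{q}$ for which the inequality holds simultaneously for all $\binom{n}{n-k-1}$ index tuples, and part (1) then produces the desired deep hole.

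The only nontrivial step is the telescoping identity that turns $\sum_{w=0}^{k-l}\sigma_{k-l-w}\Lambda_{1+w}$ into $-\sigma_{1+k-l}$; everything else is bookkeeping and an application of previously established results. I expect this to be where a reader might want the computation spelled out, but it follows immediately once one invokes $\sum_{j=0}^{i}\sigma_{i-j}\Lambda_{j}=[i=0]$ from Lemma~\ref{lemcont:4.1}.
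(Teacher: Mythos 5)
Your proposal is correct and follows essentially the same route as the paper: the paper's proof of part (1) is simply ``take $\boldsymbol{a}=(0,\cdots,0,1,a)$ in Theorem~\ref{The:3.6},'' and its proof of part (2) is the same counting of at most $\binom{n}{k+1}$ excluded values of $a$. Your explicit verification that the generating polynomial from Theorem~\ref{The:3.6} collapses to $x^{k+1}+(\sigma_{1}+a-\eta\sigma_{1+k-l})x^{k}+f_{k,l,\eta}(x)$ via the relation $\sum_{j=0}^{i}\sigma_{i-j}\Lambda_{j}=[i=0]$ is a correct filling-in of a step the paper leaves implicit.
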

\begin{proof}
\begin{enumerate}
    \item Take $\boldsymbol{a}=(0,\cdots,0,1,a)$ in Theorem~\ref{The:3.6}.
    \item Due to
    \begin{small}
        \begin{equation*}
        \#\left\{c_{1}-\eta\sum\limits_{t=0}^{k-l}\sigma_{k-l-t}\sum\limits_{0\leq w\leq n-k-2\atop n-k-2-t\leq w
        }\Lambda_{t+w-(n-k-2)}^{'}:1\leq i_{1}<\cdots<i_{n-k-1}\leq n\right\}\leq \binom{n}{k+1}<q,
    \end{equation*}
    \end{small}
    there exists $a\in\mathbb{F}_{q}$ such that for all $1\leq i_{1}<i_{2}<\cdots<i_{n-k-1}\leq n$
    \begin{equation*}
		\begin{aligned}
		c_{1}-\eta\sum\limits_{t=0}^{k-l}\sigma_{k-l-t}\sum\limits_{\max\{0,n-k-2-t\}\leq w\leq n-k-2}c_{n-k-1-w}\Lambda_{t+w-(n-k-2)}^{'}\neq -a.
		\end{aligned}
		\end{equation*}
         Thus, by Theorem~\ref{The:3.6}, the vector $\boldsymbol{u}_{f}$ with generating polynomial $$f(x)=x^{k+1}+\left(\sigma_{1}+a-\eta\sigma_{1+k-l}\right)x^k+f_{k,l,\eta}(x)$$ is a deep hole of $TRS_{k}(\mathcal{A},l,\eta)$.
\end{enumerate}
    
\end{proof}


\section{On the completeness of Deep Holes of $TRS_{k}(\mathbb{F}_{q}^{*},k-1,\eta)$}

In this section, we devote to presenting our main theorems on the completeness of deep holes of $TRS_k(\mathbb{F}_q^{*},k-1, \eta)$. 
The results will be divided into two cases: when $q$ is even and when $q$ is odd.

Inspired by the polynomial methods proposed in \cite{fang2024deep,hiren2019mds}, 
we use different methods to solve the deep hole problem for the cases where $q$ is even or odd. Since the evaluation set is not the whole finite field, compared to~\cite{fang2024deep}, our situation is more complicated. 

We first recall some notations which could be simplified in the particular setting $\mathcal{A}=\mathbb{F}_{q}^{*}$ and $l=k-1$. 

Let $n=q-1,r=q-k-2,\left\{\alpha_{1},\cdots,\alpha_{q-1}\right\}=\mathcal{A}=\mathbb{F}_{q}^{*}$ and $G(x)=\prod\limits_{\alpha\in \mathbb{F}_{q}^{*}}(x-\alpha)=\sum\limits_{j=0}^{q-1}\sigma_{q-1-j}x^j$, where $$\sigma_{0}=1\ \mbox{and}\ \sigma_{i}=(-1)^{i}\sum\limits_{1\leq j_{1}<\cdots<j_{i}\leq q-1}\prod\limits_{s=1}^i\alpha_{j_{s}},\ \forall 1\leq i\leq q-1.$$ 
Since $G(x)=x^{q-1}-1$, thus 
$$\sigma_{q-1}=-1,\sigma_{q-2}=\cdots=\sigma_{1}=0.$$ 
In addition, we have $$\Lambda_{s}=-\sum\limits_{i=1}^s\sigma_{i}\Lambda_{s-i}=0,\forall 1\leq s\leq q-2.$$ Therefore, for all $a_{0},\cdots,a_{r-1}\in \mathbb{F}_{q}$, we have
\begin{equation}\label{equation:4.1}
\sum\limits_{j=0}^{r-1}a_{j}\sum\limits_{i=j}^{r-1}\sigma_{i-j}\sum\limits_{w=0}^{k-l}\sigma_{k-l-w}\Lambda_{r-i+w}=\sum\limits_{j=0}^{r-1}a_{j}\Lambda_{q-2-j-l}=0.
\end{equation}

  Let \begin{equation*}
H=\small{\begin{pmatrix}
	u_{1}&\cdots&u_{q-1}\\
	u_{1}\alpha_{1}&\cdots&u_{q-1}\alpha_{q-1}\\
	\vdots&\vdots&\vdots\\
	u_{1}\alpha_{1}^{r-1}&\cdots&u_{q-1}\alpha_{q-1}^{r-1}\\
	u_{1}\alpha_{1}^{r}(1-\eta\alpha_{1})&\cdots&u_{q-1}\alpha_{q-1}^{r}(1-\eta\alpha_{q-1`})
	\end{pmatrix}}
\end{equation*}
be a parity check matrix of $TRS_{k}(\mathbb{F}_{q}^{*},k-1,\eta)$, where $u_{i}=\prod\limits_{1\leq j\leq q-1,j\neq i}(\alpha_{i}-\alpha_{j})^{-1}$ for all $1\leq i\leq q-1$. By Theorem~\ref{The:8} and Proposition~\ref{Prop:3.2}, for any $a_{0},a_{1},\cdots,a_{r}\in \mathbb{F}_{q}$,
$(a_{0},a_{1},\cdots,a_{r})\in\mathbb{F}_{q}^{r+1}$ is a  deep hole syndrome of $TRS_{k}(\mathbb{F}_{q}^{*},k-1,\eta)$, if and only if
for each $1\leq i_{1}<i_{2}<\cdots<i_{r}\leq q-1$,
it all holds $$\sum\limits_{j=0}^{r-1}a_{j}c_{r-j}-\eta\sum\limits_{j=0}^{r-1}a_{j}\sum\limits_{\max\{0,j-1\}\leq w\leq j}c_{r-w}\Lambda_{1+w-j}^{\prime}+a_{r}\neq 0.$$
By Theorem~\ref{The:3.6} and Equation~(\ref{equation:4.1}), if $(a_{0},a_{1},\cdots,a_{r})\in\mathbb{F}_{q}^{r+1}$ is a deep hole syndrome of $TRS_{k}(\mathbb{F}_{q}^{*},k-1,\eta)$ , then the vector $\boldsymbol{u}_{f}$ with generating polynomial $f(x)=\sum\limits_{i=0}^{r}a_{i}x^{r-i}+f_{k,k-1,\eta}(x)$ is a deep hole of $TRS_{k}(\mathbb{F}_{q}^{*},k-1,\eta)$, where $f_{k,k-1,\eta}(x)\in\mathcal{S}_{k,k-1,\eta}$.

Let $(x_{1},\cdots,x_{r})\in (\mathbb{F}_{q}^{*})^r$ and $\boldsymbol{a}=(a_{0},\cdots,a_{r})\in \mathbb{F}_{q}^{r+1}$. For $0\leq i\leq j\leq r$, denote by
$$S_{i,j}=S_{i}(\{x_{1},\cdots,x_{j}\})=\sum\limits_{1\leq t_{1}<\cdots<t_{i}\leq j}\prod\limits_{w=1}^{i}x_{t_{w}}$$
and $S_{i,j}=0$ if $i>j$ or $i<0$.

Since $c_{j}=(-1)^jS_{j,r}=(-1)^jS_{j,r-1}+(-1)^jS_{j-1,r-1}x_{r}$ for $0\leq j\leq r$, we have
\small{\begin{equation}\label{Equ:3.6}
	\begin{aligned}
	&\sum\limits_{j=0}^{r-1}a_{j}c_{r-j}-\eta\sum\limits_{j=0}^{r-1}a_{j}\sum\limits_{\max\{0,j-1\}\leq w\leq j}c_{r-w}\Lambda_{1+w-j}^{\prime}+a_{r}\\
	=&\sum\limits_{j=0}^{r-1}(-1)^{r-j}a_{j}\left(S_{r-j,r-1}+S_{r-1-j,r-1}x_{r}\right)-\eta
	\sum\limits_{j=0}^{r-1}(-1)^{r-j+1}a_{j}\left(S_{r-j+1,r-1}+S_{r-j,r-1}x_{r}\right)\\
	&~~~~~~~~~~~~~~~~+\eta
	\sum\limits_{j=0}^{r-1}(-1)^{r-j+1}a_{j}\left(S_{r-j,r-1}+S_{r-j-1,r-1}x_{r}\right)\left(S_{1,r-1}+x_{r}\right)+a_{r}\\
	=&\eta^{-1}f_{2}-\eta^{-1}f_{3}x_{r}-f_{1}+f_{2}x_{r}+(-f_{2}+f_{3}x_{r})(S_{1,r-1}+x_{r})+a_{r}\\
    =&f_{3}x_{r}^2+f_{3}\left(S_{1,r-1}-\eta^{-1}\right)x_{r}+g,
	\end{aligned}
	\end{equation}}
where $f_{t}=\eta\sum\limits_{j=0}^{r-1}(-1)^{r-j+2-t}a_{j}S_{r-j+2-t,r-1}$ for $t=1,2,3$ and $g=\eta^{-1}f_{2}-f_{1}-S_{1,r-1}f_{2}+a_{r}$.


Since $c_{j}=(-1)^jS_{j,r}=(-1)^jS_{j,r-2}+(-1)^jS_{j-1,r-2}(x_{r-1}+x_{r})+(-1)^jS_{j-2,r-2}x_{r-1}x_{r}$ for $0\leq j\leq r$, we have
\small{\begin{equation}\label{Equ:4.6}
	\begin{aligned}
	&\sum\limits_{j=0}^{r-1}a_{j}c_{r-j}-\eta\sum\limits_{j=0}^{r-1}a_{j}\sum\limits_{\max\{0,j-1\}\leq w\leq j}c_{r-w}\Lambda_{1+w-j}^{\prime}+a_{r}\\
	=&\sum\limits_{j=0}^{r-1}(-1)^{r-j}a_{j}S_{r-j,r}-\eta\sum\limits_{j=0}^{r-1}(-1)^{r-j+1}a_{j}S_{r-j+1,r}+\eta\sum\limits_{j=0}^{r-1}(-1)^{r-j+1}a_{j}S_{r-j,r}S_{1,r}+a_{r}\\
	=&\sum\limits_{j=0}^{r-1}(-1)^{r-j}a_{j}\left(S_{r-j,r-2}+S_{r-j-1,r-2}(x_{r-1}+x_{r})+S_{r-j-2,r-2}x_{r-1}x_{r}\right)\\
	&~~~~~~-\eta\sum\limits_{j=0}^{r-1}(-1)^{r-j+1}a_{j}\left(S_{r-j+1,r-2}+S_{r-j,r-2}(x_{r-1}+x_{r})+S_{r-j-1,r-2}x_{r-1}x_{r}\right)+\eta\sum\limits_{j=0}^{r-1}(-1)^{r-j+1}a_{j}\\
	&~~~~~~\cdot\left(S_{r-j,r-2}+S_{r-j-1,r-2}(x_{r-1}+x_{r})+S_{r-j-2,r-2}x_{r-1}x_{r}\right)\cdot\left(S_{1,r-2}+x_{r-1}+x_{r}\right)+a_{r}\\
	=&\eta^{-1}g_{2}-\eta^{-1}g_{3}(x_{r-1}+x_{r})+\eta^{-1}g_{4}x_{r-1}x_{r}-g_{1}+g_{2}(x_{r-1}+x_{r})-g_{3}x_{r-1}x_{r}\\
	&~~~~~~+(-g_{2}+g_{3}(x_{r-1}+x_{r})-g_{4}x_{r-1}x_{r})(S_{1,r-2}+x_{r-1}+x_{r})+a_{r}\\
	=&(g_{3}-g_{4}x_{r})(x_{r-1}+x_{r})^2+(S_{1,r-2}-\eta^{-1}-x_{r})(g_{3}-g_{4}x_{r})(x_{r-1}+x_{r})\\
	&~~~~~~+(g_{4}(S_{1,r-2}-\eta^{-1})+g_{3})x_{r}^2-g_{2}(S_{1,r-2}-\eta^{-1})-g_{1}+a_{r},
	\end{aligned}
	\end{equation}}
where $g_{t}=\eta\sum\limits_{j=0}^{r-1}(-1)^{r-j+2-t}a_{j}S_{r-j+2-t,r-2}$ for $t=1,2,3,4$.

\subsection{Deep Holes of $TRS_{k}(\mathbb{F}_{q}^{*},k-1,\eta)$
	for Even $q$}

In this subsection, we consider the even $q$ case. Firstly, we provide the following necessary condition for a deep hole syndrome of $TRS_{k}(\mathbb{F}_{q}^{*}, k-1,\eta)$.



\begin{lemma}\label{Lem:3.1.1}
	Notations as above. Suppose $r-2\geq 1$, i.e. $k\leq q-5$ and $q=2^m$. Denote by $\tilde{f_{i}}= f_{i}(x_{1},\cdots,x_{r-2},\eta^{-1}+S_{1,r-2})$ for $i=1,2,3$, $\tilde{g}=\tilde{f}_{1}+a_{r}$ and $V(x_{1},\cdots,x_{r-2})=\prod\limits_{1\leq i<j\leq r-2}(x_{j}-x_{i})$. If $\boldsymbol{a}=(a_{0},a_{1},\cdots,a_{r})$ is a deep hole syndrome of $TRS_{k}(\mathbb{F}_{q}^{*},k-1,\eta)$, then \begin{equation*}
	\begin{aligned}
	P(x_{1},\cdots,x_{r-2})&=V(x_{1},\cdots,x_{r-2})\cdot
 \prod\limits_{t=1}^{r-2}\left(\eta^{-1}+S_{1,r-2}+x_{t}\right)\cdot \tilde{f}_{3}\cdot\tilde{g}\cdot\left(\tilde{f}_{3}(\eta^{-1}+S_{1,r-2})^2+\tilde{g}\right)\prod\limits_{i=1}^{r-2}(\tilde{f}_{3}x_{i}^2+\tilde{g})
	\end{aligned}
	\end{equation*}
	vanishes on $\underbrace{\mathbb{F}_{q}^{*}\times \mathbb{F}_{q}^{*}\times\cdots\times \mathbb{F}_{q}^{*}}_{r-2}$. 
\end{lemma}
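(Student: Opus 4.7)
The plan is to reduce the deep hole characterization from Equation~(\ref{Equ:3.6}) to a one-variable condition in $x_r$ by a careful choice of $x_{r-1}$, and then to match the resulting cases against the explicit factors of $P$.

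First, I would fix an arbitrary point $(x_1,\ldots,x_{r-2})\in(\mathbb{F}_q^*)^{r-2}$ and show that at least one factor of $P$ vanishes there. Two factors are immediate: if the $x_i$ are not pairwise distinct then $V(x_1,\ldots,x_{r-2})=0$; and setting $\beta:=\eta^{-1}+S_{1,r-2}$, if $\beta=x_t$ for some $t$, then in characteristic~$2$ we have $\eta^{-1}+S_{1,r-2}+x_t=0$, so $\prod_{t=1}^{r-2}(\eta^{-1}+S_{1,r-2}+x_t)=0$. So I may assume the $x_i$ are pairwise distinct and $\beta\notin\{x_1,\ldots,x_{r-2}\}$.

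Assuming additionally $\beta\neq 0$, I would set $x_{r-1}=\beta$; this is a valid extension to a tuple of distinct nonzero elements. Then $S_{1,r-1}=S_{1,r-2}+\beta=\eta^{-1}$ in characteristic~$2$, so the coefficient $f_3(S_{1,r-1}-\eta^{-1})$ in Equation~(\ref{Equ:3.6}) vanishes, and the constant term simplifies via $(\eta^{-1}-S_{1,r-1})f_2=0$ to $\tilde{f}_1+a_r=\tilde{g}$. The deep hole criterion from Proposition~\ref{Prop:3.2} and Theorem~\ref{The:8} then reads
$\tilde{f}_3\, x_r^2+\tilde{g}\neq 0$ for every $x_r\in\mathbb{F}_q^*\setminus\{x_1,\ldots,x_{r-1}\}$.

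A short case analysis finishes. If $\tilde{f}_3=0$ or $\tilde{g}=0$ the corresponding factor of $P$ vanishes. Otherwise, since squaring is a bijection on $\mathbb{F}_q$ in characteristic~$2$, the quadratic $\tilde{f}_3 x_r^2+\tilde{g}$ has the unique nonzero root $x_r^{\ast}=\sqrt{\tilde{g}/\tilde{f}_3}$. To avoid violating the deep hole condition, $x_r^{\ast}$ must coincide with some $x_i$ for $1\leq i\leq r-1$: either $x_r^{\ast}=x_i$ for $i\leq r-2$, forcing $\tilde{f}_3 x_i^2+\tilde{g}=0$, or $x_r^{\ast}=\beta$, forcing $\tilde{f}_3\beta^2+\tilde{g}=0$. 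Either way, a factor of $P$ vanishes.

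The main obstacle I foresee is the degenerate case $\beta=0$ (i.e.\ $S_{1,r-2}=\eta^{-1}$), in which the substitution $x_{r-1}=\beta$ is no longer permissible. Handling this requires treating $x_{r-1}$ as a free parameter $y\in\mathbb{F}_q^*\setminus\{x_1,\ldots,x_{r-2}\}$ and directly analyzing the resulting quadratic $f_3(y)x_r^2+f_3(y)yx_r+g(y)$ from Equation~(\ref{Equ:3.6}), likely via an Artin--Schreier-type argument in characteristic~$2$ together with the fact that the factor $\tilde{f}_3\beta^2+\tilde{g}$ collapses to $\tilde{g}$ in this regime, so that one of $\tilde{f}_3$, $\tilde{g}$, or some $\tilde{f}_3 x_i^2+\tilde{g}$ is ultimately forced to vanish.
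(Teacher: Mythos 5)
Your proposal matches the paper's proof essentially step for step: after disposing of the factors $V$ and $\prod_t(\eta^{-1}+S_{1,r-2}+x_t)$, you substitute $x_{r-1}=\eta^{-1}+S_{1,r-2}$, reduce the deep-hole criterion via Equation~(\ref{Equ:3.6}) to $\tilde{f}_3x_r^2+\tilde{g}\neq 0$, and use the bijectivity of squaring in characteristic $2$ to force the unique root into $\{x_1,\dots,x_{r-1},0\}$, which is exactly the remaining factorization of $P$. The degenerate case $\eta^{-1}+S_{1,r-2}=0$ (where the substituted $x_{r-1}$ falls outside $\mathbb{F}_q^{*}$) that you flag as an obstacle is in fact silently skipped in the paper's own proof as well, so your treatment is, if anything, the more careful of the two.
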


\begin{proof}
	For any $x_{1},\cdots,x_{r-2}\in \mathbb{F}_{q}^{*}$, if $x_{i}=x_{j}$ for some $1\leq i\neq j\leq r-2$, then $V(x_{1},\cdots,x_{r-2})=0$. 
	If $\eta^{-1}+S_{1,r-2}+x_{t}=0$ for some $1\leq t\leq r-2$, then $\prod\limits_{t=1}^{r-2}\left(\eta^{-1}+S_{1,r-2}+x_{t}\right)=0$ and we have done. Thus, let $x_{r-1}=\eta^{-1}+S_{1,r-2}$
	, then $x_{1},\cdots,x_{r-1}$ are pairwise distinct. If $\tilde{f}_{3}=0$, we are done. So we assume that $\tilde{f}_{3}\neq 0$.
	
	Since $\boldsymbol{a}=(a_{0},a_{1},\cdots,a_{r})$ is a deep hole syndrome of $TRS_{k}(\mathbb{F}_{q}^{*},k-1,\eta)$, in other words, $\boldsymbol{a}^T=(a_{0},a_{1},\cdots,a_{r})^T$ can not be expressed as a linear combination of any $r$ columns of $H$ over $\mathbb{F}_{q}$. From Theorem~\ref{The:8}, we have 
    $$\sum\limits_{j=0}^{r-1}a_{j}c_{r-j}-\eta\sum\limits_{j=0}^{r-1}a_{j}\sum\limits_{\max\{0,j-1\}\leq w\leq j}c_{r-w}\Lambda_{1+w-j}^{\prime}\neq -a_{r}$$ 
    for any distinct elements $x_{1},x_{2},\cdots,x_{r}\in \mathbb{F}_{q}^{*}$. From Equation~(\ref{Equ:3.6}) and $\eta^{-1}+ S_{1,r-1}=0$, we obtain that 
	\begin{equation*}
	\sum\limits_{j=0}^{r-1}a_{j}c_{r-j}-\eta\sum\limits_{j=0}^{r-1}a_{j}\sum\limits_{\max\{0,j-1\}\leq w\leq j}c_{r-w}\Lambda_{1+w-j}^{\prime}+ a_{r}=\tilde{f}_{3}x_{r}^2+\tilde{g}.
	\end{equation*}
Thus, $\tilde{f}_{3}x_{r}^2+\tilde{g}\neq 0$ for any $x_{r}\in \mathbb{F}_{q}^{*}\backslash \left\{x_{1},\cdots,x_{r-1}\right\}$. Since $\mathbb{F}_{q}$ has characteristic $2$ and $(2,q-1)=1$, the equation $\tilde{f}_{3}X^2+\tilde{g}=0$ has a unique solution $X\in \mathbb{F}_{q}$. Thus, we can deduce that the solution can only be one of $x_{1},\cdots,x_{r-1},0$, that is \begin{equation*}
	\tilde{g}\left(\tilde{f}_{3}(\eta^{-1}+S_{1,r-2})^2+\tilde{g}\right)\prod\limits_{i=1}^{r-2}(\tilde{f}_{3}x_{i}^2+\tilde{g})=0.
	\end{equation*}
\end{proof}

The following two lemmas characterize certain types of syndromes.

\begin{lemma}\label{Lem:3.1.2}
	Suppose $q=2^m\geq 8$ and $\frac{3q+2\sqrt{q}-10}{4}<k\leq q-5$. Let $a_{r-2}+\eta a_{r-1}=0,a_{r-2}\neq 0$ and $a_{r}\in\mathbb{F}_{q}$, then $\boldsymbol{a}=(0,\cdots,0,a_{r-2},a_{r-1},a_{r})\in\mathbb{F}_{q}^{r+1}$ is not a deep hole syndrome of $TRS_{k}(\mathbb{F}_{q}^{*},k-1,\eta)$.
\end{lemma}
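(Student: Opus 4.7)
The plan is to exhibit pairwise distinct elements $x_1,\dots,x_r\in\mathbb{F}_q^*$ at which the right-hand side of Equation~(\ref{Equ:3.6}) vanishes, thereby forcing $\boldsymbol{a}$ to fail the deep-hole-syndrome criterion of Theorem~\ref{The:8}. First I would specialize $x_{r-1}=\eta^{-1}+S_{1,r-2}$ exactly as in the setup of Lemma~\ref{Lem:3.1.1}; in characteristic $2$ this forces $S_{1,r-1}=\eta^{-1}$. A direct expansion of $f_3$ for the syndrome $(0,\dots,0,a_{r-2},a_{r-1},a_r)$ retains only the $j=r-2,r-1$ terms, giving
\[
\tilde f_3 \;=\; \eta a_{r-2}S_{1,r-1}+\eta a_{r-1} \;=\; a_{r-2}+\eta a_{r-1} \;=\; 0
\]
by hypothesis. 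Consequently the quadratic $\tilde f_3 x_r^2+\tilde g$ in Equation~(\ref{Equ:3.6}) collapses to the constant $\tilde g=\tilde f_1+a_r$, independent of $x_r$, so any $x_r\in\mathbb{F}_q^*\setminus\{x_1,\dots,x_{r-1}\}$ will do once $(x_1,\dots,x_{r-2})$ has been arranged with $\tilde f_1+a_r=0$.

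The task therefore reduces to producing $(x_1,\dots,x_{r-2})\in(\mathbb{F}_q^*)^{r-2}$, pairwise distinct and with $\eta^{-1}+S_{1,r-2}\notin\{0,x_1,\dots,x_{r-2}\}$, for which $\tilde f_1=a_r$. Using the characteristic-$2$ Newton identity $s_1s_2+s_3=s_1^3+p_3$ (where $s_i=S_{i,r-2}$ and $p_k=\sum_{i=1}^{r-2}x_i^k$), the constraint simplifies to
\[
a_{r-2}\bigl(\eta p_3+\eta s_1^3+s_1^2+\eta^{-1}s_1\bigr)+a_r=0.
\]
I would count the number $N$ of admissible tuples by additive character orthogonality,
\[
qN=\sum_{b\in\mathbb{F}_q}\sum_{(x_i)}\chi_b\bigl(\tilde f_1+a_r\bigr),
\]
where the outer sum is restricted by inclusion--exclusion to pairwise distinct nonzero tuples meeting the two side conditions on $\eta^{-1}+S_{1,r-2}$. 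Isolating the main term $b=0$ yields approximately $(q-1)(q-2)\cdots(q-r+2)/q$; proving $N>0$ is equivalent to dominating the $b\neq 0$ terms by this main term.

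The main obstacle is controlling $\sum_{(x_i)}\chi_b(\tilde f_1)$ for $b\neq 0$, since $\tilde f_1$ is a multivariable polynomial not in the direct scope of Proposition~\ref{prop3}. My plan is to slice: fix $x_1,\dots,x_{r-3}$ and view $\tilde f_1$ as a quadratic in $y=x_{r-2}$. A short expansion in characteristic $2$ yields
\[
\tilde f_1/a_{r-2}=(\eta A+1)y^2+(\eta A^2+\eta^{-1})y+\bigl(\eta A^3+\eta B+A^2+\eta^{-1}A\bigr),
\]
where $A=\sum_{i=1}^{r-3}x_i$ and $B=\sum_{i=1}^{r-3}x_i^3$. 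Proposition~\ref{prop2}(iii) then collapses the inner sum over $y\in\mathbb{F}_q$ to $\pm q$ on the sporadic locus $b(\eta A+1)+b^2(\eta A^2+\eta^{-1})^2=0$ and to $0$ elsewhere. The residual outer sum over $(x_1,\dots,x_{r-3})$, together with the inclusion--exclusion corrections for the excluded tuples and the isolated locus $\eta A+1=0$, is handled via Propositions~\ref{prop3} and~\ref{prop5} at a total cost $O(q^{(r-3)/2})$ relative to the main term of order $q^{r-3}$. Matching the resulting inequality against the parameter regime, one sees that the hypothesis $k>\tfrac{3q+2\sqrt{q}-10}{4}$, equivalently $r<\tfrac{q-2\sqrt{q}+2}{4}$, is exactly the threshold needed for the main term to dominate and conclude $N>0$.
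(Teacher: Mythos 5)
Your opening reduction is correct and is a nice observation: with $x_{r-1}=\eta^{-1}+S_{1,r-2}$ one indeed gets $\tilde f_3=\eta a_{r-2}S_{1,r-1}+\eta a_{r-1}=a_{r-2}+\eta a_{r-1}=0$, so the expression in Equation~(\ref{Equ:3.6}) collapses to the constant $\tilde g=\tilde f_1+a_r$, and your char-$2$ simplification $\tilde f_1/a_{r-2}=\eta p_3+\eta s_1^3+s_1^2+\eta^{-1}s_1$ checks out. But this specialization over-constrains the problem: you are now forced to solve the single equation $\tilde f_1=a_r$ \emph{exactly} in only $r-2$ free variables, and this can be impossible. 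The case $r=3$ (i.e.\ $k=q-5$, which is inside the lemma's range) is a concrete counterexample to the method: there is one free variable $x_1$, the cubic terms cancel, and the constraint becomes $x_1^2+\eta^{-1}x_1+a_r/a_{r-2}=0$, which has no root in $\mathbb{F}_q$ whenever ${\rm Tr}(\eta^2 a_r/a_{r-2})=1$. So for half of the admissible $a_r$ your construction produces no tuple at all, while the lemma must still be proved for those $a_r$. Your own counting machinery reflects this: the main term is $|D|/q\approx q^{r-3}$, which is $O(1)$ for $r=3$ and cannot dominate anything; even for moderate $r$ the error from slicing is of order $q\cdot(2\sqrt q)^{r-3}$ (four roots of the quartic in $A=\sum x_i$ for each $b\neq 0$, times a product-of-Weil-bounds estimate for the hyperplane sum over $B=\sum x_i^3$), which does not fall below $q^{r-3}$ throughout the range $3\leq r<\frac{q-2\sqrt q+2}{4}$. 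The claimed $O(q^{(r-3)/2})$ relative error is not justified, and the asserted match with the hypothesis on $k$ does not materialize.

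The paper's proof avoids exactly this trap by \emph{not} pinning $x_{r-1}$ to $\eta^{-1}+S_{1,r-2}$. Keeping both $x_{r-1}$ and $x_r$ free and substituting $X=x_{r-1}+x_r$, $Y=\beta_0+x_r$ into Equation~(\ref{Equ:4.6}), the expression becomes $\eta XY(X+Y)+h$ with $h=g_2\beta_0+g_1+a_r$ (your $\tilde g$ equals this same $h$). One then only needs $h\neq 0$, which is arranged by a rescaling $x_i\mapsto\gamma x_i$ using that the relevant one-variable polynomial has degree at most $3<q-1$; with $h\neq 0$ the surface $\eta XY(X+Y)=h$ has at least $q-2-2\sqrt q$ points by a cubic character-sum bound, comfortably exceeding the $4r-4$ points excluded by the distinctness and nonvanishing conditions. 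If you want to salvage your route, you would at minimum need a fallback for the case where $\tilde f_1+a_r\neq 0$ for every admissible $(x_1,\dots,x_{r-2})$ — and that fallback is essentially the paper's two-variable argument.
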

\begin{proof}
With loss of generality, we suppose $a_{r-2}=1$, then $a_{r-1}=\eta^{-1}$.  From Theorem~\ref{The:8}, $\boldsymbol{a}=(0,\cdots,0,1,\eta^{-1},a_{r})^T$ can not be expressed as a linear combination of any $r$ columns of H over $\mathbb{F}_{q}$, if and only if
	for each $r$-subset $\{x_{1},\cdots,x_{r}\}\in \mathbb{F}_{q}^{*}$,
    it all holds $$\sum\limits_{j=0}^{r-1}a_{j}c_{r-j}+\eta\sum\limits_{j=0}^{r-1}a_{j}\sum\limits_{\max\{0,j-1\}\leq w\leq j}c_{r-w}\Lambda_{1+w-j}^{\prime}\neq a_{r}.$$

	Let $\beta_{0}=S_{1,r-2}+\eta^{-1},X=x_{r-1}+x_{r}$ and $Y=\beta_{0}+x_{r}$. From Equation~(\ref{Equ:4.6}), we have
	\begin{equation*}
	\begin{aligned}	&\sum\limits_{j=0}^{r-1}a_{j}c_{r-j}+\eta\sum\limits_{j=0}^{r-1}a_{j}\sum\limits_{\max\{0,j-1\}\leq w\leq j}c_{r-w}\Lambda_{1+w-j}^{\prime}+a_{r}\\
	=&(g_{3}+g_{4}x_{r})(x_{r-1}+x_{r})^2+(S_{1,r-2}+\eta^{-1}+x_{r})(g_{3}+g_{4}x_{r})(x_{r-1}+x_{r})\\
	+&(g_{4}(S_{1,r-2}+\eta^{-1})+g_{3})x_{r}^2+g_{2}(S_{1,r-2}+\eta^{-1})+g_{1}+a_{r}\\
	=&(g_{3}+g_{4}(Y+\beta_{0}))X^2+(g_{3}+g_{4}(Y+\beta_{0}))XY+(g_{4}\beta_{0}+g_{3})(Y+\beta_{0})^2+g_{2}\beta_{0}+g_{1}+a_{r}\\
	\stackrel{(1)}{=}&\eta XY(X+Y)+g_{2}\beta_{0}+g_{1}+a_{r},
	\end{aligned}
	\end{equation*}
	where $(1)$ follows from 
    $$\left\{
    \begin{array}{l}
       g_{4}=\eta\sum\limits_{j=0}^{r-1}a_{j}S_{r-j-2,r-2}=\eta \\ 
       g_{3}=\eta\sum\limits_{j=0}^{r-1}a_{j}S_{r-j-1,r-2}=\eta S_{1,r-2}+1=\eta\beta_{0}\\
       g_{2}=\eta\sum\limits_{j=0}^{r-1}a_{j}S_{r-j,r-2}=\eta S_{2,r-2}+S_{1,r-2}\\
       g_{1}=\eta\sum\limits_{j=0}^{r-1}a_{j}S_{r-j+1,r-2}=\eta S_{3,r-2}+S_{2,r-2}\\
       g_{4}\beta_{0}+g_{3}=0
    \end{array}\right..$$
	
	If $g_{2}\beta_{0}+g_{1}+a_{r}=0$, then let \begin{equation*}
	\begin{aligned}
	\tilde{g}(x)&=(\eta S_{2,r-2}x^2+S_{1,r-2}x)(S_{1,r-2}x+\eta^{-1})+(\eta S_{3,r-2}x^3+S_{2,r-2}x^2)+a_{r}\\
	&=\eta(S_{3,r-2}+S_{1,r-2}S_{2,r-2})x^3+S^2_{1,r-2}x^2+\eta^{-1}S_{1,r-2}x+a_{r}
	\end{aligned}
	\end{equation*}
	Since $S_{1,r-2}=S_{1,r-3}+x_{r-2}$ and $q-1>r-2$, there exist $x_{r-2}\in \mathbb{F}_{q}^{*}\backslash\{x_{1},\cdots,x_{r-3},S_{1,r-3}\}$ such that $x_{1},\cdots,x_{r-2}\in\mathbb{F}_{q}^{*}$ are pairwise distinct and  $S^2_{1,r-2}\neq 0$. Thus, $\tilde{g}(x)\neq 0$. Since $\deg(\tilde{g}(x))\leq 3$ and $q-1\geq 4$, there exist $\gamma\in \mathbb{F}_{q}^{*}$ such that $\tilde{g}(\gamma)\neq 0$. Let $\tilde{x}_{i}=\gamma\cdot x_{i},i=1,\cdots,r-2$, we have 
	\begin{equation*}
	\begin{aligned}
	\tilde{g}_{2}\tilde{\beta}_{0}+\tilde{g}_{1}+a_{r}&=(\eta S_{2}(\tilde{x}_{1},\cdots,\tilde{x}_{r-2})+S_{1}(\tilde{x}_{1},\cdots,\tilde{x}_{r-2}))\cdot(S_{1}(\tilde{x}_{1},\cdots,\tilde{x}_{r-2})+\eta^{-1})\\
	&+\eta S_{3}(\tilde{x}_{1},\cdots,\tilde{x}_{r-2})+S_{2}(\tilde{x}_{1},\cdots,\tilde{x}_{r-2})+a_{r}=\tilde{g}(\gamma)\neq 0.
	\end{aligned}
	\end{equation*}
   Thus, let us assume that  $g_{2}\beta_{0}+g_{1}+a_{r}\neq 0$. Let $h=g_{2}\beta_{0}+g_{1}+a_{r}$ and $F(X,Y)=\eta XY(X+Y)+h$, then we only need to show that the equation
	$F(X,Y)$
	has a solution $(X, Y) \in \mathbb{F}^2_q$ with $X+Y+\beta_{0}\neq Y+\beta_{0}\in \mathbb{F}_q \backslash \mathcal{S}$, where $\mathcal{S}=\{x_{1},\cdots,x_{r-2},0\}$. For each $\beta\in\mathcal{S}$, we have $N(F(X,X+\beta_{0}+\beta))\leq 2, N(F(X,\beta_{0}+\beta))\leq 2$ and $N(F(0,Y))=0$. Therefore, we just prove that $N(F(X,Y))>4\left|\mathcal{S}\right|=4r-4$. 
	
	Let $\chi(x)$ be the canonical additive character of $\mathbb{F}_{q}$. Form Proposition~(\ref{prop2}) $(iii)$, we have 
	\begin{equation*}
	\begin{aligned}
	&N(F(X,Y))=\frac{1}{q}\sum\limits_{X,Y,z\in \mathbb{F}_{q}}\chi(zF(X,Y))\\
    &=\frac{1}{q}\sum\limits_{X,Y\in \mathbb{F}_{q}}\chi(0\cdot F(X,Y))+\frac{1}{q}\sum\limits_{X\in \mathbb{F}_{q},z\in \mathbb{F}_{q}^{*}}\chi(zh)+\frac{1}{q}\sum\limits_{Y,z\in \mathbb{F}_{q}^{*}}\sum\limits_{X\in \mathbb{F}_{q}}\chi(z\cdot F(X,Y))\\
	&=q-1+\frac{1}{q}\sum\limits_{Y,z\in \mathbb{F}_{q}^{*}}\sum\limits_{X\in \mathbb{F}_{q}}\chi(\eta zYX^2+\eta zY^2X+zh)=q-1+\sum\limits_{Y,z\in \mathbb{F}_{q}^{*}\atop z=\eta^{-1}Y^{-3}}\chi(zh)\\
	&=q-1+\sum\limits_{Y\in \mathbb{F}_{q}^{*}}\chi(\frac{h}{\eta Y^3})=q-1+\sum\limits_{Y\in \mathbb{F}_{q}^{*}}\chi(\eta^{-1}hY^3)=q-2+\sum\limits_{Y\in \mathbb{F}_{q}}\chi(\eta^{-1}hY^3)
	\end{aligned}
	\end{equation*}
		From\cite[Corollary 5.31, Theorem 5.32]{lidl1997finite}, we have \begin{equation*}
	\left|\sum\limits_{Y\in \mathbb{F}_{q}}\chi(\eta^{-1}hY^3)\right|\leq 2\sqrt{q}.
	\end{equation*}
 Since $r<\frac{q+2-2\sqrt{q}}{4}$, we have $N(F(X,Y))\geq q-2-2\sqrt{q}>4r-4$. Thus, there exist $x_{r-1}\neq x_{r}\in \mathbb{F}_{q}\backslash \mathcal{S}$ such that \begin{equation*}
 \sum\limits_{j=0}^{r-1}a_{j}c_{r-j}+\eta\sum\limits_{j=0}^{r-1}a_{j}\sum\limits_{\max\{0,j-1\}\leq w\leq j}c_{r-w}\Lambda_{1+w-j}^{\prime}=a_{r}.
 \end{equation*}
 Therefore, $\boldsymbol{a}=(0,\cdots,0,1,\eta^{-1},a_{r})\in\mathbb{F}_{q}^{r+1}$ is not a deep hole syndrome of $TRS_{k}(\mathbb{F}_{q}^{*},k-1,\eta)$
 \end{proof}

\begin{lemma}\label{Lem:3.1.3}
	Suppose $q=2^m\geq 8$ and $\frac{3q+2\sqrt{q}-8}{4}<k\leq q-5$. Let $a_{0},a_{1}\in \mathbb{F}_{q}$ are not all zero elements, then $\boldsymbol{a}=(a_{0},a_{1},0,\cdots,0)\in\mathbb{F}_{q}^{r+1}$ is not a deep hole syndrome of $TRS_{k}(\mathbb{F}_{q}^{*},k-1,\eta)$.
\end{lemma}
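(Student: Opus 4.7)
My plan follows the template of Lemma~\ref{Lem:3.1.2}. By Theorem~\ref{The:8} and Proposition~\ref{Prop:3.2}, it suffices to exhibit distinct $x_1, \ldots, x_r \in \mathbb{F}_q^*$ for which the left-hand side of~(\ref{Equ:4.6}) vanishes. Substituting $\boldsymbol{a} = (a_0, a_1, 0, \ldots, 0)$, one obtains $g_1 = g_2 = 0$, $g_3 = \eta a_1 P$, $g_4 = \eta P (a_0 + a_1 \sigma)$, and $A = g_3 + g_4 \beta_0$, where $P = \prod_{i=1}^{r-2} x_i$, $\sigma = \sum_{i=1}^{r-2} 1/x_i$, and $\beta_0 = \eta^{-1} + S_{1,r-2}$. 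After setting $X = x_{r-1} + x_r$, the expression collapses to
\begin{equation*}
    \mathcal{E}(X, x_r) \;=\; (g_3 + g_4 x_r)\, X\, (X + \beta_0 + x_r) \;+\; A\, x_r^2.
\end{equation*}

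I would first fix $x_1, \ldots, x_{r-2}$ as $r-2$ distinct nonzero elements avoiding the codimension-one locus $\{A = 0\} \cup \{g_4 = 0\}$, which is routine for $q$ large enough relative to $r$. Then I would count $N(\mathcal{E}) := \#\{(X, x_r) \in \mathbb{F}_q^2 : \mathcal{E}(X, x_r) = 0\}$ via character sums, mirroring the strategy of Lemma~\ref{Lem:3.1.2}. Since $\mathcal{E}$ is quadratic in $X$ for each fixed $x_r$, Proposition~\ref{prop2}(iii) forces the inner sum $\sum_X \chi(z\mathcal{E})$ to vanish unless $z(g_3 + g_4 x_r)\bigl[1 + z(g_3 + g_4 x_r)(\beta_0 + x_r)^2\bigr] = 0$; apart from finitely many exceptional $x_r$, this pins $z$ to the single value $z_0 = 1/[(g_3 + g_4 x_r)(\beta_0 + x_r)^2]$. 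Substituting $Y = \beta_0 + x_r$ reduces the double sum to $q \sum_Y \chi(R(Y))$ with
\begin{equation*}
    R(Y) \;=\; \frac{A\, (Y + \beta_0)^2}{(A + g_4 Y)\, Y^2},
\end{equation*}
a rational function of degree $3$. The Weil bound for character sums over rational functions then gives $|\sum_Y \chi(R(Y))| \leq 2\sqrt{q}$, so $N(\mathcal{E}) \geq q - 2\sqrt{q} - 3$.

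Finally, I would subtract the pairs violating the distinctness or nonzero constraints on $x_{r-1}, x_r$. The relevant linear conditions are $X = 0$ (giving $x_{r-1} = x_r$), $X = x_i + x_r$ (giving $x_{r-1} = x_i$ for $i \in \{0, 1, \ldots, r-2\}$, with the convention $x_0 = 0$), and $x_r = x_i$ (for $i \in \{0, 1, \ldots, r-2\}$). A direct substitution shows that in each case the resulting univariate polynomial has degree at most $2$, so each line contributes at most $2$ bad points, for a total of at most $4r - 3$. The hypothesis $\frac{3q + 2\sqrt{q} - 8}{4} < k \leq q - 5$, equivalent to $r < \frac{q - 2\sqrt{q}}{4}$, yields $q - 2\sqrt{q} - 3 > 4r - 3$, so a valid $(x_{r-1}, x_r)$ exists and $\boldsymbol{a}$ fails the deep-hole criterion of Theorem~\ref{The:8}.

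The main obstacle is justifying the Weil bound for the character sum over $R(Y)$: one must confirm that $R$ is not Artin-Schreier-equivalent to a function of lower degree under the specific specializations of $a_0, a_1$ and of $x_1, \ldots, x_{r-2}$, so that the bound $|\sum_Y \chi(R(Y))| \leq 2\sqrt{q}$ really does hold with the stated constant. A related subtlety is the degenerate subcase $a_1 = 0$ (with $a_0 \neq 0$), in which $g_3 = 0$ and $\mathcal{E}$ factors as $g_4 x_r X(X + \beta_0 + x_r)$; the required solution is then obtained directly by taking $x_{r-1} = \beta_0$ (i.e., $X = \beta_0 + x_r$) with the tuple $x_1, \ldots, x_{r-2}$ chosen so that $\beta_0$ is nonzero and distinct from each $x_i$.
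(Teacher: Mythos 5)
Your overall route is the same as the paper's: reduce via Theorem~\ref{The:8} and Equation~(\ref{Equ:4.6}) to finding distinct $x_1,\dots,x_r\in\mathbb{F}_q^*$ annihilating the syndrome expression, fix $x_1,\dots,x_{r-2}$ generically, evaluate the quadratic-in-$X$ inner character sum by Proposition~\ref{prop2}(iii), reduce to a one-variable sum of a degree-$3$ rational function, and win by $r<\frac{q-2\sqrt q}{4}$ after discarding at most $O(r)$ bad points on lines. The one place where the two arguments genuinely diverge is exactly the step you flag as the ``main obstacle'': the paper does not invoke a general Weil bound for $\sum_Y\chi(R(Y))$ but instead pre-shifts coordinates ($X\mapsto x_{r-1}+x_r+\beta_0+b$, $Y\mapsto x_r+b$ with $b=g_3/g_4$) so that, after using $\chi(u^2)=\chi(u)$ to collapse the double pole at $Y=0$, the sum becomes the literal Kloosterman sum $K\bigl(\chi;\tfrac{\beta_0 b}{(\beta_0+b)^2},\tfrac{b^2}{(\beta_0+b)^2}\bigr)$ with both parameters nonzero (since $g_3\neq0$, $\beta_0\neq 0$, $A\neq0$), to which Proposition~\ref{prop5} applies directly with the constant $2\sqrt q$. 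Your $R(Y)=\frac{A(Y+\beta_0)^2}{(A+g_4Y)Y^2}$ is the same sum in unshifted coordinates; the naive pole count would only give $3\sqrt q$, and the characteristic-$2$ Artin--Schreier reduction of the order-$2$ pole is precisely the missing computation, so this step must be carried out, not just cited.

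There is also a concrete error in your degenerate subcase $a_1=0$, $a_0\neq0$. There $g_3=0$ but $A=g_4\beta_0\neq0$ (you chose the $x_i$ to avoid $A=0$), so $\mathcal{E}$ does \emph{not} factor as $g_4x_rX(X+\beta_0+x_r)$: the term $Ax_r^2$ survives. With $x_{r-1}=\beta_0$, i.e.\ $X=\beta_0+x_r$, you get $\mathcal{E}=g_4\beta_0x_r^2\neq0$, so your proposed witness fails. The correct factorization is $\mathcal{E}=g_4x_r\bigl(X^2+(\beta_0+x_r)X+\beta_0x_r\bigr)=g_4x_r(X+\beta_0)(X+x_r)$, and since $X=x_r$ would force $x_{r-1}=0$, the right choice is $X=\beta_0$, i.e.\ $x_{r-1}+x_r=\beta_0$, equivalently $\sum_{i=1}^r x_i=\eta^{-1}$ --- which is exactly the paper's condition $1+\eta c_1=0$ for this subcase. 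Finally, your tally of bad points should be $2(2r-1)=4r-2$ rather than $4r-3$ (one line $X=0$ plus $r-1$ lines of each of the other two types); this does not affect the conclusion but should be stated consistently with the inequality you derive from $r<\frac{q-2\sqrt q}{4}$.
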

\begin{proof}
	From Theorem~\ref{The:8} and Proposition~\ref{Prop:3.2}, $\boldsymbol{a}=(a_{0},a_{1},0,\cdots,0)$ is not a deep hole syndrome of $TRS_{k}(\mathbb{F}_{q}^{*},k-1,\eta)$, if and only if
	there exists $r$-subset $\left\{x_{1},\cdots,x_{r}\right\}\subseteq \mathbb{F}_{q}^{*}$ such that
    $$\sum\limits_{j=0}^{r-1}a_{j}c_{r-j}+\eta\sum\limits_{j=0}^{r-1}a_{j}\sum\limits_{\max\{0,j-1\}\leq w\leq j}c_{r-w}\Lambda_{1+w-j}^{\prime}= a_{r}.$$
    where $a_{0},a_{1}\in \mathbb{F}_{q}$ are not all zero elements.

    If $a_{1}=0$, then $\sum\limits_{j=0}^{r-1}a_{j}c_{r-j}+\eta\sum\limits_{j=0}^{r-1}a_{j}\sum\limits_{\max\{0,j-1\}\leq w\leq j}c_{r-w}\Lambda_{1+w-j}^{\prime}+a_{r}=a_{0}c_{r}(1+\eta c_{1})$. Thus, we can choose $r$-subset $\left\{x_{1},\cdots,x_{r}\right\}\subseteq \mathbb{F}_{q}^{*}$ such that $1+\eta c_{1}=0$. In other words, $(a_{0},0,0,\cdots,0)\in\mathbb{F}_{q}^{r+1}$ is not a deep hole syndrome of $TRS_{k}(\mathbb{F}_{q}^{*},k-1,\eta)$.
	
   If $a_{1}\neq 0$, let\begin{equation*}
	    \left\{
        \begin{array}{l}
             g_{1}=\eta\sum\limits_{j=0}^{r-1}a_{j}S_{r-j+1,r-2}=0\\
             g_{2}=\eta\sum\limits_{j=0}^{r-1}a_{j}S_{r-j,r-2}=0\\
             g_{3}=\eta\sum\limits_{j=0}^{r-1}a_{j}S_{r-j-1,r-2}=a_{1}\eta S_{r-2,r-2}\\
             g_{4}=\eta\sum\limits_{j=0}^{r-1}a_{j}S_{r-j-2,r-2}=a_{0}\eta S_{r-2,r-2}+a_{1}\eta S_{r-3,r-2}
        \end{array}\right.
	\end{equation*}
    By the similar proof of Lemma~\ref{Lem:3.1.2}, we can choose $x_{1},\cdots,x_{r-2}\in \mathbb{F}_{q}^{*}$ such that $g_{4}\neq 0,S_{1,r-2}+\eta^{-1}\neq 0$ and $g_{3}+g_{4}(S_{1,r-2}+\eta^{-1})\neq 0$. Let $\beta_{0}=S_{1,r-2}+\eta^{-1},b=g_{4}^{-1}g_{3},X=x_{r-1}+x_{r}+\beta_{0}+b$ and $Y=x_{r}+b$. From Equation~(\ref{Equ:4.6}), we have
	\begin{equation*}
	\begin{aligned}	&\sum\limits_{j=0}^{r-1}a_{j}c_{r-j}+\eta\sum\limits_{j=0}^{r-1}a_{j}\sum\limits_{\max\{0,j-1\}\leq w\leq j}c_{r-w}\Lambda_{1+w-j}^{\prime}+a_{r}\\
	&=(g_{3}+g_{4}x_{r})(x_{r-1}+x_{r})^2+(S_{1,r-2}+\eta^{-1}+x_{r})(g_{3}+g_{4}x_{r})(x_{r-1}+x_{r})\\
	&+(g_{4}(S_{1,r-2}+\eta^{-1})+g_{3})x_{r}^2+g_{2}(S_{1,r-2}+\eta^{-1})+g_{1}+a_{r}\\
	&=g_{4}XY(X+Y)+(g_{4}\beta_{0}+g_{3})XY+(g_{4}\beta_{0}+g_{3})b^2.
	\end{aligned}
	\end{equation*}

Let $F(X,Y)=g_{4}XY(X+Y)+(g_{4}\beta_{0}+g_{3})XY+(g_{4}\beta_{0}+g_{3})b^2$, then we only need to show that the equation
$F(X,Y)$
has a solution $(X, Y) \in \mathbb{F}^2_q$ with $X+Y+\beta_{0}\neq Y+b\in \mathbb{F}_q \backslash \mathcal{S}$, where $\mathcal{S}=\{x_{1},\cdots,x_{r-2},0\}$. For each $\beta\in\mathcal{S}$, we have $N(F(X,X+\beta_{0}+\beta))\leq 2, N(F(X,b+\beta))\leq 2$ and $N(F(\beta_{0}+b,Y))\leq 2$. Therefore, we just prove that $N(F(X,Y))>4\left|\mathcal{S}\right|+2=4(r-1)+2=4r-2$.

Let $\chi(x)$ be the canonical additive character of $\mathbb{F}_{q}$. Form Proposition~(\ref{prop2}) $(iii)$, we have 
\begin{equation*}
\begin{aligned}
&N(F(X,Y))=\frac{1}{q}\sum\limits_{X,Y,z\in \mathbb{F}_{q}}\chi(zF(X,Y))\\
&=\frac{1}{q}\sum\limits_{X,Y\in \mathbb{F}_{q}}\chi(0\cdot F(X,Y))+\frac{1}{q}\sum\limits_{Y\in \mathbb{F}_{q}\atop z\in \mathbb{F}_{q}^{*}}\sum\limits_{X\in \mathbb{F}_{q}}\chi(g_{4}YzX^2+g_{4}Yz(Y+\beta_{0}+b)X+z(g_{4}\beta_{0}+g_{3})b^2)\\
&=q+\sum\limits_{Y\in \mathbb{F}_{q},z\in \mathbb{F}_{q}^{*}\atop Y=g_{4}z(Y+\beta_{0}+b)^2Y^2}\chi(z(g_{4}\beta_{0}+g_{3})b^2)=q-1+\sum\limits_{Y\in \mathbb{F}_{q}^{*}\atop Y\neq \beta_{0}+b}\chi(\frac{(\beta_{0}+b)b^2}{(Y+\beta_{0}+b)^2Y})\\
&=q-1+\sum\limits_{Y\in \mathbb{F}_{q}^{*}\atop Y\neq \beta_{0}+b}\chi\left(\frac{(\beta_{0}+b)b^2}{Y^2(Y+\beta_{0}+b)}\right)=q-1+\sum\limits_{Y\in \mathbb{F}_{q}^{*}\atop Y\neq (\beta_{0}+b)^{-1}}\chi\left(\frac{(\beta_{0}+b)b^2Y^3}{(\beta_{0}+b)Y+1}\right)\\
&=q-1+\sum\limits_{Y\in \mathbb{F}_{q}^{*}\atop Y\neq 1}\chi\left(\frac{b^2}{(\beta_{0}+b)^2}(Y^2+Y+1+Y^{-1})\right)=q-2+\sum\limits_{Y\in \mathbb{F}_{q}^{*}}\chi\left(\frac{b^2}{(\beta_{0}+b)^2}(Y^2+Y+Y^{-1}+1)\right)\\
&=q-2+\chi\left(\frac{b^2}{(\beta_{0}+b)^2}\right)\sum\limits_{Y\in \mathbb{F}_{q}^{*}}\chi\left(\frac{b^2}{(\beta_{0}+b)^2}Y^2\right)\chi\left(\frac{b^2}{(\beta_{0}+b)^2}(Y+Y^{-1})\right)\\
&=q-2+\chi\left(\frac{b}{\beta_{0}+b}\right)\sum\limits_{Y\in \mathbb{F}_{q}^{*}}\chi\left(\frac{b}{\beta_{0}+b}Y\right)\chi\left(\frac{b^2}{(\beta_{0}+b)^2}(Y+Y^{-1})\right)\\
&=q-2+\chi\left(\frac{b}{\beta_{0}+b}\right)\sum\limits_{Y\in \mathbb{F}_{q}^{*}}\chi\left(\frac{\beta_{0}b}{(\beta_{0}+b)^2}Y+\frac{b^2}{(\beta_{0}+b)^2}Y^{-1}\right)
\end{aligned}
\end{equation*}

From Proposition~\ref{prop5}, we have \begin{equation*}
\left|N(F(X,Y))-(q-2)\right|= \left|\chi\left(\frac{b}{\beta_{0}+b}\right)\sum\limits_{Y\in \mathbb{F}_{q}^{*}}\chi\left(
\frac{\beta_{0}b}{(\beta_{0}+b)^2}Y+\frac{b^2}{(\beta_{0}+b)^2}Y^{-1}
\right)\right|\leq 2\sqrt{q}.
\end{equation*}
Thus, $N(F(X,Y))\geq q-2-2\sqrt{q}$.  Since $r<\frac{q-2\sqrt{q}}{4}$, we have $N(F(X,Y))\geq    
  q-2-2\sqrt{q}>4r-2$. Thus, there exist $x_{r-1}\neq x_{r}\in \mathbb{F}_{q}\backslash \mathcal{S}$ such that \begin{equation*}
\sum\limits_{j=0}^{r-1}a_{j}c_{r-j}+\eta\sum\limits_{j=0}^{r-1}a_{j}\sum\limits_{\max\{0,j-1\}\leq w\leq j}c_{r-w}\Lambda_{1+w-j}^{\prime}=0.
\end{equation*}
Thus,
$\boldsymbol{a}=(a_{0},a_{1},0,\cdots,0)$ is not a deep hole syndrome of $TRS_{k}(\mathbb{F}_{q}^{*},k-1,\eta)$
	
\end{proof}

We will need the following form of combinatorial nullstellensatz. 
\begin{lemma}\cite{hiren2019mds}\label{Lem:3.1.4}
	Let $\mathbb{F}_{q}[X_{1},\cdots,X_{n}]$ be a polynomial ring in $n$ variables over the finite field  $\mathbb{F}_{q}$. Let $S\subseteq \mathbb{F}_{q}$ is a finite set. Suppose $P(X_{1},\cdots,X_{n})\in \mathbb{F}_{q}[X_{1},\cdots,X_{n}]$ vanishes on $S\times\cdots\times S$. If $\deg_{X_{i}}(P)<\left|S\right|$ for each $1\leq i\leq n$, then $P\equiv 0$ in $\mathbb{F}_{q}[X_{1},\cdots,X_{n}]$.
\end{lemma}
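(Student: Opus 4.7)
The plan is to prove this by induction on the number of variables $n$. This is the standard multivariate generalization of the fact that a univariate polynomial with more roots than its degree must be identically zero, and it is exactly the form of Alon's Combinatorial Nullstellensatz lemma, so the argument should follow the textbook inductive route.

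For the base case $n=1$, I would simply observe that $P(X_1)$ is a univariate polynomial over $\mathbb{F}_q$ with $\deg(P) < |S|$ that vanishes on the set $S$ of size $|S|$. Since a nonzero polynomial over a field has at most as many roots as its degree, $P$ must be the zero polynomial.

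For the inductive step, assuming the statement holds for $n-1$ variables, I would write $P$ as a polynomial in $X_n$ with coefficients in $\mathbb{F}_q[X_1,\ldots,X_{n-1}]$:
\[
P(X_1,\ldots,X_n) = \sum_{i=0}^{d} P_i(X_1,\ldots,X_{n-1})\,X_n^{i},
\]
where $d = \deg_{X_n}(P) < |S|$. Fix any $(a_1,\ldots,a_{n-1}) \in S^{n-1}$ and consider the univariate polynomial $Q(X_n) = P(a_1,\ldots,a_{n-1},X_n) = \sum_{i=0}^{d} P_i(a_1,\ldots,a_{n-1})\,X_n^{i}$. Since $P$ vanishes on $S^n$, we have $Q(b)=0$ for all $b \in S$, and $\deg(Q) \leq d < |S|$. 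By the base case, $Q \equiv 0$, so every coefficient $P_i(a_1,\ldots,a_{n-1})$ equals zero.

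Since $(a_1,\ldots,a_{n-1})$ was arbitrary in $S^{n-1}$, each $P_i$ vanishes on $S^{n-1}$. Moreover, for every $1 \leq j \leq n-1$ we have $\deg_{X_j}(P_i) \leq \deg_{X_j}(P) < |S|$, so the inductive hypothesis applies and yields $P_i \equiv 0$ in $\mathbb{F}_q[X_1,\ldots,X_{n-1}]$ for each $i$. Consequently $P \equiv 0$. There is no genuine obstacle here; the only thing to be careful about is recording the reduction of the per-variable degree bound $\deg_{X_j}(P_i) \leq \deg_{X_j}(P)$ so that the inductive hypothesis is correctly invoked.
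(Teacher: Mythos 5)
Your induction on the number of variables is correct and complete: the base case is the standard root-counting bound for univariate polynomials, and the inductive step correctly specializes the first $n-1$ variables, kills each coefficient $P_i$ on $S^{n-1}$, and checks that the per-variable degree bounds $\deg_{X_j}(P_i)\leq \deg_{X_j}(P)<|S|$ carry over so the inductive hypothesis applies. The paper itself gives no proof of this lemma (it is quoted from the cited reference), and your argument is exactly the standard one that the reference uses, so there is nothing further to reconcile.
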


Now we present the main result for deep holes of $TRS_{k}(\mathbb{F}_{q}^{*},k-1,\eta)$ in the even $q$ case.
\begin{theorem}
		Suppose $q=2^m\geq 8$ and $\frac{3q+2\sqrt{q}-8}{4}<k\leq q-5$. If $\boldsymbol{a}=(a_{0},\cdots,a_{r})\in\mathbb{F}_{q}^{r+1}$ is  a deep hole syndrome of $TRS_{k}(\mathbb{F}_{q}^{*},k-1,\eta)$, then $\boldsymbol{a}=(0,\cdots,0,a_{r})$, where $a_{r}\in \mathbb{F}_{q}^{*}$. Thus, Corollary~\ref{Lem:3.7} provides all deep holes of $TRS_{k}(\mathbb{F}_{q}^{*},k-1,\eta)$.
\end{theorem}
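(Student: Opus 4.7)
The plan is to assume that $\boldsymbol{a}=(a_0,\ldots,a_r)$ is a deep hole syndrome and derive $\boldsymbol{a}=(0,\ldots,0,a_r)$ with $a_r\neq 0$. By Lemma~\ref{Lem:3.1.1}, the polynomial $P(x_1,\ldots,x_{r-2})$ vanishes on $(\mathbb{F}_q^*)^{r-2}$. A direct computation---using the characteristic-$2$ identity $x_j+x_j=0$, which makes the factor $\eta^{-1}+S_{1,r-2}+x_j$ equal to $\eta^{-1}+\sum_{i\neq j}x_i$ and hence of degree $0$ in $x_j$---gives $\deg_{x_j}(P)\le (r-3)+(r-3)+2+2+4+(2r-2)=4r$. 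Since the hypothesis $\frac{3q+2\sqrt{q}-8}{4}<k$ implies $4r<q-1=|\mathbb{F}_q^*|$, Lemma~\ref{Lem:3.1.4} forces $P\equiv 0$ in $\mathbb{F}_q[x_1,\ldots,x_{r-2}]$.

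Because $V$ and $\prod_{t=1}^{r-2}(\eta^{-1}+S_{1,r-2}+x_t)$ are nonzero in the integral domain $\mathbb{F}_q[x_1,\ldots,x_{r-2}]$, one of the four remaining factors vanishes identically. First, if $\tilde{f}_3 x_i^2+\tilde{g}\equiv 0$ for some $i$, then symmetry of $\tilde{f}_3$ and $\tilde{g}$ in $x_1,\ldots,x_{r-2}$ yields $\tilde{f}_3 x_j^2+\tilde{g}\equiv 0$ for every $j$; subtracting gives $\tilde{f}_3(x_i+x_j)^2\equiv 0$, so $\tilde{f}_3\equiv 0$ and then $\tilde{g}\equiv 0$. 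Second, if $\tilde{f}_3(\eta^{-1}+S_{1,r-2})^2+\tilde{g}\equiv 0$ with $\tilde{f}_3\not\equiv 0$, noting that $(\eta^{-1}+S_{1,r-2})^2=\eta^{-2}+S_{1,r-2}^2$ has degree $2$ in each $x_j$, matching $\deg_{x_j}$ with $\tilde{g}$ (at most $2$) forces $\tilde{f}_3$ to be a constant; comparing the $S_1$ and $S_1^2$ coefficients of $\tilde{g}$ against those of $\tilde{f}_3(\eta^{-2}+S_{1,r-2}^2)$ then makes that constant zero, a contradiction. Hence the identically vanishing factor must be $\tilde{f}_3$ or $\tilde{g}$.

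The heart of the argument is translating $\tilde{f}_3\equiv 0$ or $\tilde{g}\equiv 0$ into linear constraints on $(a_0,\ldots,a_r)$. After substituting $x_{r-1}=\eta^{-1}+S_{1,r-2}$ one finds
\[
\tilde{f}_3 = (\eta a_{r-1}+a_{r-2}) + a_{r-3}S_1 + \eta a_{r-3}S_1^2 + \sum_{m=2}^{r-2}(\eta a_{r-1-m}+a_{r-2-m})S_m + \eta S_1\sum_{m=2}^{r-2}a_{r-2-m}S_m,
\]
with an analogous expression for $\tilde{g}=\tilde{f}_1+a_r$ which does not involve $a_0$ or $a_1$. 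Viewing both sides as polynomials in the algebraically independent $S_1,\ldots,S_{r-2}$ and equating all coefficients to zero, the case $\tilde{f}_3\equiv 0$ forces $a_0=\cdots=a_{r-3}=0$ and $a_{r-2}+\eta a_{r-1}=0$; invoking Lemma~\ref{Lem:3.1.2} rules out $a_{r-2}\neq 0$, so $a_{r-2}=a_{r-1}=0$, giving $\boldsymbol{a}=(0,\ldots,0,a_r)$. In the case $\tilde{g}\equiv 0$, the coefficients force $a_r=0$ and $a_2=\cdots=a_{r-1}=0$, i.e.\ $\boldsymbol{a}=(a_0,a_1,0,\ldots,0)$, which is excluded by Lemma~\ref{Lem:3.1.3} unless $(a_0,a_1)=\mathbf{0}$, and then $\boldsymbol{a}=\mathbf{0}$ is trivially the zero combination of any $r$ columns of $H$, hence not a deep hole syndrome.

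Finally $a_r\neq 0$ is forced for the same reason. The main obstacle I foresee is the careful coefficient bookkeeping in the elementary-symmetric expansion of $\tilde{f}_3$ and $\tilde{g}$ after the substitution $x_{r-1}=\eta^{-1}+S_{1,r-2}$, where the characteristic-$2$ simplifications $a+a=0$ and $(a+b)^2=a^2+b^2$ must be tracked with care to avoid miscancellations.
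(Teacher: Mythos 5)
Your proposal follows the paper's proof almost exactly: Lemma~\ref{Lem:3.1.1} plus the combinatorial nullstellensatz (with the same degree count $4r<q-1$), a case split on which factor of $P$ vanishes identically, reduction of everything to the two cases $\tilde f_3\equiv 0$ and $\tilde g\equiv 0$, and finally Lemmas~\ref{Lem:3.1.2} and~\ref{Lem:3.1.3} to kill the residual syndromes $(0,\ldots,0,a_{r-2},\eta^{-1}a_{r-2},a_r)$ and $(a_0,a_1,0,\ldots,0)$. Your expansion of $\tilde f_3$ in the elementary symmetric polynomials and the coefficient extraction are correct, and your handling of the factor $\tilde f_3(\eta^{-1}+S_{1,r-2})^2+\tilde g$ by a degree-in-$x_j$ argument is a legitimate (and slightly cleaner) variant of the paper's direct extraction of the coefficients of $x_1^3\prod_i x_i$ and $x_1^2x_2$.

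There is, however, one genuine gap: your disposal of the factor $\tilde f_3x_i^2+\tilde g\equiv 0$ relies on swapping two distinct variables $x_i$ and $x_j$, which requires $r-2\geq 2$. The boundary case $r-2=1$, i.e.\ $r=3$, i.e.\ $k=q-5$, is included in the theorem's hypothesis (and the range is nonempty for $q\geq 32$), and there your symmetry argument produces nothing: you are left with a single univariate identity $\tilde f_3(x_1)\,x_1^2+\tilde g(x_1)\equiv 0$ from which $\tilde f_3\equiv 0$ and $\tilde g\equiv 0$ do not follow formally. The paper closes this case by computing explicitly
$\tilde f_3=a_1+\eta a_2+a_0x_1+\eta a_0x_1^2$ and $\tilde g=a_3+a_2x_1+\eta a_2x_1^2$, so that $\tilde f_3x_1^2+\tilde g$ is a degree-$4$ polynomial whose coefficients are (up to units) $a_0,a_0,a_1,a_2,a_3$, forcing $\boldsymbol{a}=\boldsymbol{0}$. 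You need to add this (short) computation; without it the theorem is only proved for $k\leq q-6$.
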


\begin{proof}
	
	Because $\boldsymbol{a}=(a_{0},\cdots,a_{r})\in\mathbb{F}_{q}^{r+1}$ is a deep hole syndrome of $TRS_{k}(\mathbb{F}_{q}^{*},k-1,\eta)$, by Lemma~\ref{Lem:3.1.1} the polynomial 
    \begin{equation*}
	\begin{aligned}
	P(x_{1},\cdots,x_{r-2})&=V(x_{1},\cdots,x_{r-2})\cdot
 \prod\limits_{t=1}^{r-2}\left(\eta^{-1}+S_{1,r-2}+x_{t}\right)\cdot \tilde{f}_{3}\cdot\tilde{g}\cdot\left(\tilde{f}_{3}(\eta^{-1}+S_{1,r-2})^2+\tilde{g}\right)\prod\limits_{i=1}^{r-2}(\tilde{f}_{3}x_{i}^2+\tilde{g})
	\end{aligned}
	\end{equation*}
    vanishes on $\underbrace{\mathbb{F}_{q}^{*}\times\cdots\times \mathbb{F}_{q}^{*}}_{r-2}$. Note that $$\deg_{x_{i}}(P)=r-3+r-3+2+2+4+4+2(r-3)=4(q-k-2)<q-1.$$ By Lemma~\ref{Lem:3.1.4} , we have $P(x_{1},\cdots,x_{r-2})\equiv 0$. We divide our discussion into four cases:
	
	\textbf{Case 1}: If $\tilde{f}_{3}\equiv 0$, then
	\begin{equation*}
	\begin{aligned}
	0\equiv\eta^{-1}\tilde{f}_{3} &=\sum\limits_{j=0}^{r-1}a_{j}S_{r-j-1}(x_{1},\cdots,x_{r-2},\eta^{-1}+S_{1,r-2})\\
	&=\sum\limits_{j=0}^{r-1}a_{j}S_{r-j-1}(x_{1},\cdots,x_{r-2})+(\eta^{-1}+S_{1,r-2})\sum\limits_{j=0}^{r-2}a_{j}S_{r-j-2}(x_{1},\cdots,x_{r-2})	
	\end{aligned}
	\end{equation*}
	
	For $1\leq j\leq r-2$, the coefficient of the term $x_{1}\prod\limits_{i=1}^{j}x_{i}$  is equal to $a_{r-j-2}$, which implies that $a_{0}=a_{1}=\cdots=a_{r-3}=0$. Thus, $$0=a_{r-2}S_{1,r-2}+a_{r-1}+a_{r-2}(\eta^{-1}+S_{1,r-2})=a_{r-1}+\eta^{-1}a_{r-2},$$
    which implies that $a_{r-1}+\eta^{-1}a_{r-2}=0$. 
	If $a_{r-2}=0$, then $a_{r-1}=0$. We claim that $a_{r}\neq 0$, otherwise, we have $\boldsymbol{a}=\boldsymbol{0}$, which implies that $\boldsymbol{a}\in TRS_{k}(\mathbb{F}_{q}^{*},k-1,\eta)$ is not a deep hole syndrome of $TRS_{k}(\mathbb{F}_{q}^{*},k-1,\eta)$.

	If $a_{r-2}\neq 0$, by Lemma~\ref{Lem:3.1.2}, 
	$\boldsymbol{a}=(0,\cdots,0,a_{r-2},\eta^{-1}a_{r-2},a_{r})$ is not a deep hole syndrome of $TRS_{k}(\mathbb{F}_{q}^{*},k-1,\eta)$ , which is a contradiction.
	
	\textbf{Case 2}: If $\tilde{g}\equiv 0$, i.e. 
	\begin{equation*}
	\begin{aligned}
	0\equiv \eta^{-1}\left(\tilde{f}_{1}+a_{r}\right)
	&=\sum\limits_{j=0}^{r-1}a_{j}S_{r-j+1}(x_{1},\cdots,x_{r-2},\eta^{-1}+S_{1,r-2})+\eta^{-1}a_{r}\\
	&=\sum\limits_{j=3}^{r-1}a_{j}S_{r-j+1}(x_{1},\cdots,x_{r-2})+(\eta^{-1}+S_{1,r-2})\sum\limits_{j=2}^{r-1}a_{j}S_{r-j}(x_{1},\cdots,x_{r-2})+\eta^{-1}a_{r}	
	\end{aligned}
	\end{equation*}
	
	For $1\leq j\leq r-2$, the coefficient of the term $x_{1}\prod\limits_{i=1}^{j}x_{i}$  is equal to $a_{r-j}$, which implies that $a_{2}=a_{3}=\cdots=a_{r-1}=0$. In addition, the constant term is equal to $ \eta^{-1}a_{r}$, which implies that $a_{r}=0$. Thus, $\boldsymbol{a}=(a_{0},a_{1},0,\cdots,0)$, where $a_{0},a_{1}\in \mathbb{F}_{q}$.
    
    If $a_{0}=a_{1}=0$, then $\boldsymbol{a}=\boldsymbol{0}$ is not a deep hole syndrome of $TRS_{k}(\mathbb{F}_{q}^{*},k-1,\eta)$, which is a contradiction.  
	
	If $a_{0},a_{1}\in \mathbb{F}_{q}$ are not both zero, by Lemma~\ref{Lem:3.1.3}, 
	$\boldsymbol{a}=(a_{0},a_{1},0,\cdots,0)$ is not a deep hole syndrome of $TRS_{k}(\mathbb{F}_{q}^{*},k-1,\eta)$ , which is a contradiction.
	
	\textbf{Case 3}: If $\tilde{f}_{3}\cdot x_{i}^2+\tilde{g}\equiv 0$ for some $1\leq i\leq r-2$. Note that $\tilde{f}_{3}$ and $\tilde{g}$ are symmetric polynomials and $r\geq 3$, we know that $\tilde{f}_{3}=\tilde{g}\equiv 0$ or $r-2=1$. If $\tilde{f}_{3}=\tilde{g}\equiv 0$, then from Case 1 and Case 2, we have $a_{0}=a_{1}=\cdots=a_{r-1}=a_{r}=0$. Thus, $\boldsymbol{a}=\boldsymbol{0}$. If $r-2=1$. Then 
	\begin{equation*}
	\begin{aligned}
	\tilde{f}_{3}=\eta\sum\limits_{j=0}^{2}a_{j}S_{2-j}(x_{1},\eta^{-1}+x_{1})=\eta a_{0}x_{1}(\eta^{-1}+x_{1})+\eta a_{1}
\eta^{-1}+\eta a_{2}=a_{1}+\eta a_{2}+a_{0}x_{1}+\eta a_{0}x_{1}^2	\end{aligned}
	\end{equation*}
	and
	\begin{equation*}
	\tilde{g}=\tilde{f}_{1}+a_{3}=\eta\sum\limits_{j=0}^{2}a_{j}S_{4-j}(x_{1},\eta^{-1}+x_{1})+a_{3}=a_{3}+a_{2}x_{1}+\eta a_{2}x_{1}^2
	\end{equation*}
	
	It deduces that $a_{0}=a_{1}=a_{2}=a_{3}=0$. Therefore, in this case, we have $\boldsymbol{a}=\boldsymbol{0}$.

	\textbf{Case 4}: If $\tilde{f}_{3}(\eta^{-1}+S_{1,r-2})^2+\tilde{g}\equiv 0$, i.e. 
	\begin{equation*}
	\begin{aligned}
	0&\equiv\eta^{-1}\tilde{f}_{3}(\eta^{-1}+S_{1,r-2})^2+\eta^{-1}\tilde{g}\\
	&=(\eta^{-1}+S_{1,r-2})^2\sum\limits_{j=0}^{r-1}a_{j}S_{r-j-1}(x_{1},\cdots,x_{r-2})+(\eta^{-1}+S_{1,r-2})^3\sum\limits_{j=0}^{r-2}a_{j}S_{r-j-2}(x_{1},\cdots,x_{r-2})\\
	&+\sum\limits_{j=3}^{r-1}a_{j}S_{r-j+1}(x_{1},\cdots,x_{r-2})+(\eta^{-1}+S_{1,r-2})\sum\limits_{j=2}^{r-1}a_{j}S_{r-j}(x_{1},\cdots,x_{r-2})+\eta^{-1}a_{r}
	\end{aligned}
	\end{equation*}

	For $1\leq j\leq r-2$, the coefficient of the term $x_{1}^3\prod\limits_{i=1}^{j}x_{i}$  is equal to $a_{r-j-2}$, which implies that $a_{0}=a_{1}=\cdots=a_{r-3}=0$. In addition, the coefficient of the term $x_{1}^2x_{2},$ is equal to $a_{r-2}$, which implies that $a_{r-2}=0$. Thus, 
	\begin{equation*}
	\begin{aligned}
	0&\equiv\eta^{-1}\tilde{f}_{3}(\eta^{-1}+S_{1,r-2})^2+\eta^{-1}\tilde{g}\\&=a_{r-1}(\eta^{-1}+S_{1,r-2})^2+a_{r-1}S_{2,r-2}+a_{r-1}(\eta^{-1}+S_{1,r-2})S_{1,r-2}+\eta^{-1}a_{r}\\
	&=a_{r-1}S_{2,r-2}+\eta^{-1}a_{r-1}S_{1,r-2}+\eta^{-2}a_{r-1}+\eta^{-1}a_{r},
	\end{aligned}
	\end{equation*} 
	which implies that $a_{r-1}=a_{r}=0$. Therefore, in this case, we have $\boldsymbol{a}=\boldsymbol{0}$.
\end{proof}

Finally, we determine all deep holes of $TRS_{k}(\mathbb{F}_{q}^{*},k-1,\eta)$ for even $q\geq 16$ and $q-4\leq k\leq q-2$.
\begin{theorem}
	Let $q=2^m\geq 16$ and $H\cdot \boldsymbol{u}^{T}=\boldsymbol{a}^T=(a_{0},\cdots,a_{q-k-2})^T\in \mathbb{F}_{q}^{q-k-1}$, then
	
	$(i)$\ For $k=q-2$, $\boldsymbol{u}$ is a deep hole of $TRS_{k}(\mathbb{F}_{q}^{*},k-1,\eta)$ if and only if $\boldsymbol{u}$ is generated by Corollary~\ref{Lem:3.7};
	
	$(ii)$\ For $k=q-3$, $\boldsymbol{u}$ is a deep hole of $TRS_{k}(\mathbb{F}_{q}^{*},k-1,\eta)$ if and only if $\boldsymbol{u}$ is generated by $a_{0}x^{q-2}+a_{1}x^{q-3}+f_{q-3,k-1,\eta}(x)$ with
	$a_{0}=0,a_{1}\neq 0$ or $a_{0}\neq 0,Tr(\frac{a_{1}\eta}{a_{0}})=1$.
	
	$(iii)$\ For $k=q-4$, $\boldsymbol{u}$ is a deep hole of $TRS_{k}(\mathbb{F}_{q}^{*},k-1,\eta)$ if and only if $\boldsymbol{u}$ is given by Corollary~\ref{Lem:3.7} or generated by $a_{1}(x^{q-3}+\eta^{-1}x^{q-4})+f_{q-4,k-1,\eta}(x)$ with $a_{1}\neq 0$ and $2\nmid m$. 
\end{theorem}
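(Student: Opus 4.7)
The overall strategy is to use Theorem~\ref{The:8} and Proposition~\ref{Prop:3.2} to characterize the deep hole syndromes $\boldsymbol{a} = (a_0, \ldots, a_r) \in \mathbb{F}_q^{r+1}$ (with $r = q-k-2$), and then apply Theorem~\ref{The:3.6} to convert each admissible syndrome into a generating polynomial, using throughout the simplifications $\sigma_0 = \sigma_{q-1} = 1$, $\sigma_j = 0$ for $1 \le j \le q-2$, together with $\Lambda_0 = \Lambda_{q-1} = 1$ and $\Lambda_i = 0$ for $1 \le i \le q-2$. Part $(i)$ is immediate: $r=0$ forces the syndrome to be a nonzero scalar $a_0$, and Theorem~\ref{The:3.6} returns the Corollary~\ref{Lem:3.7} form. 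For part $(ii)$, $r=1$ and the syndrome condition becomes $\eta a_0 x_1^2 + a_0 x_1 + a_1 \ne 0$ for all $x_1 \in \mathbb{F}_q^*$. When $a_0 = 0$ this forces $a_1 \ne 0$; when $a_0 \ne 0$, the characteristic-$2$ Artin--Schreier criterion shows that $X^2 + \eta^{-1}X + \eta^{-1}a_1/a_0 = 0$ has a root in $\mathbb{F}_q$ iff ${\rm Tr}(\eta a_1/a_0) = 0$, and the special case $a_1 = 0$ produces the $\mathbb{F}_q^*$-root $\eta^{-1}$, so the nonvanishing condition reduces to ${\rm Tr}(\eta a_1/a_0) = 1$. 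Substituting into Theorem~\ref{The:3.6} yields $a_0 x^{q-2} + a_1 x^{q-3} + f_{q-3,k-1,\eta}(x)$.

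Part $(iii)$ is the heart of the argument. With $r=2$, Equation~(\ref{Equ:4.6}) reduces in characteristic $2$ to the requirement that
$$F(x_1, x_2) := \eta a_1(x_1+x_2)^2 + \eta a_0 x_1 x_2(x_1+x_2) + (\eta a_1 + a_0) x_1 x_2 + a_1(x_1+x_2) + a_2$$
be nonzero on every $2$-subset of $\mathbb{F}_q^*$. Changing to $s = x_1+x_2$, $t = x_1 x_2$, the valid pairs are precisely $(s,t) \in (\mathbb{F}_q^*)^2$ with ${\rm Tr}(t/s^2) = 0$, in which case $X^2+sX+t$ automatically has two distinct nonzero $\mathbb{F}_q$-roots. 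The main obstacle is the first step: proving $a_0 = 0$. Since $F$ is linear in $t$, one rewrites $F = (\eta a_0 s + \eta a_1 + a_0)t + (\eta a_1 s^2 + a_1 s + a_2)$, solves $t = t(s) = (\eta a_1 s^2 + a_1 s + a_2)/(\eta a_0 s + \eta a_1 + a_0)$ wherever $s \ne \eta^{-1} + a_1/a_0$, and counts the $s \in \mathbb{F}_q^*$ for which $t(s) \ne 0$ and ${\rm Tr}(t(s)/s^2) = 0$. Expanding the indicator of the trace condition via the canonical additive character, the error term is a Weil-type sum $\sum_s \chi_1(R(s))$ for the bounded-degree rational function $R(s) = t(s)/s^2$; Proposition~\ref{prop5} combined with Weil's theorem yields $|\sum_s \chi_1(R(s))| = O(\sqrt{q})$, so the count of good $s$ is at least $q/2 - O(\sqrt{q})$, which exceeds the $O(1)$ exceptional values as soon as $q \ge 16$. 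This step parallels the character-sum arguments in Lemmas~\ref{Lem:3.1.2} and~\ref{Lem:3.1.3}.

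Once $a_0 = 0$, setting $u = t/s^2 \in \ker {\rm Tr} \setminus \{0\}$ rewrites $F$ as $\eta a_1(1+u)s^2 + a_1 s + a_2$. The case $a_1 = 0$ gives $F = a_2$, recovering Corollary~\ref{Lem:3.7}. For $a_1 \ne 0$ and $m$ even, $u = 1 \in \ker {\rm Tr}$ is admissible and collapses $F$ to $a_1 s + a_2$, which vanishes at $s = a_2/a_1 \in \mathbb{F}_q^*$ if $a_2 \ne 0$; if $a_2 = 0$, any $u \in \ker {\rm Tr} \setminus \{0,1\}$ (nonempty for $m \ge 4$) with $s = 1/(\eta(1+u))$ produces a bad pair. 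For $a_1 \ne 0$ and $m$ odd, $u = 1$ is forbidden and $1+u$ is always invertible; treating $F = 0$ as a quadratic in $s$ and applying Artin--Schreier gives the condition $a_2 \ne 0$ and ${\rm Tr}(\eta a_2(1+u)/a_1) = 1$ for every $u \in \ker {\rm Tr} \setminus \{0\}$. Writing $\gamma = \eta a_2/a_1 - 1$ and $w = 1+u$, this becomes ${\rm Tr}(\gamma w) = 0$ for all $w \in {\rm Tr}^{-1}(1) \setminus \{1\}$; comparing trace values at nonzero $v_1, v_2 \in \ker {\rm Tr}$ and at $v_1+v_2$ forces ${\rm Tr}(\gamma) = 0$, extends the vanishing to all of $\ker {\rm Tr}$, and shows $\gamma \in \mathbb{F}_2$ by the Artin--Schreier orthogonality $\gamma^{1/2} = \gamma$. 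Eliminating $\gamma = 1$ (incompatible with ${\rm Tr}(1) = 1$ for $m$ odd) forces $\gamma = 0$, hence $a_2 = a_1\eta^{-1}$. Theorem~\ref{The:3.6} then turns the syndrome $(0, a_1, a_1\eta^{-1})$ into $a_1(x^{q-3} + \eta^{-1} x^{q-4}) + f_{q-4,k-1,\eta}(x)$, and a direct verification (rescaling $y_i = \eta x_i$ reduces $F = 0$ to $t' = s'^2 + s' + 1$, for which ${\rm Tr}(t'/s'^2) \equiv m \pmod{2}$) shows this family is genuinely a deep hole exactly when $m$ is odd, completing the converse.
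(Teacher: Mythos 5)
Your parts $(i)$ and $(ii)$ match the paper's argument. For part $(iii)$ your route is genuinely different in organization: you pass to the symmetric coordinates $s=x_1+x_2$, $t=x_1x_2$ and use the criterion that $X^2+sX+t$ splits over $\mathbb{F}_q^*$ iff $s,t\neq 0$ and ${\rm Tr}(t/s^2)=0$, whereas the paper sets $x_2=\lambda x_1$ and works with $F(x_1,\lambda)$. Your treatment of the $a_0=0$ branch is correct and in fact cleaner than the paper's: the $\mathbb{F}_2$-linear-algebra argument forcing $\gamma=\eta a_2/a_1=1$ on the hyperplane $\ker{\rm Tr}$ replaces the paper's ad hoc trace identity ${\rm Tr}\bigl(\tfrac{1+\lambda+\lambda^2}{1+\lambda^2}\bigr)={\rm Tr}(1)$, and your final rescaling check that ${\rm Tr}(u)={\rm Tr}(1)\equiv m\pmod 2$ cleanly explains why the extra family survives exactly for odd $m$.

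The gap is in the step "prove $a_0=0$". You assert that $\sum_{s}\chi_1\bigl(R(s)\bigr)=O(\sqrt q)$ for $R(s)=t(s)/s^2$, but the Weil bound for additive character sums of rational functions fails precisely when $R$ is of Artin--Schreier type, i.e.\ $R=h^2+h+c$ for some rational $h$; in that case ${\rm Tr}(R(s))$ is constant and the sum has absolute value $\approx q$. These degenerate configurations actually occur here: e.g.\ for suitable $(a_1,a_2)$ the numerator of $t(s)$ acquires the factor $\eta s+\eta a_1+a_0$ and $R$ collapses to $\tfrac{a_1}{s}+\tfrac{a_1^2}{s^2}$, and for $a_1=\eta^{-1}$ the pole structure changes entirely. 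One must check in each such case that the constant trace value is $0$ (favorable) rather than $1$ (which would produce an undetected deep hole with $a_0\neq 0$); this verification is exactly what the paper's subcase analysis in Case~2 carries out, using explicit substitutions, a Kloosterman sum with the sharp bound $2\sqrt q$, and a cubic Gauss-type sum. Relatedly, your "$q/2-O(\sqrt q)$ exceeds $O(1)$ for $q\geq 16$" needs the actual constants: the paper's threshold comes from the exact inequalities $q-5\leq 2\sqrt q$ and $q-2\leq 2\sqrt q$, and a generic Weil constant of $3\sqrt q$ or more would not close the argument at $q=16$. So the skeleton of your part $(iii)$ is sound, but the $a_0\neq 0$ branch as written does not yet constitute a proof.
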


\begin{proof}
	For $k=q-2$, then $\rho(TRS_{k}(\mathbb{F}_{q}^{*},k-1,\eta))=q-1-k=1$. Thus, every non-codeword is a deep hole. The conclusion can be easily verified.
	
	For $k=q-3$, if $a_{0}=0$, then by Corollary~\ref{Lem:3.7}, $\boldsymbol{u}$ is a deep hole if and only if $a_{1}\neq 0$. If $a_{0}\neq 0$, then $\boldsymbol{u}$ is a deep hole, if and only  $H\cdot \boldsymbol{u}^T=\boldsymbol{a}^T=(a_{0},a_{1})^{T}$ can not be expressed as a linear combination of any one  column of $H$ over $\mathbb{F}_{q}$ if and only if,
	\begin{equation}\label{Equ:3.7}
	\eta\alpha^2+\alpha+\frac{a_{1}}{a_{0}}\neq 0\ \mbox{for any}\ 
	\alpha\in \mathbb{F}_{q}^{*}
	\end{equation}
	Hence, Equation~(\ref{Equ:3.7}) holds, if and only if $\eta\alpha^2+\alpha+\frac{a_{1}}{a_{0}}=0$ has no roots in $\mathbb{F}_{q}^{*}$. By \cite[Corollary 3.79]{lidl1997finite}, the latter is equivalent to $Tr\left(\frac{a_{1}\eta}{a_{0}}\right)=1$.
	
	For $k=q-4$, if $a_{0}=a_{1}=0$, then by Corollary~\ref{Lem:3.7}, $\boldsymbol{u}$ is a deep hole if and only if $a_{2}\neq 0$. If $a_{0},a_{1}$ are not all zero elements, then $\boldsymbol{u}$ is a deep hole, if and only if $H\cdot \boldsymbol{u}^T=\boldsymbol{a}^T=(a_{0},a_{1},a_{2})^{T}$ can not be expressed as a linear combination of any two columns of $H$ over $\mathbb{F}_{q}$, if and only if for each $x_{1}\neq x_{2}\in \mathbb{F}_{q}^{*}$, it all holds 
	$$\sum\limits_{j=0}^{1}a_{j}c_{2-j}+\eta\sum\limits_{j=0}^{1}a_{j}\sum\limits_{\max\{0,j-1\}\leq w\leq j}c_{2-w}\Lambda_{1+w-j}^{'}\neq a_{2},$$
	where $\Lambda_{0}^{\prime}=1,\Lambda_{1}^{'}=c_{1}=x_{1}+x_{2},\Lambda_{2}=c_{1}\Lambda_{1}^{'}+c_{2}=c_{2}+c_{1}^2=x_{1}^2+x_{2}^2+x_{1}x_{2}$. W.l.o.g., we suppose $x_{1}\in \mathbb{F}_{q}^{*},x_{2}=\lambda x_{1}$, then for any $x_{1}\in \mathbb{F}_{q}^{*}$ and $\lambda\in \mathbb{F}_{q}^{*}\backslash \{1\}$, it all holds
	\begin{equation}\label{Equ:9}
	F(x_{1},\lambda)\stackrel{\vartriangle}{=}\eta a_{0}(\lambda+\lambda^2)x_{1}^3+\left(a_{0}\lambda+\eta a_{1}(1+\lambda+\lambda^2)\right)x_{1}^2+a_{1}(1+\lambda)x_{1}+a_{2}\neq 0.
	\end{equation}
     Next, we determine all vectors $\boldsymbol{a}^T\in \mathbb{F}_{q}^3\backslash\{0\}$ that satisfy Equation~(\ref{Equ:9}).

	\textbf{Case 1}: $a_{1}\neq 0,a_{0}=0$, then Equation~(\ref{Equ:9}) holds if and only if $a_{2}=\eta^{-1}a_{1}$ and $2\nmid m$.
	
	Firstly, we have $$F(x_{1},\lambda)=\eta a_{1}(1+\lambda+\lambda^2)x_{1}^2+a_{1}(1+\lambda)x_{1}+a_{2}=
	a_{1}x_{1}\left(\eta(1+\lambda+\lambda^2)x_{1}+(1+\lambda)\right)+a_{2}.$$ 
    
    If $a_{2}=0$, let $\lambda\in \left\{\alpha\in \mathbb{F}_{q}^{*}:1+\alpha+\alpha^2\neq 0\right\}\backslash\{1\}$ and $x_{1}=\frac{1+\lambda}{\eta(1+\lambda+\lambda^2)}$, then we have $F(x_{1},\lambda)=0$, i.e. the Equation~(\ref{Equ:9}) does not hold. 
    
    If $a_{2}\neq 0$ and $2|m$, let $\lambda=w,x_{1}=\frac{a_{2}}{a_{1}}w$, where $w$ is the primitive cubic root of unity. Then we have
	\begin{equation*}
	F(\frac{a_{2}}{a_{1}}w,w)=\eta a_{1}(1+w+w^2)\frac{a_{2}^2}{a_{1}^2}w^2+a_{1}(1+w)\frac{a_{2}}{a_{1}}w+a_{2}=\left(\eta\frac{a_{2}^2}{a_{1}}+a_{2}\right)(1+w+w^2)=0.
	\end{equation*}

    Next, we show that if $a_{2}\neq 0$ and $2\nmid m$, then the Equation~(\ref{Equ:9}) holds, if and only if $a_{2}=\eta^{-1}a_{1}$. On the one hand, if the Equation (\ref{Equ:9}) holds, then $\frac{a_{2}}{a_{1}}\neq \eta (1+\lambda+\lambda^2)x_{1}^2+(1+\lambda)x_{1}$ for any $x_{1}\in \mathbb{F}_{q}^{*}$ and $\lambda\in \mathbb{F}_{q}^{*}\backslash \{1\}$. Let $x_{1}=\eta^{-1}$, then $\frac{a_{2}}{a_{1}}\neq \eta^{-1}\lambda^{2}$ for all $\lambda\in \mathbb{F}_q^{*}\backslash \{1\}$. Thus, $\frac{a_{2}}{a_{1}}=\eta^{-1}$.
    
    On the another hand, if $a_{2}=\eta^{-1}a_{1}$, then 
    $$F(x_{1},\lambda)=\eta a_{1}(1+\lambda+\lambda^2)x_{1}^2+a_{1}(1+\lambda)x_{1}+a_{1}\eta^{-1}.$$
Because of \begin{equation*}
\frac{\eta(1+\lambda+\lambda^2)}{a_{1}(1+\lambda^2)}F(x_{1},\lambda)=\left(\frac{\eta(1+\lambda+\lambda^2)x_{1}}{1+\lambda}\right)^2+\frac{\eta(1+\lambda+\lambda^2)x_{1}}{1+\lambda}+\frac{1+\lambda+\lambda^2}{1+\lambda^2}
	\end{equation*}
	and $$Tr\left(\frac{1+\lambda+\lambda^2}{1+\lambda^2}\right)=Tr(1)+Tr\left(\frac{1}{1+\lambda}\right)+Tr\left(\frac{1}{1+\lambda^2}\right)=Tr(1)=1,$$
	we have $F(x_{1},\lambda)\neq 0$ for any $x_{1}\in \mathbb{F}_{q}^{*}$ and $\lambda\in \mathbb{F}_{q}^{*}\backslash \{1\}$. Thus, the Equation
~(\ref{Equ:9}) holds.

	\textbf{Case 2}: $a_{0}\neq 0$, then (\ref{Equ:9}) does not hold. In other words, there exists $x_{1}\in\mathbb{F}_{q}^{*},\lambda\in\mathbb{F}_{q}^{*}\backslash\{1\}$ such that $F(x_{1},\lambda)=0$.
	
	W.l.o.g., we suppose $a_{0}=1$, thus
	\begin{equation*}
	F(x_{1},\lambda)=(\eta x_{1}+\eta a_{1})x_{1}^2\lambda^2+(\eta x_{1}^2+x_{1}+\eta a_{1}x_{1}+a_{1})x_{1}\lambda+\eta a_{1}x_{1}^2+a_{1}x_{1}+a_{2}.
	\end{equation*}
	
	If $a_{2}=0$ and $a_{1}=\eta^{-1}$, then $F(\eta^{-1},\lambda)=0$.
	
	If $a_{2}=0$ and $a_{1}\neq \eta^{-1}$, we want to find $x_{1}\in\mathbb{F}_{q}^{*}\backslash\{a_{1},\eta^{-1}\}$ and $\lambda\in\mathbb{F}_{q}^{*}\backslash\{1\}$ such that $F(x_{1},\lambda)=0$.
    For any $x_{1}\in \mathbb{F}_{q}^{*}\backslash
    \{a_{1},\eta^{-1}\}$, we know $F(x_{1},\lambda)=0$ has a solution $\lambda\in \mathbb{F}_{q}^{*}\backslash\{1\}$ if and only if 
	\begin{equation*}
	0=Tr(\frac{\eta a_{1}x_{1}}{(x_{1}+a_{1})(\eta x_{1}+1)})=Tr(\frac{\eta a_{1}^2}{(\eta a_{1}+1)(x_{1}+a_{1})})+Tr(\frac{\eta a_{1}}{(\eta a_{1}+1)(\eta x_{1}+1)}).
	\end{equation*}
	If $Tr(\frac{\eta a_{1}x_{1}}{(x_{1}+a_{1})(\eta x_{1}+1)})=1$ for all $x_{1}\in \mathbb{F}_{q}^{*}\setminus\{a_{1},\eta^{-1}\}$.
	Then 
	\begin{equation*}
	\begin{aligned}
	-(q-3)&=\sum\limits_{x_{1}\in \mathbb{F}_{q}^{*}\setminus\{a_{1},\eta^{-1}\}}(-1)^{Tr(\frac{\eta a_{1}x_{1}}{(\eta x_{1}+1)(x_{1}+a_{1})})}=\sum\limits_{x_{1}\in \mathbb{F}_{q}^{*}\setminus\{a_{1},\eta^{-1}+a_{1}\}}(-1)^{Tr(\frac{\eta a_{1}x_{1}+\eta a_{1}^2}{(x_{1}(\eta x_{1}+\eta a_{1}+1)})}\\
	&=\sum\limits_{x_{1}\in \mathbb{F}_{q}^{*}\setminus\{a_{1}^{-1},\eta(a_{1}\eta+1)^{-1}\}}(-1)^{Tr(\frac{ \eta a_{1}x_{1}+\eta a_{1}^2x_{1}^2}{ (\eta a_{1}+1)x_{1}+\eta})}\\
	&=\sum\limits_{x_{1}\in \mathbb{F}_{q}^{*}\setminus\{a_{1}^{-1},\eta\}}(-1)^{Tr(
	\frac{\eta a_{1}^2}{(\eta a_{1}+1)^2}x_{1}+	\frac{\eta^2 a_{1}}{(\eta a_{1}+1)^2}x_{1}^{-1}+\frac{\eta a_{1}}{\eta a_{1}+1})}\\
    &=(-1)^{Tr(\frac{\eta a_{1}}{\eta a_{1}+1})}\cdot \left(
    \sum\limits_{x_{1}\in \mathbb{F}_{q}^{*}}(-1)^{Tr(\frac{\eta a_{1}^2}{(\eta a_{1}+1)^2}x_{1}+	\frac{\eta^2 a_{1}}{(\eta a_{1}+1)^2}x_{1}^{-1})}-2(-1)^{Tr(\frac{\eta a_{1}}{\eta a_{1}+1})}
    \right)\\
    &=(-1)^{Tr(\frac{\eta a_{1}}{\eta a_{1}+1})}\cdot K(\chi;\frac{\eta a_{1}^2}{(\eta a_{1}+1)^2},\frac{\eta^2 a_{1}}{(\eta a_{1}+1)^2})-2.
	\end{aligned}
	\end{equation*}
where $\chi$ be the canonial additive character of $\mathbb{F}_{q}$. By Proposition~\ref{prop5}, we have $$q-5=\left|K(\chi;\frac{\eta a_{1}^2}{(\eta a_{1}+1)^2},\frac{\eta^2 a_{1}}{(\eta a_{1}+1)^2})\right|\leq 2q^{1/2},$$ 
    which leads to $q-5\leq 2\sqrt{q}\Rightarrow q<16$, contradiction. Thus, there exists $x_{1}\in\mathbb{F}_{q}^{*}\backslash\left\{a_{1},\eta^{-1}\right\}$ and $\lambda\in\mathbb{F}_{q}^{*}\backslash\{1\}$ such that $F(x_{1},\lambda)=0$.

	If $a_{1}\neq \eta^{-1}$ and $a_{2}\neq 0,\eta^{-2}+a_{1}\eta^{-1}$. Let $x_{1}=\eta^{-1}$, then $F(\eta^{-1},\lambda)=(\eta^{-2}+\eta^{-1}a_{1})\lambda^2+a_{2}$. Because of $a_{2}\neq 0,\eta^{-2}+a_{1}\eta^{-1}$, we can choose $\lambda\in \mathbb{F}_{q}^{*}\backslash\{1\}$ such that $F(\eta^{-1},\lambda)=0$.
	
	If $a_{1}\neq \eta^{-1}$ and $a_{2}=\eta^{-2}+a_{1}\eta^{-1}$, then $F(x_{1},\lambda)=\eta(x_{1}+a_{1})x_{1}^2\lambda^2+(a_{1}+x_{1})(\eta x_{1}+1)x_{1}\lambda+\eta a_{1}x_{1}^2+a_{1}x_{1}+\eta^{-2}+a_{1}\eta^{-1}$. Because of \begin{equation*}
	\frac{\eta}{(x_{1}+a_{1})(\eta x_{1}+1)^2}F(x_{1},\lambda)=\frac{\eta^2x_{1}^2\lambda^2}{(\eta x_{1}+1)^2}+\frac{\eta x_{1}\lambda}{\eta x_{1}+1}+\frac{\eta^2 a_{1}x_{1}^2+\eta a_{1}x_{1}+\eta^{-1}+a_{1}}{(x_{1}+a_{1})(\eta x_{1}+1)^2}
	\end{equation*}
	and 
	\begin{equation*}
	Tr\left(\frac{\eta^2 a_{1}x_{1}^2+\eta a_{1}x_{1}+\eta^{-1}+a_{1}}{(x_{1}+a_{1})(\eta x_{1}+1)^2}\right)=Tr(\frac{1}{1+\eta x_{1}})+Tr(\frac{1}{(1+\eta x_{1})^2})+Tr(\frac{\eta^{-1}+a_{1}}{x_{1}+a_{1}})=Tr(\frac{\eta^{-1}+a_{1}}{x_{1}+a_{1}}),
	\end{equation*}
	we can choose $x_{1}\in \mathbb{F}_{q}^{*}\backslash\{\eta^{-1},a_{1}\}$ such that $Tr(\frac{\eta^{-1}+a_{1}}{x_{1}+a_{1}})=0$. In other words,there exists $x_{1}\in \mathbb{F}_{q}^{*}\backslash\{\eta^{-1},a_{1}\}$ and $\lambda\in \mathbb{F}_{q}^{*}\backslash\{1\}$ such that $F(x_{1},\lambda)=0$.
	
	If $a_{1}=\eta^{-1}$ and $a_{2}\neq 0$, then  
$
	F(x_{1},\lambda)=(1+\eta x_{1})x_{1}^2\lambda^2+\eta^{-1}(1+\eta x_{1})^2x_1\lambda+x_{1}^2+\eta^{-1}x_{1}+a_{2},
	$. When $Tr(a_{2}\eta^2)=0$, there exists $x_{1}\in \mathbb{F}_{q}^{*}\backslash\{\eta^{-1}\}$ and $\lambda=1+\eta^{-1}x_{1}^{-1}$ such that $x_{1}^2+\eta^{-1}x_{1}+a_{2}=0$ and $F(x_{1},\lambda)=0$. When $Tr(a_{2}\eta^2)=1$,for $x_{1}\in \mathbb{F}_{q}^{*}\backslash\{\eta^{-1}\}$, we know $F(x_{1},\lambda)=0$ has a solution $\lambda\neq 0,1\in \mathbb{F}_{q}$ if and only if 
	\begin{equation*}
	0=Tr\left(\frac{\eta^2x_{1}^2+\eta x_{1}+\eta^2a_{2}}{(1+\eta x_{1})^3}\right)=Tr(\frac{1}{1+\eta x_{1}})+Tr(\frac{1}{(1+\eta x_{1})^2})+Tr(\frac{a_{2}\eta^2}{(1+\eta x_{1})^3})=Tr(\frac{a_{2}\eta^2}{(1+\eta x_{1})^3}).
	\end{equation*}
	 If $3\nmid q-1$, then we can choose $x_{1}\in \mathbb{F}_{q}^{*}$ such that $Tr(\frac{a_{2}\eta^2}{(1+\eta x_{1})^3})=0$. If $3|q-1$, we suppose that $Tr(\frac{a_{2}\eta^2}{(1+\eta x_{1})^3})=1$ for any $x_{1}\in \mathbb{F}_{q}^{*}\setminus\{\eta^{-1}\}$. Then 
	 \begin{equation*}
	 \begin{aligned}
	 -(q-2)&=\sum\limits_{x_{1}\in \mathbb{F}_{q}^{*}\setminus\{\eta^{-1}\}}(-1)^{Tr(\frac{a_{2}\eta^2}{(1+\eta x_{1})^3})}=\sum\limits_{\gamma\in \mathbb{F}_{q}^{*}\setminus\{1\}}(-1)^{Tr(a_{2}\eta^2\gamma^3)}\\
	 &=\sum\limits_{\gamma\in \mathbb{F}_{q}}(-1)^{Tr(a_{2}\eta^2\gamma^3)}-1-(-1)^{Tr(a_{2}\eta^2)}=\sum\limits_{\gamma\in \mathbb{F}_{q}}(-1)^{Tr(a_{2}\eta^2\gamma^3)}
	 \end{aligned}
	 \end{equation*} 
	 By Proposition~\ref{prop2}, we have $$q-2=\left|\sum\limits_{\gamma\in \mathbb{F}_{q}}(-1)^{Tr(a_{2}\eta^2\gamma^3)}\right|\leq 2\sqrt{q},$$
         which leads to $q-2\leq 2\sqrt{q}\Rightarrow q<8$, contradiction.
	 
	 In short, $\boldsymbol{a}$ is a deep hole syndrome of $TRS_{k}(\mathbb{F}_{q}^{*},k-1,\eta)$, if and only if $\boldsymbol{a}=(0,0,\gamma)$ or $\boldsymbol{a}=\gamma(0,1,\eta^{-1})$ when $2\nmid m$, where $\gamma\in \mathbb{F}_{q}^{*}$.
	
	

\end{proof}

\subsection{Deep Holes of $TRS_{k}(\mathbb{F}_{q}^{*},k-1,\eta)$ for odd $q$}

In this subsection, for $q$ is odd, let $x_{1},\cdots,x_{r}\in \mathbb{F}_{q}^{*}$ be pairwise distinct, $\eta\in \mathbb{F}_{q}^{*}$ and $\boldsymbol{a}=(a_{0},\cdots,a_{r})\in \mathbb{F}_{q}^{r+1}$. Denote $X=x_{r-1}+x_{r},Y=x_{r},\beta_{0}=S_{1,r-2}-\eta^{-1}$, from Equation~(\ref{Equ:4.6}), we have 
   \small{\begin{equation}\label{Equ:10}
   	\begin{aligned}
   	&\sum\limits_{j=0}^{r-1}a_{j}c_{r-j}-\eta\sum\limits_{j=0}^{r-1}a_{j}\sum\limits_{\max\{0,j-1\}\leq w\leq j}c_{r-w}\Lambda_{1+w-j}^{\prime}+a_{r}\\
   	&=(g_{3}-g_{4}Y)X^2+(\beta_{0}-Y)(g_{3}-g_{4}Y)X+(g_{4}\beta_{0}+g_{3})Y^2-g_{2}\beta_{0}-g_{1}+a_{r}.
   	\end{aligned}
   	\end{equation}} 

\begin{lemma}\label{Lem:4.7}
	Suppose $q$ is odd and $3\leq r\leq q-2$. Then there exists $r$-subset $\left\{x_{1},\cdots,x_{r}\right\}\subseteq \mathbb{F}_{q}^{*}$ such that $\sum\limits_{i=1}^{r}x_{i}=\eta^{-1}$.
\end{lemma}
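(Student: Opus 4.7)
The plan is to reduce to a bounded range of $r$ via the complementary-subset trick, then verify the claim by a direct counting argument in a one-parameter family.

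First, I would exploit the fact that $\sum_{x \in \mathbb{F}_q^*} x = 0$ for all $q \geq 3$. Therefore, for any $r$-subset $S \subseteq \mathbb{F}_q^*$, one has $\sum_{x \in S} x = -\sum_{x \in \mathbb{F}_q^* \setminus S} x$, so producing an $r$-subset of $\mathbb{F}_q^*$ summing to $\eta^{-1}$ is equivalent to producing a $(q-1-r)$-subset summing to $-\eta^{-1}$. Replacing $r$ by $r' = \min(r, q-1-r)$ and the target by $s \in \{\eta^{-1}, -\eta^{-1}\} \subseteq \mathbb{F}_q^*$, I may assume $1 \leq r' \leq (q-1)/2$; the lower bound uses $r \leq q-2$.

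Second, I would dispose of the two small cases. If $r' = 1$, the singleton $\{s\}$ works. If $r' = 2$, pick any $x_1 \in \mathbb{F}_q^* \setminus \{s, s/2\}$ and set $x_2 = s - x_1$, so that $x_1, x_2$ are distinct and nonzero with $x_1 + x_2 = s$; the hypotheses $r \geq 3$, $r \leq q-2$, $r' \leq 2$ combined with $q$ odd force $q \geq 7$, so there are at least $q-3 \geq 4$ valid choices for $x_1$.

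Third, for $3 \leq r' \leq (q-1)/2$, I would fix arbitrary distinct elements $x_2, \ldots, x_{r'-1} \in \mathbb{F}_q^*$ (possible since $r'-2 \leq (q-5)/2 < q-1$), set $T = x_2 + \cdots + x_{r'-1}$, and view $x_{r'} := s - T - x_1$ as a function of the free parameter $x_1$. For $(x_1, x_2, \ldots, x_{r'})$ to consist of pairwise distinct nonzero elements, $x_1$ must avoid
\[
\{0\} \cup \{x_2, \ldots, x_{r'-1}\} \cup \{s - T\} \cup \{s - T - x_i : 2 \leq i \leq r'-1\} \cup \{(s-T)/2\},
\]
where the last single value is the unique solution to $x_1 = x_{r'}$ (uniqueness requires $q$ odd). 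This is at most $1 + (r'-2) + 1 + (r'-2) + 1 = 2r'-1 \leq q-2$ forbidden values, strictly less than $|\mathbb{F}_q^*| = q - 1$, so a valid $x_1$ exists and yields the desired subset.

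The main obstacle is the bookkeeping in the reduction step: I must verify that every $r \in [3, q-2]$ is covered by either the direct range $r \leq (q-1)/2$ or by complementation, and that the small-$r'$ cases $r' \in \{1,2\}$ remain compatible with the oddness of $q$. Once that split is set up, the counting in Step 3 is routine, and the odd-characteristic hypothesis is used in exactly one place, namely to ensure that $x_1 = x_{r'}$ contributes only a single forbidden value rather than none or all of $\mathbb{F}_q$.
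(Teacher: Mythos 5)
Your proof is correct, but it takes a genuinely different route from the paper's. You reduce to $r'=\min(r,q-1-r)\leq (q-1)/2$ via the complementary-subset trick (using $\sum_{x\in\mathbb{F}_q^*}x=0$), then build the subset explicitly by freezing $x_2,\dots,x_{r'-1}$, solving $x_{r'}=s-T-x_1$, and counting the at most $2r'-1$ forbidden values of $x_1$. The paper instead uses a scaling argument: start from an arbitrary $r$-subset of $\mathbb{F}_q^*$; if its sum $\sigma$ is nonzero, replace each $x_i$ by $x_i/(\eta\sigma)$, which preserves distinctness and nonvanishing and normalizes the sum to $\eta^{-1}$; if $\sigma=0$, first perturb one element to $x_1+b$ with $b$ avoiding the at most $r\leq q-2$ values $\{-x_1,x_2-x_1,\dots,x_r-x_1\}$, then rescale. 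The paper's version is shorter, needs no case split on the size of $r$, and in fact never uses the hypothesis that $q$ is odd, whereas your argument genuinely invokes oddness (to make $x_1=x_{r'}$ a single forbidden value, and to define $s/2$ in the $r'=2$ case) and requires the extra bookkeeping of the complementation reduction. On the other hand, your construction is more elementary and self-contained, and it makes transparent exactly where the bound $r\leq q-2$ enters. Both arguments are valid for all $r$ in the stated range.
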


\begin{proof}
	 For any $\left\{x_{1},\cdots,x_{r}\right\}\subseteq \mathbb{F}_{q}^{*}$, if $\sum\limits_{i=1}^{r}x_{i}\neq 0$, let $\tilde{x}_{i}=\frac{x_{i}}{\eta \sum\limits_{j=1}^rx_{j}}$ for all $1\leq i\leq r$, then $\tilde{x}_{1},\cdots,\tilde{x}_{r}\in\mathbb{F}_{q}^{*}$ are pairwise distinct and $\sum\limits_{i=1}^{r}\tilde{x}_{i}=\eta^{-1}$. If 
   $\sum\limits_{i=1}^{r}x_{i}=0$, then there exists 
    $b\in \mathbb{F}_{q}^{*}\backslash\{-x_{1},x_{2}-x_{1},\cdots,x_{r}-x_{1}\}$
    such that $x_{1}+b,x_{2},\cdots,x_{r}$ are pairwise distinct and $x_{1}+b+\sum\limits_{i=2}^{r}{x}_{i}\neq 0$. Finally, repeat the above process for $r$-subset $\left\{x_{1}+b,x_{2},\cdots,x_{r}\right\}$.
    
\end{proof}

\begin{lemma}\label{lem:4.7}
	Suppose $q$ is odd and $3\leq r\leq q-2$. Let $a_{0},a_{1}\in\mathbb{F}_{q}^{*},a_{r}=a_{1}^{r}a_{0}^{-(r-1)}-\eta a_{1}^{r+1}a_{0}^{-r}$ and 
    $a_{j}=a_{1}^ja_{0}^{-(j-1)}$ for all $2\leq j\leq r-1.$
    Then there exists $r$-subset $\left\{x_{1},\cdots,x_{r}\right\}\subseteq \mathbb{F}_{q}^{*}$ such that
	$$\sum\limits_{j=0}^{r-1}a_{j}c_{r-j}-\eta\sum\limits_{j=0}^{r-1}a_{j}\sum\limits_{\max\{0,j-1\}\leq w\leq j}c_{r-w}\Lambda_{1+w-j}^{\prime}+a_{r}=0.$$
\end{lemma}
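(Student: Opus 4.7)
The plan is to substitute the very specific form of the syndrome into the expression, recognize it as a value of the generating polynomial of the $x_i$'s evaluated at $t := a_1/a_0$, and then force vanishing by choosing one $x_i$ equal to $t$. Set $t = a_1/a_0 \in \mathbb{F}_q^*$. The hypothesis is exactly the assertion $a_j = a_0 t^j$ for $0 \le j \le r-1$ and $a_r = a_0 t^r(1-\eta t)$. For any candidate $r$-subset $\{x_1,\ldots,x_r\} \subseteq \mathbb{F}_q^*$, let $P(x) = \prod_{i=1}^r (x-x_i) = \sum_{j=0}^r c_j x^{r-j}$, so $c_0 = 1$ and $c_1 = -S_{1,r}$. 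The generating identity $P(t) = \sum_{j=0}^r c_j t^{r-j}$, after reindexing $m = r-j$, immediately gives
$$\sum_{j=0}^{r-1} a_j c_{r-j} \;=\; a_0 \sum_{j=0}^{r-1} t^j c_{r-j} \;=\; a_0\bigl(P(t) - t^r\bigr).$$

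Next I would handle the double sum. Since $\Lambda'_0 = 1$ and $\Lambda'_1 = -c_1$, the inner sum over $w$ contributes $-c_1 c_r$ when $j = 0$ and $c_{r-j+1} - c_1 c_{r-j}$ when $j \ge 1$. Grouping the $-c_1 c_{r-j}$ pieces together reassembles $-c_1 \sum_{j=0}^{r-1} a_j c_{r-j} = -c_1 a_0(P(t)-t^r)$, while a second reindexing yields $\sum_{j=1}^{r-1} a_j c_{r-j+1} = a_0(tP(t) - t^{r+1} - c_1 t^r)$. Adding these gives the clean form
$$\sum_{j=0}^{r-1} a_j \!\!\!\sum_{\max\{0,j-1\} \le w \le j}\!\!\! c_{r-w}\Lambda'_{1+w-j} \;=\; a_0\bigl[(t-c_1)P(t) - t^{r+1}\bigr].$$
Substituting both sums back into the full expression and using $a_r = a_0 t^r(1-\eta t)$ to cancel the stray $-a_0 t^r$ and $\eta a_0 t^{r+1}$ contributions, everything collapses to
$$E \;=\; a_0\,P(t)\,\bigl(1 - \eta t - \eta S_{1,r}\bigr).$$

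To conclude, I would exhibit an explicit subset making $P(t) = 0$. Since $a_0, a_1 \in \mathbb{F}_q^*$ we have $t \in \mathbb{F}_q^*$; and because $r \le q-2$, the set $\mathbb{F}_q^* \setminus \{t\}$ has $q-2 \ge r-1$ elements, so we may choose $x_1 = t$ and pick $x_2,\ldots,x_r$ as any $r-1$ pairwise distinct elements of $\mathbb{F}_q^* \setminus \{t\}$. This gives a valid $r$-subset of $\mathbb{F}_q^*$ with $P(t) = 0$, hence $E = 0$.

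The main obstacle is the careful bookkeeping in the second step: the inner summation range $\max\{0,j-1\} \le w \le j$ forces a split between $j=0$ and $j \ge 1$, and one must recognize that after substitution the geometric piece $a_0 t^j$ lets $\sum_{j=1}^{r-1} t^j c_{r-j+1}$ be reindexed into $t P(t) - t^{r+1} - c_1 t^r$. Once this manipulation is in place, the cancellation of the $t^r$ and $t^{r+1}$ terms against $a_r$ is automatic, and the existence statement reduces to the trivial observation that we may insert $t$ as one of the $x_i$.
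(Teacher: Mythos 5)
Your proposal is correct and follows essentially the same route as the paper: both reduce the expression to $a_0\,P(t)\,(1-\eta t-\eta S_{1,r})$ with $t=a_1/a_0$ (the paper writes $M$ for $t$ and $\Lambda_1'=S_{1,r}$ for $-c_1$) and then kill it by taking $x_1=t$ and the remaining $x_i$ distinct in $\mathbb{F}_q^*\setminus\{t\}$. The only difference is bookkeeping — you split the inner $w$-sum case by case while the paper extends the sums to $j=r$ and subtracts correction terms — and both computations check out.
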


\begin{proof}
	Let $M=a_{1}a_{0}^{-1}$, for any $\left\{x_{1},\cdots,x_{r}\right\}\subseteq \mathbb{F}_{q}^{*}$, we have 
	 \begin{equation*}
	\begin{aligned}
	&\sum\limits_{j=0}^{r-1}a_{j}c_{r-j}-\eta\sum\limits_{j=0}^{r-1}a_{j}\sum\limits_{\max\{0,j-1\}\leq w\leq j}c_{r-w}\Lambda_{1+w-j}^{\prime}+a_{r}\\
	=&\sum\limits_{j=0}^{r-1}a_{j}c_{r-j}-\eta\Lambda_{1}^{\prime}\sum\limits_{j=0}^{r-1}a_{j}c_{r-j}-\eta\sum\limits_{j=1}^{r-1}a_{j}c_{r-j+1}+a_{r}\\
	=&a_{0}\sum\limits_{j=0}^{r-1}c_{r-j}M^j-\eta a_{0}\Lambda_{1}^{\prime}\sum\limits_{j=0}^{r-1}c_{r-j}M^j-\eta\sum\limits_{j=0}^{r-2}a_{j+1}c_{r-j}+a_{r}\\
	=&a_{0}\sum\limits_{j=0}^{r}c_{r-j}M^j-a_{0}M^r-\eta a_{0}\Lambda_{1}^{\prime}\left(\sum\limits_{j=0}^{r}c_{r-j}M^j-M^r\right)-\eta a_{0}M\left(\sum\limits_{j=0}^rc_{r-j}M^j-M^r-c_{1}M^{r-1}\right)+a_{r}\\
    =&a_{0}\sum\limits_{j=0}^rc_{r-j}M^j\left(1-\eta\Lambda_{1}^{\prime}-\eta M\right)-a_{0}M^r+\eta a_{0}\Lambda_{1}^{\prime}M^r+\eta a_{0}M^{r+1}+\eta a_{0}c_{1}M^r+a_{0}M^r-\eta a_{0}M^{r+1}\\
	=&a_{0}\sum\limits_{j=0}^rc_{r-j}M^j\left(1-\eta\Lambda_{1}^{\prime}-\eta M\right)=a_{0}(1-\eta\Lambda_{1}^{\prime}-\eta M)\prod\limits_{j=1}^{r}(M-x_{j}).
	\end{aligned}
	\end{equation*}
	Thus, we can choose $x_{1}=M, \left\{x_{2},\cdots,x_{r}\right\}\subseteq \mathbb{F}_{q}^{*}\backslash\{M\}$, then we have 
	\begin{equation*}
	\sum\limits_{j=0}^{r-1}a_{j}c_{r-j}-\eta\sum\limits_{j=0}^{r-1}a_{j}\sum\limits_{\max\{0,j-1\}\leq w\leq j}c_{r-w}\Lambda_{1+w-j}^{\prime}+a_{r}=a_{0}\left(1-\eta\Lambda_{1}^{\prime}-\eta M\right)\prod\limits_{j=1}^{r}(M-x_{j})=0.
	\end{equation*}
\end{proof}

\begin{lemma}\label{Lem:4.8}
	Suppose $q$ is odd and $3\leq r\leq q-4$. Let
	\begin{equation*}
    \begin{aligned}
        &T_{1}=\left\{(u_{0},u_{1},\cdots,u_{r})\in \mathbb{F}_{q}^{r+1}:u_{0},u_{1}\neq 0,u_{r}=u_{1}^ru_{0}^{-(r-1)}-\eta u_{1}^{r+1}u_{0}^{-r}\ \mbox{and}\  u_{i}=u_{1}^{i}u_{0}^{-(i-1)}\ \mbox{for all}\ 2\leq i\leq r-1
	\right\},\\
        &~~~~~~~~~~~~~~~~ T_{2}=\left\{(u_{0},\cdots,u_{r})\in \mathbb{F}_{q}^{r+1}:u_{0}=u_{1}=\cdots=u_{r-2}=0\ or\ u_{0}\neq 0,u_{1}=\cdots=u_{r}=0\right\} 
    \end{aligned}
	\end{equation*}
    and $T_{3}=\left\{(0,2b\eta,b,\frac{b}{4\eta})\in\mathbb{F}_{q}^4:b\in\mathbb{F}_{q}^{*}\right\}$.  For any $\boldsymbol{a}=(a_{0},a_{1},\cdots,a_{r})\in \mathbb{F}_{q}^{r+1}\backslash (\bigcup\limits_{i=1}^{3}T_{i})$,
	there exists $\{x_{1},\cdots,x_{r-2}\}\subseteq \mathbb{F}_{q}^{*}$ such that $$\tilde{g}(x)\neq 0 \quad\mbox{and}\quad\tilde{g}_{4}(x)\neq 0$$
	where 
	$\widetilde{g_{t}}(x)=\eta\sum\limits_{j=0}^{r-1}(-1)^{r-j+2-t}a_{j}S_{r-j+2-t,r-2}x^{r-j+2-t},t=1,2,3,4,$
   $ \tilde{\beta}_{0}(x)=S_{1,r-2}x-\eta^{-1}$ and 
	$\widetilde{g}(x)=\tilde{\beta}_{0}(x)\widetilde{g_{4}}(x)\widetilde{g_{3}}(x)^2+\widetilde{g_{3}}(x)^3-\tilde{\beta}_{0}(x)\widetilde{g_{2}}(x)\widetilde{g_{4}}(x)^2-\widetilde{g_{1}}(x)\widetilde{g_{4}}(x)^2+a_{r}\widetilde{g_{4}}(x)^2.$
\end{lemma}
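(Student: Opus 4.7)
The plan is to argue by contraposition: assume that for every $(r-2)$-subset $\{x_{1},\ldots,x_{r-2}\}\subseteq\mathbb{F}_{q}^{*}$ at least one of the polynomials $\tilde{g}(x)$ and $\tilde{g}_{4}(x)$ is identically zero in $x$, and deduce that $\boldsymbol{a}$ must lie in $T_{1}\cup T_{2}\cup T_{3}$.

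The first step promotes this pointwise obstruction to a polynomial identity in all the variables. Form $P(x_{1},\ldots,x_{r-2},x):=\tilde{g}(x)\tilde{g}_{4}(x)$ and multiply by the Vandermonde $V(x_{1},\ldots,x_{r-2})=\prod_{1\le i<j\le r-2}(x_{j}-x_{i})$. On any tuple in $(\mathbb{F}_{q}^{*})^{r-2}$ where two coordinates coincide $V$ vanishes, and on a distinct-coordinate tuple $P\equiv 0$ as a polynomial in $x$ by assumption; hence every coefficient of $V\cdot P$, viewed in $\mathbb{F}_{q}[x_{1},\ldots,x_{r-2}][x]$, vanishes on all of $(\mathbb{F}_{q}^{*})^{r-2}$. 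Each elementary symmetric polynomial $S_{\bullet,r-2}$ has degree at most $1$ in any single $x_{j}$, whence $\deg_{x_{j}}(\tilde{g}_{t})\le 1$, $\deg_{x_{j}}(\tilde{\beta}_{0})\le 1$, and $\deg_{x_{j}}(\tilde{g})\le 4$, so
\[
\deg_{x_{j}}(V\cdot P)\le (r-3)+5=r+2<q-1
\]
using the hypothesis $r\le q-4$. Applying Lemma~\ref{Lem:3.1.4} coefficient-by-coefficient forces $V\cdot P\equiv 0$ as an element of $\mathbb{F}_{q}[x_{1},\ldots,x_{r-2},x]$, and since $V\not\equiv 0$ and this ring is an integral domain we conclude $\tilde{g}_{4}\equiv 0$ or $\tilde{g}\equiv 0$.

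The first alternative is immediate: the coefficient of $x^{i}$ in $\tilde{g}_{4}$ equals $\eta(-1)^{i}a_{r-i-2}S_{i,r-2}$, and the nonvanishing of $S_{i,r-2}$ for $0\le i\le r-2$ forces $a_{0}=a_{1}=\cdots=a_{r-2}=0$, placing $\boldsymbol{a}$ in the first component of $T_{2}$. The substantive case is $\tilde{g}\equiv 0$, which I attack by reading off well-chosen monomial coefficients. Formally setting $x_{1}=\cdots=x_{r-2}=0$ collapses $\tilde{g}_{t}$ to $\eta a_{r+2-t}$ for $t=3,4$ and to $0$ for $t=1,2$, and $\tilde{\beta}_{0}$ to $-\eta^{-1}$, producing the relation $\eta a_{r-1}^{3}-a_{r-2}a_{r-1}^{2}+a_{r}a_{r-2}^{2}=0$. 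Next I fix one coordinate $x_{1}$, set $x_{2}=\cdots=x_{r-2}=0$, and use the substitution $u=x_{1}x$ to regard $\tilde{g}$ as a polynomial in $u$; its coefficients supply further relations tying $a_{j-1},a_{j},a_{j+1}$ together. Iterating these partial specializations propagates the recursions up the index: when $a_{0}\ne 0$ they consolidate into the geometric progression $a_{j}=a_{1}^{j}a_{0}^{-(j-1)}$ for $2\le j\le r-1$ together with the closing formula $a_{r}=a_{1}^{r}a_{0}^{-(r-1)}-\eta a_{1}^{r+1}a_{0}^{-r}$, placing $\boldsymbol{a}\in T_{1}$. The degenerate branches $a_{0}=a_{1}=0$ (yielding the second component of $T_{2}$) and $a_{0}=0$, $a_{1}\ne 0$ with the accidental $r=3$ identity (yielding $T_{3}$) exhaust the remaining subcases.

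The principal obstacle is this final coefficient-matching step. A priori $\tilde{g}$ has $x$-degree up to $3r-5$, so selecting specializations that isolate each $a_{j}$-relation and verifying that the full system is satisfied only on $T_{1}\cup T_{2}\cup T_{3}$ requires careful bookkeeping; the appearance of $T_{3}$ solely at $r=3$ reflects an accidental coincidence of the equations in this smallest case and must be treated as a separate case.
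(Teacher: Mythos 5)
Your opening reduction is sound and is in fact a cleaner framing than the paper's: multiplying $\tilde g\cdot\tilde g_4$ by the Vandermonde $V$, checking $\deg_{x_j}\le (r-3)+5=r+2\le q-2<q-1$, and invoking Lemma~\ref{Lem:3.1.4} coefficient-by-coefficient in $x$ legitimately reduces the lemma to the purely algebraic claim that $\tilde g\not\equiv 0$ and $\tilde g_4\not\equiv 0$ in $\mathbb{F}_q[x_1,\dots,x_{r-2},x]$ whenever $\boldsymbol a\notin T_1\cup T_2\cup T_3$. The paper argues in the opposite direction: for each case of $\boldsymbol a$ it exhibits a specific coefficient of $\tilde g$ and of $\tilde g_4$ that is a nonzero polynomial and then uses Lemmas~\ref{Lem:C.1} and~\ref{Lem:C.2} to produce a point with distinct nonzero coordinates where it does not vanish; your Vandermonde device would let one dispense with those two auxiliary lemmas entirely. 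Your disposal of the branch $\tilde g_4\equiv 0$ is also correct.

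However, the branch $\tilde g\equiv 0$ carries essentially all of the content of the lemma, and you do not prove it. The specializations you propose are too weak to deliver what you claim: setting $x_2=\cdots=x_{r-2}=0$ kills $S_{i,r-2}$ for $i\ge 2$, so each $\tilde g_t$ collapses to $\eta(a_{r+2-t}-a_{r+1-t}x_1x)$ and the resulting identity in $u=x_1x$ constrains only $a_{r-3},a_{r-2},a_{r-1},a_r$; more generally, keeping $s$ variables alive only reaches the indices $a_{r-2-s},\dots,a_r$. To obtain the relation $a_t=a_1^ta_0^{-(t-1)}$ for small $t$ one must extract coefficients of the full $(r-1)$-variable polynomial --- e.g.\ the coefficient of $x^{3r-3-t}$, which is a triple convolution $\sum_{i+j+s=t}(\pm)\,a_ia_ja_s\,S_{\cdot}S_{\cdot}S_{\cdot}$ --- and verify the telescoping cancellation that reduces it to $\bigl(a_1^ta_0^{-(t-1)}-a_t\bigr)$ times a nonvanishing symmetric expression. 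That is precisely the page-long computation in the paper's final case, and nothing in your sketch establishes it; nor do you verify that the combined system of coefficient equations has no solutions outside $T_1\cup T_2\cup T_3$ (the $r=3$ coincidence producing $T_3$ is acknowledged but never checked, and the degenerate subcases $a_0=a_1=0$ with some later $a_t\ne 0$ are not treated). As written, the core implication ``$\tilde g\equiv 0\Rightarrow\boldsymbol a\in T_1\cup T_2\cup T_3$'' is asserted rather than proved, so the argument has a genuine gap.
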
 
\begin{proof}
	Please refer to Appendix A for the proof.
\end{proof}
   
   \begin{lemma}\label{Lem:4.9}
Notations as in Lemma~\ref{Lem:4.8}. Suppose $q$ is odd and $3\leq r\leq \frac{q-3\sqrt{q}-3}{4}$. Then for $\boldsymbol{a}=(a_{0},\cdots,a_{r})\in\mathbb{F}_{q}^{r+1}\backslash (\bigcup\limits_{i=1}^{3}T_{i})$,
   	there exists $\left\{x_{1},\cdots,x_{r}\right\}\subseteq \mathbb{F}_{q}^{*}$  such that 
   	\begin{equation*}
   	\sum\limits_{j=0}^{r-1}a_{j}c_{r-j}-\eta\sum\limits_{j=0}^{r-1}a_{j}\sum\limits_{\max\{0,j-1\}\leq w\leq j}c_{r-w}\Lambda_{1+w-j}^{\prime}+a_{r}=0.
   	\end{equation*}
   \end{lemma}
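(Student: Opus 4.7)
The plan is to construct admissible $x_{r-1}, x_r \in \mathbb{F}_q^{*} \setminus \{x_1,\ldots,x_{r-2}\}$ with $x_{r-1} \neq x_r$ satisfying the displayed equation, via a two-step procedure: first fix $\{x_1,\ldots,x_{r-2}\}$ using Lemma~\ref{Lem:4.8}, then count the admissible $(x_{r-1}, x_r)$ with a quadratic-character sum and subtract an explicit bad-solution bound. Invoking Lemma~\ref{Lem:4.8} gives $\{x_1,\ldots,x_{r-2}\} \subseteq \mathbb{F}_q^{*}$ with $\widetilde{g}(x)$ and $\widetilde{g}_4(x)$ nonzero polynomials in $x$. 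Their product has degree bounded by a linear function of $r$ that is smaller than $q-1$ under the hypothesis, so I can pick $x_0 \in \mathbb{F}_q^{*}$ where neither vanishes and replace each $x_i$ by $x_0 x_i$; hence I may assume $g_4 \neq 0$ and $\widetilde{g}(1) \neq 0$.

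With $X = x_{r-1} + x_r$, $Y = x_r$, Equation~(\ref{Equ:10}) rewrites the left-hand side as a polynomial $F(X, Y)$ satisfying
\[
F(X, Y) = (g_3 - g_4 Y)\bigl(X + (\beta_0 - Y)/2\bigr)^2 + h(Y),
\]
where $h(Y) := (g_4\beta_0 + g_3)Y^2 - g_2\beta_0 - g_1 + a_r - (\beta_0 - Y)^2(g_3 - g_4 Y)/4$ is a cubic of leading coefficient $g_4/4 \neq 0$. Evaluating at $Y = g_3/g_4$ gives $\widetilde{g}(1) = g_4^2 h(g_3/g_4)$, so $\widetilde{g}(1) \neq 0$ forces $(g_3 - g_4 Y)$ and $h(Y)$ to be coprime. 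Setting $W = X + (\beta_0 - Y)/2$, for each $Y \neq g_3/g_4$ the equation $(g_3 - g_4 Y) W^2 = -h(Y)$ admits $1 + \pi(-h(Y)/(g_3 - g_4 Y))$ solutions when $h(Y) \neq 0$, a single solution $W = 0$ when $h(Y) = 0$, and no solution at $Y = g_3/g_4$; summing using $\pi(0) = 0$ yields
\[
N := \#\{(X, Y) \in \mathbb{F}_q^2 : F(X, Y) = 0\} = q - 1 + \pi(-1) \sum_{Y \in \mathbb{F}_q} \pi\bigl((g_3 - g_4 Y) h(Y)\bigr).
\]
Because the quartic $(g_3 - g_4 Y) h(Y)$ has a simple root at $g_3/g_4$ by coprimality, its monic normalization is not a perfect square in $\mathbb{F}_q[Y]$; Proposition~\ref{prop3} with $d \leq 4$ distinct roots then gives $|N - (q - 1)| \leq 3\sqrt{q}$.

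The last step is to enumerate bad ordered pairs $(x_{r-1}, x_r)$, meaning those with one coordinate in $\{0, x_1, \ldots, x_{r-2}\}$ or with $x_{r-1} = x_r$. The corresponding bad lines in the $(X, Y)$-plane are $Y = 0$, $Y = x_i$, $X = Y$, $X = Y + x_i$, and $X = 2Y$; each of the first four passes through one of the three infinite points $[1:0:0]$, $[0:1:0]$, $[1:1:0]$ of the projective closure of $F = 0$ (whose cubic part is $g_4 XY(Y - X)$), so by B\'ezout the restriction is only a quadratic and contributes at most $2$ pairs per line, while $X = 2Y$ misses all infinite points and yields a cubic in $Y$ with at most $3$ pairs. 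The crude total is at most $2 + 2(r-2) + 2 + 2(r-2) + 3 = 4r - 1$. Combining with $N \geq q - 1 - 3\sqrt{q}$ and the hypothesis $r \leq (q - 3\sqrt{q} - 3)/4$ gives at least $q - 3\sqrt{q} - 4r \geq 3$ admissible pairs, proving the lemma.

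The principal obstacle is guaranteeing that $(g_3 - g_4 Y) h(Y)$ is not a perfect square in $\mathbb{F}_q[Y]$; were it a square, the Weil-type sum could reach $q - O(1)$ and the estimate would collapse. This is precisely why Lemma~\ref{Lem:4.8} is indispensable: excluding $\boldsymbol{a} \in T_1 \cup T_2 \cup T_3$ secures a choice of $\{x_1,\ldots,x_{r-2}\}$ making $\widetilde{g}(x)$ a nonzero polynomial, which after rescaling translates to $h(g_3/g_4) \neq 0$, hence the coprimality, hence the simple root that obstructs square structure.
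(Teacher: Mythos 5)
Your proof is correct and follows essentially the same route as the paper's: fix $x_{1},\dots,x_{r-2}$ via Lemma~\ref{Lem:4.8} plus a rescaling so that $g_{4}\neq 0$ and $h(g_{3}/g_{4})=h_{2}(0)\neq 0$, reduce via Equation~(\ref{Equ:10}) to counting zeros of the cubic $F(X,Y)$, bound that count below by $q-O(\sqrt{q})$ using Proposition~\ref{prop3} applied to the quartic $(g_{3}-g_{4}Y)h(Y)$ (whose non-squareness is exactly the simple root at $Y=g_{3}/g_{4}$ forced by $h_{2}(0)\neq 0$), and subtract at most $4r-1$ bad pairs coming from the excluded lines. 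The only difference is in the middle computation: you count the fiber over each $Y$ directly with the quadratic character, while the paper routes the same count through additive characters and Gauss sums before landing on the identical sum $\sum_{W}\pi(Wh_{2}(W))$; your version is slightly more elementary and gives the marginally cleaner bound $q-1-3\sqrt{q}$ in place of the paper's $q-3-3\sqrt{q}$, with no effect on the range of $r$.
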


   \begin{proof}
      For any ($r-2$)-subset $\left\{x_{1},\cdots,x_{r-2}\right\}\subseteq\mathbb{F}_{q}^{*}$, let
     \begin{equation*}
     \widetilde{g_{t}}(x)=\eta\sum\limits_{j=0}^{r-1}(-1)^{r-j+2-t}a_{j}S_{r-j+2-t,r-2}x^{r-j+2-t},t=1,2,3,4,
     \end{equation*}
     $ \tilde{\beta}_{0}(x)=S_{1,r-2}x-\eta^{-1}$ and 
	$\widetilde{g}(x)=\tilde{\beta}_{0}(x)\widetilde{g_{4}}(x)\widetilde{g_{3}}(x)^2+\widetilde{g_{3}}(x)^3-\tilde{\beta}_{0}(x)\widetilde{g_{2}}(x)\widetilde{g_{4}}(x)^2-\widetilde{g_{1}}(x)\widetilde{g_{4}}(x)^2+a_{r}\widetilde{g_{4}}(x)^2.$ From Lemma~\ref{Lem:4.8}, since $4\leq r\leq\frac{q-3\sqrt{q}-3}{4}<q-4$, we have $\tilde{g}(x),\tilde{g}_{4}(x)\neq 0$.
      Because of $\deg(\widetilde{g_{4}}(x))\leq r-2,\deg(\widetilde{g}(x))\leq 3r-5$ and $q-1>4r-7$, we can choose $\gamma\in \mathbb{F}_{q}^{*}$ such that  $\tilde{g}_{4}(\gamma)\neq 0,\tilde{g}(\gamma)\neq 0$. Therefore, it can be assumed that there exists $r-2$-subset $\left\{x_{1},\cdots,x_{r-2}\right\}\subseteq\mathbb{F}_{q}^{*}$ such that $g_{4}\neq 0$ and $(\beta_{0}g_{4}+g_{3})g_{3}^2-\beta_{0}g_{2}g_{4}^2-g_{1}g_{4}^2+a_{r}g_{4}^2\neq 0$, i.e. $a_{r}\neq -\frac{(\beta_{0}g_{4}+g_{3})g_{3}^2}{g^2_{4}}+\beta_{0}g_{2}+g_{1}$.

      From Equation~(\ref{Equ:10}),  we have 
   \small{\begin{equation}
   	\begin{aligned}
   	&\sum\limits_{j=0}^{r-1}a_{j}c_{r-j}-\eta\sum\limits_{j=0}^{r-1}a_{j}\sum\limits_{\max\{0,j-1\}\leq w\leq j}c_{r-w}\Lambda_{1+w-j}^{\prime}+a_{r}\\
    &=(g_{3}-g_{4}Y)X^2+(\beta_{0}-Y)(g_{3}-g_{4}Y)X+(g_{4}\beta_{0}+g_{3})Y^2-g_{2}\beta_{0}-g_{1}+a_{r}.
   	\end{aligned}
   	\end{equation}} 
Let $$F(X,Y)\stackrel{\vartriangle}{=}(g_{3}-g_{4}Y)X^2+(\beta_{0}-Y)(g_{3}-g_{4}Y)X+(g_{4}\beta_{0}+g_{3})Y^2-g_{2}\beta_{0}-g_{1}+a_{r},$$ 
then we only need to show that the equation $F(X,Y)$ has a solution $(X,Y)\in \mathbb{F}_{q}^2$ with 
$$X-Y\neq Y\in \mathbb{F}_{q}^{*}\backslash\mathcal{S},$$ where $\mathcal{S}=\left\{x_{1},\cdots,x_{r-2},0\right\}$. For each $\beta\in\mathcal{S},N(F(X,\beta))\leq 2$ and $N(F(X,X-\beta))\leq 2$. In addition, $N(F(2X,X))\leq 3$. Thus, we only need to show that $N(F(X,Y))>(2+2)\left|\mathcal{S}\right|+3=4r-1$, i.e., $N(F(X,Y))\geq 4r$. Next we use character sums to estimate the value of $N(F(X,Y))$.

Let $\chi(x)$ be the canonical additive character of $\mathbb{F}_q$ and  $\pi$ be  the quadratic character  of $\mathbb{F}_q$. Then 
\begin{equation*}
\begin{aligned}
N(F(X,Y))&= \frac{1}{q}\sum\limits_{X,Y,z\in \mathbb{F}_{q}}\chi(zF(X,Y))=q+\frac{1}{q}\sum\limits_{X\in \mathbb{F}_{q}}\sum\limits_{z\in \mathbb{F}_{q}^{*}}\chi(zF(X,0))+\frac{1}{q}\sum\limits_{X\in \mathbb{F}_{q}}\sum\limits_{Y,z\in \mathbb{F}_{q}^{*}}\chi(zF(X,Y))\\
\end{aligned}
\end{equation*}

Let $h_{1}(X)\stackrel{\vartriangle}{=}F(X,0)=g_{3}X^2+\beta_{0}g_{3}X-g_{2}\beta_{0}-g_{1}+a_{r}=g_{3}(X+\frac{\beta_{0}}{2})^2-\frac{g_{3}\beta_{0}^2}{4}-g_{2}\beta_{0}-g_{1}+a_{r}$. Thus, it is easy to verify that 
\begin{equation*}
\frac{1}{q}\sum\limits_{X\in \mathbb{F}_{q}}\sum\limits_{z\in \mathbb{F}_{q}^{*}}\chi(zF(X,0))=
\begin{cases}
   q-1,&\mbox{if}\ g_{3}=0\ \mbox{and}\  \frac{g_{3}\beta_{0}^2}{4}+g_{2}\beta_{0}+g_{1}-a_{r}=0;\\
-1,&\mbox{if}\ g_{3}=0\ \mbox{and}\  \frac{g_{3}\beta_{0}^2}{4}+g_{2}\beta_{0}+g_{1}-a_{r}\neq 0;\\
0,&\mbox{if}\ g_{3}\neq 0\ \mbox{and}\  \frac{g_{3}\beta_{0}^2}{4}+g_{2}\beta_{0}+g_{1}-a_{r}=0;\\
1,&\mbox{if}\ g_{3}\neq 0\ \mbox{and}\  g_{3}^{-1}\left(\frac{g_{3}\beta_{0}^2}{4}+g_{2}\beta_{0}+g_{1}-a_{r}\right)\  \mbox{is a square in}\  \mathbb{F}_{q}^{*};\\
-1,&\mbox{if}\ g_{3}\neq 0\ \mbox{and}\  g_{3}^{-1}\left(\frac{g_{3}\beta_{0}^2}{4}+g_{2}\beta_{0}+g_{1}-a_{r}\right)\  \mbox{is not a square.} 
\end{cases}
\end{equation*}

Next, we estimate the value of $\frac{1}{q}\sum\limits_{X\in \mathbb{F}_{q}}\sum\limits_{Y,z\in \mathbb{F}_{q}^{*}}\chi(zF(X,Y))$. 

\begin{equation*}
\begin{aligned}
&\frac{1}{q}\sum\limits_{X\in \mathbb{F}_{q}}\sum\limits_{Y,z\in \mathbb{F}_{q}^{*}}\chi(zF(X,Y))\\
&=\frac{1}{q}\sum\limits_{Y\in \mathbb{F}_{q}^{*}\backslash\{g_{3}g_{4}^{-1}\}}\sum\limits_{z\in \mathbb{F}_{q}^{*}}\sum\limits_{X\in \mathbb{F}_{q}}\chi(z(g_{3}-g_{4}Y)X^2+z(\beta_{0}-Y)(g_{3}-g_{4}Y)X+z(g_{4}\beta_{0}+g_{3})Y^{2}-zg_{2}\beta_{0}-zg_{1}+za_{r})\\
&+\frac{1}{q}\cdot\sum\limits_{z\in \mathbb{F}_{q}^{*}}\sum\limits_{X\in \mathbb{F}_{q}}\chi\left(z( \frac{(g_{4}\beta_{0}+g_{3})g_{3}^2}{g_{4}^{2}}-g_{2}\beta_{0}-g_{1}+a_{r}        )\right)\cdot \boldsymbol{1}[g_{3}\neq 0]\\
&\stackrel{(1)}{=}\frac{1}{q}\sum\limits_{Y\in \mathbb{F}_{q}^{*}\backslash \{g_{3}g_{4}^{-1}\}}\sum\limits_{z\in \mathbb{F}_{q}^{*}}
\chi(z(g_{4}\beta_{0}+g_{3})Y^2-zg_{2}\beta_{0}-zg_{1}+za_{r}-z\frac{(\beta_{0}-Y)^2(g_{3}-g_{4}Y)}{4})\cdot\\
&\pi(z(g_{3}-g_{4}Y))G(\pi,\chi)-\boldsymbol{1}[g_{3}\neq 0]\\
&\stackrel{(2)}{=}\frac{G(\pi,\chi)}{q}\sum\limits_{W\in \mathbb{F}_{q}^{*}\backslash \{g_{3}\}}\pi(W)\sum\limits_{z\in \mathbb{F}_{q}^{*}}\chi(zh_{2}(W))\pi(z)-\boldsymbol{1}[g_{3}\neq 0]
\end{aligned}
\end{equation*}
where $\boldsymbol{1}[g_{3}\neq 0]=\left\{
\begin{array}{cc}
1,&if\ g_{3}\neq 0\\
0,&if\ g_{3}=0
\end{array}\right.$,  $(1)$ follows from Proposition~\ref{prop2} $(ii)$ and $a_{r}\neq -\frac{(\beta_{0}g_{4}+g_{3})g_{3}^2}{g^2_{4}}+\beta_{0}g_{2}+g_{1}$, $(2)$ follows from that $W=g_{3}-g_{4}Y$ and $h_{2}(W)=\frac{g_{4}\beta_{0}
+g_{3}}{g_{4}^2}(g_{3}-W)^2-g_{2}\beta_{0}-g_{1}+a_{r}-\frac{W(\beta_{0}g_{4}-g_{3}+W)^2}{4g_{4}^2}$. 
  
Denote $\mathcal{Z}_0=\{W\in \mathbb{F}_q^{*}\backslash\{g_{3}\}:h_{2}(W)=0\}$, then by Proposition \ref{prop1} and $\sum\limits_{z\in\mathbb{F}^*_q}\pi(z)=0$, we have 

\begin{equation*}
\begin{aligned}
&\frac{G(\pi,\chi)}{q}\sum\limits_{W\in \mathbb{F}_{q}^{*}\backslash \{g_{3}\}}\pi(W)\sum\limits_{z\in \mathbb{F}_{q}^{*}}\chi(zh_{2}(W))\pi(z)-\boldsymbol{1}[g_{3}\neq 0]\\
&=\frac{G(\pi,\chi)}{q}\sum\limits_{W\in \mathcal{Z}_{0}}\pi(W)\sum\limits_{z\in \mathbb{F}_{q}^{*}}\chi(zh_{2}(W))\pi(z)+\frac{G(\pi,\chi)}{q}\sum\limits_{W\in \mathbb{F}_{q}^{*}\backslash (\{g_{3}\}\cup\mathcal{Z}_{0})}\pi(W)\sum\limits_{z\in \mathbb{F}_{q}^{*}}\chi(zh_{2}(W))\pi(z)-\boldsymbol{1}[g_{3}\neq 0]\\
&=\frac{G(\pi,\chi)}{q}\sum\limits_{W\in \mathbb{F}_{q}^{*}\backslash (\{g_{3}\}\cup\mathcal{Z}_{0})}\pi(Wh_{2}(W))\sum\limits_{z\in \mathbb{F}_{q}^{*}}\chi(zh_{2}(W))\pi(zh_{2}(W))-\boldsymbol{1}[g_{3}\neq 0]\\
&=\frac{G(\pi,\chi)^2}{q}\sum\limits_{W\in \mathbb{F}_{q}^{*}\backslash (\{g_{3}\}\cup\mathcal{Z}_{0})}\pi(Wh_{2}(W))-\boldsymbol{1}[g_{3}\neq 0]\\
&=\frac{G(\pi,\chi)^2}{q}\left(\sum\limits_{W\in \mathbb{F}_{q}^{*}}\pi(Wh_{2}(W))
- \pi(g_{3}h_{2}(g_{3}))\right)-\boldsymbol{1}[g_{3}\neq 0]
\end{aligned}
\end{equation*}

If $Wh_{2}(W)=g(W)^2$ for some nonzero polynomial $g(W) \in \mathbb{F}_q[W]$, then $g(0)=0$, write $g(W)=W\cdot h_{3}(W)$ for some nonzero polynomial $h_{3}(W) \in \mathbb{F}_q[W]$. Then $h_{2}(W)=Wh_{3}(W)^2$, which implies that  $ h_{2}(0)=\frac{(\beta_{0}g_{4}+g_{3})g_{3}^2}{g^2_{4}}-\beta_{0}g_{2}-g_{1}+a_{r}=0$, a contradiction.  Thus we have that $Wh_{2}(W)$ can not be a square of some polynomial in $\mathbb{F}_q[W]$. By Proposition \ref{prop3},	 it has 
\[\left|\sum_{W \in \mathbb{F}^*_q}\pi(Wh_{2}(W))\right|\leq 3\sqrt{q}.\]
Therefore, 
\begin{equation*}
\begin{aligned}
N(F(X,Y))&=q+\frac{1}{q}\sum\limits_{X\in \mathbb{F}_{q}}\sum\limits_{z\in \mathbb{F}_{q}^{*}}\chi(zF(X,0))+\frac{G(\pi,\chi)^2}{q}\left(\sum\limits_{W\in \mathbb{F}_{q}^{*}}\pi(Wh_{2}(W))
- \pi(g_{3}h_{2}(g_{3}))\right)-\boldsymbol{1}[g_{3}\neq 0]\\
&\geq q-3-3\sqrt{q}.
\end{aligned}
\end{equation*} 
Since $r \leq \frac{q-3\sqrt{q}-3}{4}$, we can deduce that ${\rm N}(F(X,Y)) \geq q-3-3\sqrt{q} \geq 4r$. The proof is completed.

\end{proof}

\begin{lemma}\label{Lem:4.10}
Suppose $q$ is odd and $3\leq r\leq\frac{q}{4}$. For any $b\in \mathbb{F}_{q}$, there exist pairwise distinct $x_{1},\cdots,x_{r}\in \mathbb{F}_{q}^{*}$ such that \begin{equation*}
	c_{1}-\eta c_{2}+\eta c_{1}^2+b=0.
	\end{equation*}
\end{lemma}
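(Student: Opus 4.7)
The plan is to reduce the existence question to a two-parameter problem and then apply a quadratic character sum estimate. First, since $c_1=-S_{1,r}$ and $c_2=S_{2,r}$, the equation $c_1-\eta c_2+\eta c_1^2+b=0$ simplifies to
\[
\eta S_{1,r}^2 - S_{1,r} - \eta S_{2,r} + b = 0,
\]
where $S_{j,r}$ denotes the $j$-th elementary symmetric polynomial of $x_1,\ldots,x_r$.

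I would then fix any $r-2$ distinct nonzero elements $x_2,\ldots,x_{r-1}\in \mathbb{F}_q^*$ (possible since $r-2\le q/4-2<q-1$), set $T=x_2+\cdots+x_{r-1}$ and $W=\sum_{2\le i<j\le r-1}x_ix_j$, and parametrize the remaining pair by $u=x_1+x_r$, $v=x_1x_r$. Using $S_{1,r}=u+T$ and $S_{2,r}=v+uT+W$, the equation is linear in $v$ and gives
\[
v = v(u) := u^2 + (T-\eta^{-1})u + T^2 - \eta^{-1}T - W + \eta^{-1}b.
\]
Hence $x_1,x_r$ exist as distinct elements of $\mathbb{F}_q^*\setminus\{x_2,\ldots,x_{r-1}\}$ iff some $u\in\mathbb{F}_q$ satisfies: (i) the discriminant
\[
\Delta(u) = u^2 - 4v(u) = -3u^2 + 4(\eta^{-1}-T)u + 4(\eta^{-1}T - T^2 + W - \eta^{-1}b)
\]
is a nonzero square, (ii) $v(u)\ne 0$, and (iii) no $x_i$ with $2\le i\le r-1$ is a root of $X^2-uX+v(u)$. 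Each of (ii) and (iii) is a quadratic constraint on $u$, ruling out at most $2+2(r-2)=2r-2$ values.

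For the main estimate, I would complete the square to rewrite $\Delta(u)=-3(u-u_0)^2+\beta$ with an explicit $\beta$ that is a polynomial in $T,W,b$, and apply Proposition~\ref{prop4} to the conic $3(u-u_0)^2+Y^2=\beta$; in the non-degenerate case $\beta\ne 0$ this gives $q-\pi(-3)$ solutions $(u,Y)$. Since each $u$ with $\Delta(u)$ a nonzero square contributes $2$ to this count and at most $2$ values of $u$ satisfy $\Delta(u)=0$, at least $(q-3)/2$ values of $u$ have $\Delta(u)$ a nonzero square. Because $q$ is odd and $r\le q/4$ forces $q\ge 4r+1$, the number of good $u$ is at least $(q-3)/2-(2r-2)\ge 1$, and any such $u$ produces the desired $r$-tuple.

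The main obstacle is the degenerate case $\beta=0$ (where $\Delta(u)$ is $-3$ times a perfect square, and the conic degenerates): the condition $\beta=0$ is a quadratic equation in $x_2$ once $x_3,\ldots,x_{r-1}$ are fixed, so it is avoided by a generic choice of $x_2$ (only at most two bad values on top of the $r-3$ values $x_3,\ldots,x_{r-1}$, which is negligible against $q-1$). In characteristic $3$, a separate short argument is needed since $\Delta(u)$ drops to a linear polynomial in $u$; there one chooses $x_2$ so that its leading coefficient $\eta^{-1}-T$ is nonzero, so that $u\mapsto\Delta(u)$ is a bijection of $\mathbb{F}_q$, giving exactly $(q-1)/2$ values of $u$ with $\Delta(u)$ a nonzero square, and the same final bound $\ge 1$ on surviving $u$ holds.
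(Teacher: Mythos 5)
Your proposal is correct and follows essentially the same route as the paper: fix $r-2$ of the variables, reduce the condition to a conic in the two remaining degrees of freedom, count its points with Proposition~\ref{prop4}, remove the degenerate constant term and the characteristic-$3$ case by a generic choice of one fixed variable, and finish by discarding at most $O(r)$ bad parameter values using $q\geq 4r+1$. The only difference is cosmetic: the paper parametrizes the free pair by $X=\frac{x_{r-1}+x_r}{2}$, $Y=\frac{x_{r-1}-x_r}{2}$ and counts zeros of the resulting quadratic $F(X,Y)$ directly, whereas you solve linearly for the product $v=x_1x_r$ and re-encode the discriminant-is-a-nonzero-square condition as the same conic count.
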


\begin{proof}
	Let $f=-S_{1,r-2}-\eta S_{2,r-2}+\eta S_{1,r-2}^2+b,X=\frac{x_{r-1}+x_{r}}{2},Y=\frac{x_{r-1}-x_{r}}{2}$, we have\begin{equation*}
	c_{1}-\eta c_{2}+\eta c_{1}^2+b=3\eta X^2+\eta Y^2+2(\eta S_{1,r-2}-1)X+f.
	\end{equation*}
	Thus, it is sufficient to show that there exists $x_{1},\cdots,x_{r-2}\in \mathbb{F}_{q}^{*}$ such that the equation
	\begin{equation*}
	F(X,Y)=3\eta X^2+\eta Y^2+2(\eta S_{1,r-2}-1)X+f=0
	\end{equation*}
	has a solution $(X,Y)\in \mathbb{F}_{q}^2$, where $x_{1},\cdots,x_{r-2},X+Y,X-Y\in\mathbb{F}_{q}^{*}$ are all distinct.
	
	Note that $N(F(X,X-x_{i}))\leq 2,N(F(X,x_{i}-X))\leq 2$ for all $1\leq i\leq r-2$ and $N(F(X,0))\leq 2,N(F(X,X))\leq 2,N(F(X,-X))\leq 2$. So we only need to show that there exists pairwise distinct $x_{1},\cdots,x_{r-2}\in \mathbb{F}_{q}^{*}$ such that $N(F(X,Y))>4r-2$.
	
	The case $Char(\mathbb{F}_{q})=3$. Since $q>r-1$, we choose $x_{r-2}\in \mathbb{F}_{q}^{*}\setminus\left\{x_{1},\cdots,x_{r-3},\eta^{-1}-S_{1,r-3}\right\}$ such that $x_{1},x_{2},\cdots,x_{r-2}\in\mathbb{F}_{q}^{*}$ are pairwise distinct and $\eta S_{1,r-2}-1\neq 0$. Thus, we have $$N(F(X,Y))=q\geq 4r>4r-2.$$ 
    
    The case $Char(\mathbb{F}_{q})>3$. Since $q>r-1$, we choose $x_{r-2}\in \mathbb{F}_{q}^{*}\setminus\left\{x_{1},\cdots,x_{r-3},\eta^{-1}-S_{1,r-3}\right\}$ such that $x_{1},\cdots,x_{r-2}\in \mathbb{F}_{q}^{*}$ are pairwise distinct and $f-\frac{(\eta S_{1,r-2}-1)^2}{3\eta}\neq 0$. Thus, by Proposition~\ref{prop4}, the equation
	\begin{equation*}
	F(X,Y)=3\eta\left(X+\frac{\eta S_{1,r-2}-1}{3\eta}\right)^2+\eta Y^2+f-\frac{(\eta S_{1,r-2}-1)^2}{3\eta}
	\end{equation*}
	satisfies $N(F(X,Y))\geq q-1\geq 4r-1>4r-2$.
\end{proof}

\begin{lemma}\label{Lem:4.11}
    Suppose $q\geq 7$ is odd, $r=3$ and $\boldsymbol{a}=(a_{0},a_{1},a_{2},a_{3})\in\mathbb{F}_{q}^4$, where $a_{0}=0,a_{1}=2b\eta,a_{2}=b,a_{3}=\frac{b}{4\eta}$ and $b\neq 0$. Then there exists pairwise distinct $x_{1},x_{2},x_{3}\in\mathbb{F}_{q}^{*}$ such that 
    \begin{equation*}
   	\sum\limits_{j=0}^{2}a_{j}c_{3-j}-\eta\sum\limits_{j=0}^{2}a_{j}\sum\limits_{\max\{0,j-1\}\leq w\leq j}c_{3-w}\Lambda_{1+w-j}^{\prime}+a_{3}=0.
   	\end{equation*}
\end{lemma}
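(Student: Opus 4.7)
The plan is to substitute the prescribed values of $\boldsymbol{a}$ into the target sum, perform the rescaling $y_i = 2\eta x_i$ to absorb $\eta$, and recognize the resulting identity as the vanishing at a specific point of the characteristic polynomial of the $y_i$.

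First, I would use $\Lambda_0' = 1$ and $\Lambda_1' = -c_1$ together with $a_0 = 0$ to expand the left-hand side of the target equation as
\[
a_1 c_2 + a_2 c_1 - \eta a_1 (c_3 - c_1 c_2) - \eta a_2 (c_2 - c_1^2) + a_3.
\]
Plugging in $a_1 = 2b\eta$, $a_2 = b$, $a_3 = b/(4\eta)$ and dividing through by the nonzero scalar $b$ reduces the problem to finding pairwise distinct $x_1, x_2, x_3 \in \mathbb{F}_q^*$ such that
\[
\eta c_2 - 2\eta^2 c_3 + 2\eta^2 c_1 c_2 + c_1 + \eta c_1^2 + \frac{1}{4\eta} = 0.
\]

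Next, I would substitute $y_i = 2\eta x_i$ and let $t_1, t_2, t_3$ denote the elementary symmetric polynomials in $y_1, y_2, y_3$. Since $c_k = (-1)^k (2\eta)^{-k} t_k$, multiplying through by $4\eta$ transforms the displayed identity into the clean form
\[
t_3 + (1 - t_1)\,t_2 + (t_1 - 1)^2 = 0.
\]
The crucial observation is that the left-hand side equals $-P(t_1 - 1)$, where $P(z) = (z - y_1)(z - y_2)(z - y_3) = z^3 - t_1 z^2 + t_2 z - t_3$; indeed, a direct computation with $u = t_1 - 1$ yields $P(u) = -u^2 + t_2 u - t_3$. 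Consequently, the equation holds if and only if $t_1 - 1$ is one of $y_1, y_2, y_3$, which, after relabeling, is equivalent to the elementary condition $y_1 + y_2 = 1$.

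Finally, I would exhibit such a triple. Since $q \geq 7$ is odd, the set $\mathbb{F}_q \setminus \{0, 1, 1/2\}$ has at least four elements, so I can pick $y_1$ from it and set $y_2 = 1 - y_1$; this gives distinct $y_1, y_2 \in \mathbb{F}_q^*$ with $y_1 + y_2 = 1$. Choosing $y_3 \in \mathbb{F}_q^* \setminus \{y_1, y_2\}$, possible because $|\mathbb{F}_q^*| \geq 6$, and setting $x_i = y_i/(2\eta)$ yields three pairwise distinct elements of $\mathbb{F}_q^*$ satisfying the required identity. The only non-routine step in the whole argument is discovering the rescaling $y_i = 2\eta x_i$; once that substitution is in hand, the recognition of the condition as a characteristic-polynomial evaluation trivializes the existence question.
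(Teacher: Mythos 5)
Your proof is correct, and it takes a genuinely different route from the paper's. You normalize by $y_i=2\eta x_i$ and observe that the whole expression equals $-\tfrac{b}{4\eta}P(t_1-1)$ with $P(z)=\prod_i(z-y_i)$, i.e.
\[
-\tfrac{b}{4\eta}\,(y_1+y_2-1)(y_1+y_3-1)(y_2+y_3-1),
\]
so vanishing is \emph{equivalent} to $y_i+y_j=1$ for some pair, after which exhibiting a witness is trivial for $q\ge 7$. (I checked the identity $-P(t_1-1)=(t_1-1)^2+(1-t_1)t_2+t_3$; it is exactly the expression obtained after clearing $\tfrac{b}{4\eta}$.) The paper instead restricts to the slice $x_1+x_2+x_3=0$, sets $X=x_1+x_2$, $Y=x_1-x_2$, and factors the resulting two-variable polynomial as $\tfrac{b\eta}{4}(2\eta X-1)\bigl(Y^2-(X+\eta^{-1})^2\bigr)$ (the displayed formula in the paper misplaces the exponent), then kills the factor $2\eta X-1$ by taking $X=\tfrac{1}{2\eta}$; note its witness $y_3=-1$, $y_1+y_2=1$ is a special case of your condition. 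Your approach buys a complete and symmetric description of the vanishing locus and avoids the ad hoc restriction and the two-variable factorization; the paper's is a quicker-to-find but less transparent computation. Both are elementary and both correctly handle the distinctness and nonvanishing constraints on $x_1,x_2,x_3$.
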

\begin{proof}
    For any $2$-subset $\left\{x_{1},x_{2}\right\}\subseteq \mathbb{F}_{q}^{*}$, let $x_{3}=-x_{1}-x_{2},X=x_{1}+x_{2},Y=x_{1}-x_{2}$. Thus,
     We know that \begin{equation*}\begin{aligned}         &\sum\limits_{j=0}^{2}a_{j}c_{3-j}-\eta\sum\limits_{j=0}^{2}a_{j}\sum\limits_{\max\{0,j-1\}\leq w\leq j}c_{3-w}\Lambda_{1+w-j}^{\prime}+a_{3}\\         =&2b\eta c_{2}+bc_{1}-2b\eta^2(c_{3}-c_{1}c_{2})-b\eta(c_{2}-c_{1}^2)+\frac{b}{4\eta}\\
    =&\frac{b\eta}{4}\left(4c_{2}+4\eta^{-1}c_{1}-8\eta(c_{3}-c_{1}c_{2})+4c_{1}^2+\frac{1}{\eta^2}\right)\\
    =&\frac{b\eta}{4}\left(4x_{1}x_{2}-4(x_{1}+x_{2})^2-8\eta x_{1}x_{2}(x_{1}+x_{2})+\frac{1}{\eta^2}\right)\\
    =&\frac{b\eta}{4}(X^2-Y^2-4X^2-2\eta X(X^2-Y^2)+\frac{1}{\eta^2})\\
    =&\frac{b\eta(2\eta X-1)}{4}\left(Y^2-(X+\eta^{-1})\right)^2.
    \end{aligned}\end{equation*}

    Thus, we can choose $X=\frac{1}{2\eta}, Y=\frac{1}{2\eta}+2a$ such that $x_{1}=\frac{1}{2\eta}+a,x_{2}=-a,x_{3}=-\frac{1}{2\eta}\in\mathbb{F}_{q}^{*}$ are pairwise distinct and $\sum\limits_{j=0}^{2}a_{j}c_{3-j}-\eta\sum\limits_{j=0}^{2}a_{j}\sum\limits_{\max\{0,j-1\}\leq w\leq j}c_{3-w}\Lambda_{1+w-j}^{\prime}+a_{3}=0$, where $a\in\mathbb{F}_{q}^{*}\backslash \{-\frac{1}{2\eta},-\frac{1}{4\eta},-\eta^{-1},\frac{1}{2\eta}\}$.
    
\end{proof}

Now we present the main result for deep holes of $TRS_{k}(\mathbb{F}_{q}^{*},k-1,\eta)$ in the odd $q$ case.
\begin{theorem}
	Suppose $q\geq 7$ is odd and $\frac{3q-5+3\sqrt{q}}{4}\leq k\leq q-5$. Then Corollary~\ref{Lem:3.7} provides all deep holes of $TRS_{k}(\mathbb{F}_{q}^{*},k-1,\eta)$.
\end{theorem}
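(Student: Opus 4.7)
The plan is to combine Lemmas~\ref{Lem:4.7}, \ref{lem:4.7}, \ref{Lem:4.8}, \ref{Lem:4.9}, \ref{Lem:4.10} and~\ref{Lem:4.11} to eliminate every syndrome that is not of the form $(0,\ldots,0,a_{r})$ with $a_{r}\in\mathbb{F}_{q}^{*}$. By Proposition~\ref{Prop:3.2} and Theorem~\ref{The:8}, a vector $\boldsymbol{a}=(a_{0},\ldots,a_{r})\in\mathbb{F}_{q}^{r+1}$ is a deep hole syndrome of $TRS_{k}(\mathbb{F}_{q}^{*},k-1,\eta)$ if and only if for every $r$-subset $\{x_{1},\ldots,x_{r}\}\subseteq\mathbb{F}_{q}^{*}$ the quantity
\[
E(\boldsymbol{a};x_{1},\ldots,x_{r})=\sum_{j=0}^{r-1}a_{j}c_{r-j}-\eta\sum_{j=0}^{r-1}a_{j}\sum_{\max\{0,j-1\}\leq w\leq j}c_{r-w}\Lambda_{1+w-j}^{\prime}+a_{r}
\]
is nonzero. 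Corollary~\ref{Lem:3.7} already certifies that every $(0,\ldots,0,a_{r})$ with $a_{r}\neq 0$ is a deep hole syndrome, so I need to show that every other $\boldsymbol{a}$ admits an $r$-subset with $E(\boldsymbol{a};x_{1},\ldots,x_{r})=0$. Note that the hypothesis $\frac{3q+3\sqrt{q}-5}{4}\leq k\leq q-5$ translates into $3\leq r\leq\frac{q-3\sqrt{q}-3}{4}$, which is precisely the range needed by Lemma~\ref{Lem:4.9}.

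I will partition $\mathbb{F}_{q}^{r+1}\setminus\{(0,\ldots,0,a_{r}):a_{r}\in\mathbb{F}_{q}^{*}\}$ according to the auxiliary sets $T_{1},T_{2},T_{3}$ introduced in Lemma~\ref{Lem:4.8}. If $\boldsymbol{a}\in T_{1}$, Lemma~\ref{lem:4.7} directly produces the required $x_{1},\ldots,x_{r}$. If $\boldsymbol{a}\notin T_{1}\cup T_{2}\cup T_{3}$, Lemma~\ref{Lem:4.9} does the same, invoking the character-sum estimate that is only valid in the stated range of $r$. In the case $r=3$ and $\boldsymbol{a}\in T_{3}$, Lemma~\ref{Lem:4.11} settles the matter; for $r\geq 4$ the set $T_{3}$ does not appear as it is intrinsically four-dimensional, so this case is automatic.

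The remaining work lies in the $T_{2}$ case, which I split into two branches. When $\boldsymbol{a}=(a_{0},0,\ldots,0)$ with $a_{0}\neq 0$, a short direct computation collapses $E(\boldsymbol{a};x_{1},\ldots,x_{r})$ to $a_{0}c_{r}(1+\eta c_{1})$, and by Lemma~\ref{Lem:4.7} I can choose the $x_{i}$'s so that $\sum_{i=1}^{r}x_{i}=\eta^{-1}$, i.e., $c_{1}=-\eta^{-1}$, giving $E=0$. When $\boldsymbol{a}=(0,\ldots,0,a_{r-1},a_{r})$ with $a_{r-1}\neq 0$, using $\Lambda_{0}^{\prime}=1$ and $\Lambda_{1}^{\prime}=-c_{1}$ the expression simplifies to
\[
E(\boldsymbol{a};x_{1},\ldots,x_{r})=a_{r-1}\bigl(c_{1}-\eta c_{2}+\eta c_{1}^{2}\bigr)+a_{r},
\]
and Lemma~\ref{Lem:4.10} applied with $b=a_{r}/a_{r-1}$ produces the desired $r$-subset making this vanish. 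Combining all four cases forces any deep hole syndrome to lie in $\{(0,\ldots,0,a_{r}):a_{r}\in\mathbb{F}_{q}^{*}\}$, and then Theorem~\ref{The:3.6} together with Corollary~\ref{Lem:3.7} identifies the corresponding deep holes as exactly those generated by $ax^{k}+f_{k,k-1,\eta}(x)$.

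The main technical obstacle is the verification for the $T_{2}$ branch $(0,\ldots,0,a_{r-1},a_{r})$: one must genuinely simplify the general syndrome expression under the sparse support hypothesis on $\boldsymbol{a}$ to see the coefficient $c_{1}-\eta c_{2}+\eta c_{1}^{2}$ emerge, so that Lemma~\ref{Lem:4.10} can be deployed with the correct value of $b$. The other delicate point is confirming that the partition indeed exhausts $\mathbb{F}_{q}^{r+1}\setminus\{(0,\ldots,0,a_{r})\}$ for every $r$ in the permitted range, which is a careful but elementary unfolding of the definitions of $T_{1},T_{2},T_{3}$.
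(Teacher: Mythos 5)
Your proposal is correct and follows essentially the same route as the paper: the same syndrome criterion from Proposition~\ref{Prop:3.2} and Theorem~\ref{The:8}, the same case partition (the paper simply relabels the sets of Lemma~\ref{Lem:4.8} as four sets $T_1,\ldots,T_4$ rather than sub-splitting $T_2$), and the same assignment of Lemmas~\ref{Lem:4.7}, \ref{lem:4.7}, \ref{Lem:4.9}, \ref{Lem:4.10} and~\ref{Lem:4.11} to the respective cases, including the reductions to $a_0c_r(1+\eta c_1)$ and $a_{r-1}(c_1-\eta c_2+\eta c_1^2)+a_r$.
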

	   \begin{proof}
	   	Denote $r=q-k-2$, then $3\leq r\leq\frac{q-3\sqrt{q}-3}{4}$. Suppose	$H\cdot \boldsymbol{u}^{T}=\boldsymbol{a}=(a_{0},\cdots,a_{r})\in \mathbb{F}_{q}^{r+1}$. Let 
	   	\begin{equation*}
	   	\begin{array}{l}
	   	T_{1}=\left\{(0,2b\eta,b,\frac{b}{4\eta})\in\mathbb{F}_{q}^{4}:b\neq 0\right\},\\
	   	T_{2}=\left\{(0,\cdots,0,a_{r-1},a_{r})\in \mathbb{F}_{q}^{r+1}:a_{r-1},a_{r}\in\mathbb{F}_{q}\right\},\\
	   	T_{3}=\left\{(a_{0},0,\cdots,0)\in \mathbb{F}_{q}^{r+1}:a_{0}\neq 0
	   	\right\},
	   	\end{array}
	   	\end{equation*}
        and
        \begin{equation*}
            T_{4}=\left\{(a_{0},a_{1},\frac{a_{1}^{2}}{a_{0}},\cdots,\frac{a_{1}^{r-1}}{a_{0}^{r-2}},
            \frac{a_{1}^{r}}{a_{0}^{r-1}}-\eta\frac{a_{1}^{r+1}}{a_{0}^r})\in\mathbb{F}_{q}^{*}:a_{0},a_{1}\neq 0\right\}.
        \end{equation*}
	   	
        If $\boldsymbol{a}=(0,\cdots,0,a_{r})$, where $a_{r}\in \mathbb{F}_{q}^{*}$, then by Corollary~\ref{Lem:3.7}, $\boldsymbol{u}$ is a deep hole of $TRS_{k}(\mathbb{F}_{q}^{*},k-1,\eta)$. 
        
        If $\boldsymbol{a}\in T_{1}$, then by Lemma~\ref{Lem:4.11}, Proposition~\ref{Prop:3.2} and Theorem~\ref{The:8}, $\boldsymbol{u}$ is not a deep hole of $TRS_{k}(\mathbb{F}_{q}^{*},k-1,\eta)$. 
        
        If $\boldsymbol{a}\in T_{2}\backslash\left\{(0,\cdots,0,b)\in\mathbb{F}_{q}^{r+1}:b\neq 0\right\}$, then $\boldsymbol{a}=\boldsymbol{0}$ or $\boldsymbol{a}=(0,\cdots,0,a_{r-1},a_{r})$, where $a_{r-1}\neq 0$. For the former, $\boldsymbol{a}$ is not a  deep hole syndrome of $TRS_{k}(\mathbb{F}_{q}^{*},k-1,\eta)$. For the latter, from Lemma~\ref{Lem:4.10}, Since $3\leq r\leq \frac{q-3\sqrt{q}-3}{4}<\frac{q}{4}$, there exists $x_{1},\cdots,x_{r}\in\mathbb{F}_{q}^{*}$ such that $c_{1}-\eta c_{2}+\eta c_{1}^2+a_{r}a_{r-1}^{-1}=0$. Thus,
        \begin{equation*}
            \begin{aligned}
              \sum\limits_{j=0}^{r-1}a_{j}c_{r-j}-\eta\sum\limits_{j=0}^{r-1}a_{j}\sum\limits_{\max\{0,j-1\}\leq w\leq j}c_{r-w}\Lambda_{1+w-j}^{\prime}+a_{r}=a_{r-1}\left(c_{1}-\eta c_{2}+\eta c_{1}^2+a_{r}a_{r-1}^{-1}\right)=0.  
            \end{aligned}
        \end{equation*}
        Therefore, from Proposition~\ref{Prop:3.2} and Theorem~\ref{The:8}, $\boldsymbol{u}$ is not a deep hole of $TRS_{k}(\mathbb{F}_{q}^{*},k-1,\eta)$.

        If $\boldsymbol{a}\in T_{3}$, then $\sum\limits_{j=0}^{r-1}a_{j}c_{r-j}-\eta\sum\limits_{j=0}^{r-1}a_{j}\sum\limits_{\max\{0,j-1\}\leq w\leq j}c_{r-w}\Lambda_{1+w-j}^{\prime}+a_{r}=a_{0}c_{r}(1+\eta c_{1}).$ From Lemma~\ref{Lem:4.7}, there exists $r$-subset $\left\{x_{1},\cdots,x_{r}\right\}\subseteq \mathbb{F}_{q}^{*}$ such that $\sum\limits_{i=1}^{r}x_{i}=\eta^{-1}$. Thus, $$\sum\limits_{j=0}^{r-1}a_{j}c_{r-j}-\eta\sum\limits_{j=0}^{r-1}a_{j}\sum\limits_{\max\{0,j-1\}\leq w\leq j}c_{r-w}\Lambda_{1+w-j}^{\prime}+a_{r}=0.$$
        Therefore, from Proposition~\ref{Prop:3.2} and Theorem~\ref{The:8}, $\boldsymbol{u}$ is not a deep hole of $TRS_{k}(\mathbb{F}_{q}^{*},k-1,\eta)$.

        If $\boldsymbol{a}\in T_{4}$, from Lemma~\ref{lem:4.7}, there exists $r$-subset $\left\{x_{1},\cdots,x_{r}\right\}\subseteq \mathbb{F}_{q}^{*}$ such that $$\sum\limits_{j=0}^{r-1}a_{j}c_{r-j}-\eta\sum\limits_{j=0}^{r-1}a_{j}\sum\limits_{\max\{0,j-1\}\leq w\leq j}c_{r-w}\Lambda_{1+w-j}^{\prime}+a_{r}=0.$$
        Therefore, from Proposition~\ref{Prop:3.2} and Theorem~\ref{The:8}, $\boldsymbol{u}$ is not a deep hole of $TRS_{k}(\mathbb{F}_{q}^{*},k-1,\eta)$.

        If $\boldsymbol{a}\in\mathbb{F}_{q}^{*}\backslash (\bigcup\limits_{i=1}^{4}T_{i})$, from Lemma~\ref{Lem:4.9}, there exists $\left\{x_{1},\cdots,x_{r}\right\}\subseteq \mathbb{F}_{q}^{*}$  such that 
   	\begin{equation*}
   	\sum\limits_{j=0}^{r-1}a_{j}c_{r-j}-\eta\sum\limits_{j=0}^{r-1}a_{j}\sum\limits_{\max\{0,j-1\}\leq w\leq j}c_{r-w}\Lambda_{1+w-j}^{\prime}+a_{r}=0.
   	\end{equation*}
    Therefore, from Proposition~\ref{Prop:3.2} and Theorem~\ref{The:8}, $\boldsymbol{u}$ is not a deep hole of $TRS_{k}(\mathbb{F}_{q}^{*},k-1,\eta)$.

    In summary, Corollary~\ref{Lem:3.7} provides all deep holes of $TRS_{k}(\mathbb{F}_{q}^{*},k-1,\eta)$.
	   \end{proof}

       \section{Conclusion}
       As a generalization of Reed-Solomon codes, TRS codes has received great attention from scholars in recent years. In this paper, we mainly study the deep hole problem of TRS codes. Firstly, for a general evaluation set $\mathcal{A}\subseteq\mathbb{F}_{q}$ and $0\leq l\leq k-1$, we give a sufficient and necessary condition that the vector $\boldsymbol{u}\in\mathbb{F}_{q}^{n-k}$ is a deep hole syndrome of the $TRS_{k}(\mathcal{A},l,\eta)$. Next, for special evaluation set $\mathcal{A}=\mathbb{F}_{q}^*$ and $l= k-1$ we prove that there are no other deep holes of $TRS_{k}(\mathbb{F}_{q}^{*},k-1,\eta)$ for $\frac{3q+2\sqrt{q}-8}{4}\leq k\leq q-5$ when q is even; and for  $\frac{3q+3\sqrt{q}-5}{4}\leq k\leq q-5$ when q is odd. Finally, we completely determine their deep holes for $q-4\leq k\leq q-2$ when $q$ is even.

       Lastly, we leave some directions for the future research:
       \begin{itemize}
           \item Find more deep holes of $TRS_k(\mathcal{A},l,\eta)$ for a general evaluation set $\mathcal{A}$ and general $l$. In this paper, we just give a   sufficient and necessary condition that the vector $\boldsymbol{u}\in\mathbb{F}_{q}^{n-k}$ to be a deep hole syndrome in  $TRS_k(\mathcal{A},l,\eta)$. However, based on this sufficient and necessary condition, it is still difficult to provide explicit expressions for deep holes in $TRS_{k}(\mathcal{A},l,\eta)$. It is not hard to see that this problem will become more difficult as the evaluation set $\mathcal{A}$ becomes smaller, and we predict that there will be more deep holes for small evaluation set $\mathcal{A}$.

           \item 
          Study the covering radius problem and deep hole problem of TRS codes with more twists. For example, the authors~\cite{gu2023twisted} considered the following class of TRS codes. Let $1\leq\ell<\min\{k,n-k\},\boldsymbol{\eta}=(\eta_{1},\cdots,\eta_{\ell})\in(\mathbb{F}_{q}^{*})^{\ell}$, pairwise distinct $\alpha_{1},\cdots,\alpha_{n}\in\mathbb{F}_{q}$, and \begin{equation*}
               \mathcal{S}=\left\{\sum\limits_{i=0}^{k-1}f_{i}x^i+\sum\limits_{i=0}^{\ell-1}\eta_{i+1}f_{k-\ell+i}x^{k+i}:for\ all\ 0\leq i\leq k-1, f_{i}\in\mathbb{F}_{q}\right\}.
           \end{equation*}
           Then $\mathcal{C}=\left\{(f(\alpha_{1}),\cdots,f(\alpha_{n}):f(x)\in\mathcal{S}\right\}$ is a class of TRS code with $\ell$ twists. It is not hard to prove that the covering radius of TRS code $\mathcal{C}$ ranges from $n-k-\ell+1$ to $n-k$.
       \end{itemize}

\appendices

\section{Proof of Lemma~\ref{Lem:4.8}}

Before proving Lemma~\ref{Lem:4.8}, we first give some lemmas. 
\begin{lemma}\label{Lem:C.1}
	Suppose $q$ is odd and $1\leq j\leq i-1\leq q-5$, then there exists $\left\{x_{1},\cdots,x_{i}\right\}\subseteq \mathbb{F}_{q}^{*}$ such that 
	\begin{equation*}
	S_{j,i}\neq 0\quad \mbox{and} \quad S_{1,i} S_{j,i}-S_{j+1,i}\neq 0.
	\end{equation*}
\end{lemma}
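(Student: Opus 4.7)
The plan is to reduce the lemma to a straightforward auxiliary statement and then finish by a one-variable argument in $x_i$, choosing it to avoid finitely many forbidden values.

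First, I would establish the following auxiliary lemma: for every $0 \leq s \leq m \leq q-2$, there exist distinct $x_1, \dots, x_m \in \mathbb{F}_q^*$ such that $S_{s,m} \neq 0$. I would prove this by induction on $m$, handling $s=0$ trivially (since $S_{0,m}=1$) and for $s \geq 1$ using the recursion $S_{s,m} = S_{s,m-1} + x_m S_{s-1,m-1}$: by the inductive hypothesis applied at $(s-1, m-1)$, pick $x_1, \dots, x_{m-1}$ making $S_{s-1,m-1} \neq 0$; then $S_{s,m}$ is a nonconstant affine function of $x_m$, so it vanishes for at most one value. Since $|\mathbb{F}_q^* \setminus \{x_1, \dots, x_{m-1}\}| = q - m \geq 2$, an acceptable $x_m$ exists.

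Next, I would invoke this auxiliary lemma at $(s,m) = (j-1, i-1)$, which is legal because $j - 1 \leq i - 2 \leq q - 6 \leq q - 2$, to fix distinct nonzero $x_1, \dots, x_{i-1}$ with $S_{j-1, i-1} \neq 0$. Expanding using the recursions
\[
S_{j,i} = S_{j,i-1} + x_i S_{j-1,i-1}, \qquad S_{j+1,i} = S_{j+1,i-1} + x_i S_{j,i-1}, \qquad S_{1,i} = S_{1,i-1} + x_i,
\]
a direct calculation yields
\[
S_{1,i} S_{j,i} - S_{j+1,i} \;=\; \bigl(S_{1,i-1} S_{j,i-1} - S_{j+1,i-1}\bigr) \;+\; x_i\, S_{1,i-1} S_{j-1,i-1} \;+\; x_i^{2}\, S_{j-1,i-1}.
\]
Hence $S_{j,i}$ is a nonconstant linear polynomial in $x_i$ and $S_{1,i} S_{j,i} - S_{j+1,i}$ is a genuine quadratic polynomial in $x_i$, since both have leading coefficient $S_{j-1,i-1}\neq 0$. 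Together they vanish for at most $1 + 2 = 3$ values of $x_i$.

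Finally, excluding $0$ and $\{x_1, \dots, x_{i-1}\}$ removes at most an additional $i$ values of $x_i$, so the total number of forbidden $x_i$ is at most $i + 3$. Since the hypothesis $i - 1 \leq q - 5$ gives $q \geq i + 4$, a legitimate $x_i \in \mathbb{F}_q^* \setminus \{x_1, \dots, x_{i-1}\}$ remains. There is no serious obstacle; the only care point is verifying that the coefficient of $x_i^2$ in the expansion above is indeed $S_{j-1,i-1}$, which is immediate from the three recursions and makes the entire argument work.
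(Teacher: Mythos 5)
Your proposal is correct and follows essentially the same route as the paper's proof: first build $x_1,\dots,x_{i-1}$ so that $S_{j-1,i-1}\neq 0$ (the paper does this by the same one-variable-at-a-time recursion $S_{\ell,m}=S_{\ell,m-1}+x_m S_{\ell-1,m-1}$ that underlies your auxiliary lemma), and then observe that $S_{j,i}$ is affine and $S_{1,i}S_{j,i}-S_{j+1,i}$ is a genuine quadratic in $x_i$ with leading coefficient $S_{j-1,i-1}$, so a valid $x_i$ survives the exclusion count under $i-1\leq q-5$. Your expansion of the quadratic matches the paper's exactly, and the counting is sound.
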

\begin{proof}
Let $x_{1},\cdots,x_{i-j}\in\mathbb{F}_{q}^{*}$ are pairwise distinct. For all $1\leq\ell\leq j-1$,let 
$$x_{i-j+\ell}\in\mathbb{F}_{q}^{*}\backslash\{x_{1},\cdots,x_{i-j+\ell-1},-\frac{S_{\ell,i-j+\ell-1}}{S_{\ell-1,i-j+\ell-1}}\}$$
such that $x_{1},\cdots,x_{i-j+\ell}\in\mathbb{F}_{q}^{*}$ are pairwise distinct and 
$$S_{\ell,i-j+\ell}=S_{\ell,i-j+\ell-1}+S_{\ell-1,i-j+\ell-1}\cdot x_{i-j+\ell}\neq 0.$$
Thus, there exists $i-1$-subset $\left\{x_{1},\cdots,x_{i-1}\right\}\subseteq\mathbb{F}_{q}^{*}$ such that $S_{j-1,i-1}\neq 0$. In addition, 
\begin{equation*}
\begin{aligned}
   S_{1,i}S_{j,i}-S_{j+1,i}&=(S_{1,i-1}+x_{i})\cdot (S_{j,i-1}+S_{j-1,i-1}x_{i})-(S_{j+1,i-1}+S_{j,i-1}x_{i}) \\
   &=S_{j-1,i-1}x_{i}^2+S_{1,i-1}\cdot S_{j-1,i-1}x_{i}+S_{1,i-1}\cdot S_{j,i-1}-S_{j+1,i-1}.
\end{aligned}
\end{equation*}
Let 
     \begin{equation*}
     \begin{aligned}
     A=&\left\{\alpha\in \mathbb{F}_{q}^{*}:
     S_{j-1,i-1}\alpha^2+S_{1,i-1}S_{j-1,i-1}\alpha+S_{1,i-1}S_{j,i-1}-S_{j+1,i-1}
     \right\}.
     \end{aligned}
     \end{equation*}
     Since $i\leq q-4$, 
      there exists $x_{i}\in \mathbb{F}_{q}^{*}\backslash \left(\{{x}_{1},\cdots,{x}_{i-1},-\frac{S_{j,i-1}}{S_{j-1,i-1}}\}\cup A\right)$ such that
     $S_{j,i}=S_{j,i-1}+S_{j-1,i-1}{x}_{i}\neq 0$
     and $S_{1,i}\cdot S_{j,i}-S_{j+1,i}\neq 0$.

\end{proof}

\begin{lemma}\label{Lem:C.2}
	Suppose $q$ is odd and $1\leq j\leq i-1\leq q-5$, then there exists $\left\{x_{1},\cdots,x_{i}\right\}\subseteq \mathbb{F}_{q}^{*}$ such that $S_{j-1,i}\neq 0$ and $S_{j,i}^2-S_{j-1,i}\cdot S_{j+1,i}\neq 0.$
\end{lemma}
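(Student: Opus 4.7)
The plan is to mimic the inductive one-variable-at-a-time construction used in the proof of Lemma~\ref{Lem:C.1}, combined with an outer induction on $j$ that reduces the nondegeneracy statement at level $(j,i)$ to the same statement at level $(j-1,i-1)$. Expanding via the standard recursion $S_{k,i}=S_{k,i-1}+S_{k-1,i-1}x_i$, one obtains
\begin{equation*}
S_{j,i}^2-S_{j-1,i}S_{j+1,i}=\left(S_{j-1,i-1}^2-S_{j-2,i-1}S_{j,i-1}\right)x_i^2+\left(S_{j,i-1}S_{j-1,i-1}-S_{j-2,i-1}S_{j+1,i-1}\right)x_i+\left(S_{j,i-1}^2-S_{j-1,i-1}S_{j+1,i-1}\right),
\end{equation*}
which is a polynomial of degree at most two in $x_i$ whose leading coefficient has exactly the same shape as the target quantity, but at the lower index $(j-1,i-1)$.

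For the base case $j=1$, the leading coefficient above equals $S_{0,i-1}^2-S_{-1,i-1}S_{1,i-1}=1$, so the polynomial $x_i\mapsto S_{1,i}^2-S_{2,i}$ is genuinely quadratic and has at most two roots in $\mathbb{F}_q$. I would therefore choose $x_1,\ldots,x_{i-1}$ to be arbitrary pairwise distinct elements of $\mathbb{F}_q^*$ (possible since $i-1\leq q-5$), then select $x_i\in\mathbb{F}_q^*$ avoiding the $i-1$ previously used values together with the at most two roots; the bound $i+1\leq q-3<q-1$ guarantees this is feasible, and $S_{j-1,i}=S_{0,i}=1\neq 0$ is automatic.

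For the inductive step $j\geq 2$, I would apply the induction hypothesis at the pair $(j-1,i-1)$, whose indices satisfy $1\leq j-1\leq i-2$ and $i-2\leq q-5$ under the hypotheses of the lemma. This produces pairwise distinct $x_1,\ldots,x_{i-1}\in\mathbb{F}_q^*$ with $S_{j-2,i-1}\neq 0$ and $S_{j-1,i-1}^2-S_{j-2,i-1}S_{j,i-1}\neq 0$. With such a choice, the quadratic $Q(x_i):=S_{j,i}^2-S_{j-1,i}S_{j+1,i}$ is genuinely of degree two in $x_i$ and thus has at most two roots; moreover, $S_{j-1,i}=S_{j-1,i-1}+S_{j-2,i-1}x_i$ is linear in $x_i$ with nonzero slope, contributing one further forbidden value. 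Picking $x_i\in\mathbb{F}_q^*$ that avoids the $i-1$ previous entries, the unique root of $S_{j-1,i}$ and the at most two roots of $Q$ leaves at least $(q-1)-(i+2)\geq 1$ admissible choices by the hypothesis $i\leq q-4$, which finishes the step.

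The main obstacle is ensuring that $Q(x_i)$ is \emph{genuinely} quadratic rather than collapsing to a polynomial of smaller degree, since otherwise one cannot rule out $Q\equiv 0$ identically and the counting argument for $x_i$ breaks. The inductive setup is tailored precisely to supply this nondegeneracy: the same quantity $S_{j-1,i-1}^2-S_{j-2,i-1}S_{j,i-1}$ whose nonvanishing we are trying to prove at level $(j,i)$ reappears as the leading coefficient of the auxiliary polynomial at the next level, which is exactly what makes induction on $j$ the right framework.
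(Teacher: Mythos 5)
Your proposal is correct and is essentially the paper's own argument: the paper also iteratively adjoins one variable at a time using the recursion $S_{k,i}=S_{k,i-1}+S_{k-1,i-1}x_i$, maintains the invariant that $S_{\ell-1,\cdot}\neq 0$ and $S_{\ell,\cdot}^2-S_{\ell-1,\cdot}S_{\ell+1,\cdot}\neq 0$ so that the next quadratic $f_\ell$ has nonzero leading coefficient, and counts the at most $i+2$ forbidden values of the new variable. Your downward induction on $j$ reducing $(j,i)$ to $(j-1,i-1)$ unrolls to exactly the paper's forward chain from $(1,i-j+1)$ up to $(j,i)$, with the same key identity and the same counting.
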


\begin{proof}
For any $i-j$-subset $\left\{x_{1},\cdots,x_{i-j}\right\}\subseteq \mathbb{F}_{q}^{*}$, let $f_{1}(X)=X^2+S_{1,i-j}\cdot X+S_{1,i-j}^2-S_{2,i-j}$ and $B_{1}$ denote the root in $\mathbb{F}_{q}$ of $f_1(X)$. Since $i-j\leq i-1\leq  q-5$, we can choose $x_{i-j+1}\in \mathbb{F}_{q}^{*}\backslash\left(\{x_{1},\cdots,x_{i-j}\}\cup B_{1}\right)$ such that $x_{1},\cdots,x_{i-j+1}$ are pairwise distinct  and
\begin{equation*}
       \begin{aligned}
              S_{1,i-j+1}^2-S_{0,i-j+1}S_{2,i-j+1}&=(S_{1,i-j}+x_{i-j+1})^2-(S_{2,i-j}+S_{1,i-j}x_{i-j+1})\\
              &=x_{i-j+1}^2+S_{1,i-j}x_{i-j+1}+S_{1,i-j}^2-S_{2,i-j}=f_{1}(x_{i-j+1})\neq 0.
       \end{aligned}
	   \end{equation*}
	    For all $2\leq\ell\leq j$, let 
        \begin{equation*}
            \begin{aligned}
                f_{\ell}(X)&=(S_{\ell-1,i-j+\ell-1}^2-S_{\ell-2,i-j+\ell-1}\cdot S_{\ell,i-j+\ell-1})X^2+(S^2_{\ell,i-j+\ell-1}-S_{\ell-1,i-j+\ell-1}\cdot S_{\ell+1,i-j+\ell-1})\\
                &+(S_{\ell,i-j+\ell-1}\cdot S_{\ell-1,i-j+\ell-1}-S_{\ell-2,i-j+\ell-1}\cdot S_{\ell+1,i-j+\ell-1})X
            \end{aligned}
        \end{equation*}
        and $B_{\ell}$
	   denote the root in  $\mathbb{F}_{q}$ of $f_{\ell}(X)$. Since $S_{\ell-1,i-j+\ell-1}^2-S_{\ell-2,i-j+\ell-1}\cdot S_{\ell,i-j+\ell-1},B_{\ell-2,i-j+\ell-1}\neq 0$ and $i-j+\ell\leq i\leq q-4$, we can choose $x_{i-j+\ell}\in \mathbb{F}_{q}^{*}\backslash\left(
	   \left\{ x_{1},\cdots,x_{i-j+\ell-1},-\frac{S_{\ell-1,i-j+\ell-1}}{S_{\ell-2,i-j+\ell-1}}\right\}\cup B_{\ell}\right)$ such that
	   $x_{1},\cdots,x_{i-j+\ell}\in \mathbb{F}_{q}^{*}$ are pairwise distinct, 
       $S_{\ell-1,i-j+\ell}=S_{\ell-1,i-j+\ell-1}+S_{\ell-2,i-j+\ell-1}x_{i-j+\ell}\neq 0$ and
	   \begin{equation*}
       \begin{aligned}
            &S_{\ell,i-j+\ell}^2-S_{\ell-1,i-j+\ell}\cdot S_{\ell+1,i-j+\ell}\\
            =&(S_{\ell,i-j+\ell-1}+S_{\ell-1,i-j+\ell-1}x_{i-j+\ell})^2-(S_{\ell-1,i-j+\ell-1}+S_{\ell-2,i-j+\ell-1}x_{i-j+\ell})\cdot (S_{\ell+1,i-j+\ell-1}+S_{\ell,i-j+\ell-1}x_{i-j+\ell})\\
            =&(S_{\ell-1,i-j+\ell-1}^2-S_{\ell-2,i-j+\ell-1}\cdot S_{\ell,i-j+\ell-1})x_{i-j+\ell}^2+(S_{\ell,i-j+\ell-1}\cdot S_{\ell-1,i-j+\ell-1}-S_{\ell-2,i-j+\ell-1}\cdot S_{\ell+1,i-j+\ell-1})x_{i-j+\ell}\\
                +&(S_{\ell,i-j+\ell-1}^2-S_{\ell-1,i-j+\ell-1}\cdot S_{\ell+1,i-j+\ell-1})=f_{\ell}(x_{i-j+\ell})\neq 0.
       \end{aligned}.
	   \end{equation*}
	 Hence, there exists $\left\{x_{1},\cdots,x_{i}\right\}\subseteq \mathbb{F}_{q}^{*}$ such that $S_{j-1,i}\neq 0$ and $S_{j,i}^2-S_{j-1,i}\cdot S_{j+1,i}\neq 0.$

\end{proof}

\begin{proof}[Proof of Lemma~\ref{Lem:4.8}]
		We know that the coefficient of $x^{3r-5}$ in $\tilde{g}(x)$ is $$(-1)^{r-2}\eta^3(a_{0}a_{1}^2-a_{0}^2a_{2})S_{1,r-2}S_{r-2,r-2}^3$$ and the coefficient of $x^{3r-6}$ in $\tilde{g}(x)$ is 
        \begin{equation*}
        \begin{aligned}
           &(-1)^{r-2}\eta^3\left(a_{0}^2a_{3}-a_{1}^3\right)S_{1,r-2}S_{r-3,r-2}S_{r-2,r-2}^2+(-1)^{r-2}\eta^2\left(a_{0}^2a_{2}-a_{0}a_{1}^2\right)S_{r-2,r-2}^3\\
           &~~~~~~~~~~~~~~~~~~~~~~~~~~~~~~~~~~~~~~~~~~~~~~~~~~~~~~~~~~~~~+(-1)^{r-2}\eta^3\left(a_{1}^3-a_{0}^2a_{3}\right)S_{r-2,r-2}^3. 
        \end{aligned}
        \end{equation*}
	We will consider the following situations:
	
	\begin{itemize}
		\item If $a_{0}=0,a_{1}\neq 0$, then  the coefficient of $x^{r-3}$ in $\tilde{g}_{4}(x)$ is $(-1)^{r-3}\eta a_{1}S_{r-3,r-2}$ and the coefficient of
		$x^{3r-6}$ in $\tilde{g}(x)$ is $$(-1)^{r-3}a_{1}^3\eta^3\left(S_{1,r-2}S_{r-3,r-2}S_{r-2,r-2}^2-S_{r-2,r-2}^3\right)=(-1)^{r-3}a_{1}^3\eta^3S_{r-2,r-2}^2\left(S_{1,r-2}S_{r-3,r-2}-S_{r-2,r-2}\right).
		$$
		If $r\geq 4$,  from Lemma~\ref{Lem:C.1}, since $4\leq r\leq q-4<q-2$, there exists $r-2$-subset $\left\{x_{1},\cdots,x_{r-2}\right\}\subseteq \mathbb{F}_{q}^{*}$ such that $S_{r-3,r-2}\neq 0$ and $S_{1,r-2}S_{r-3,r-2}-S_{r-2,r-2}\neq 0$. Thus, $\tilde{g}_{4}(x),\tilde{g}(x)\neq 0$.

        If $r=3$, then $\tilde{g}_{4}(x)=a_{1}\eta\neq 0$ and 
       \begin{equation*} \begin{small}
            \begin{aligned}
                &\tilde{g}(X)=(S_{1,1}X-\eta^{-1})\tilde{g}_{4}(X)\tilde{g}_{3}(X)^2+\tilde{g}_{3}(X)^3-(S_{1,1}X-\eta^{-1})\tilde{g}_{2}(X)\tilde{g}_{4}(X)^2-\tilde{g}_{1}(X)\tilde{g}_{4}(X)^2+a_{r}\tilde{g}_{4}(X)^2\\
                &=a_{1}\eta(S_{1,1}X-\eta^{-1})(-a_{1}\eta S_{1,1}X+a_{2}\eta)^2+(-a_{1}\eta S_{1,1}X+a_{2}\eta)^3+a_{1}^2a_{2}\eta^3S_{1,1}X(S_{1,1}X-\eta^{-1})+a_{1}^2a_{3}\eta^2\\
                &=a_{1}^2\eta^2S_{1,1}^2(2a_{2}\eta-a_{1})X^2+a_{1}a_{2}\eta^2S_{1,1}(a_{1}-2a_{2}\eta)X-a_{1}a_{2}^2\eta^2+a_{2}^3\eta^3+a_{1}^2a_{3}\eta^2\\
                &=a_{1}^2\eta^2S_{1,1}^2(2a_{2}\eta-a_{1})X^2+a_{1}a_{2}\eta^2S_{1,1}(a_{1}-2a_{2}\eta)X-(a_{1}-2a_{2}\eta)a_{2}^2\eta^2+(a_{1}^2-4a_{2}^2\eta^2)a_{3}\eta^2+(4a_{3}\eta-a_{2})a_{2}^2\eta^3.
            \end{aligned}  \end{small}
        \end{equation*}
        Because of $\boldsymbol{a}=(a_{0},a_{1},a_{2},a_{3})\notin\left\{(0,2b\eta,b,\frac{b}{4\eta})\in\mathbb{F}_{q}^4:b\neq 0\right\}$, we have $\tilde{g}(x)\neq 0$.

	
	\item If $a_{0}=a_{1}= 0,a_{2}\neq 0$. Because of $$\boldsymbol{a}=(a_{0},\cdots,a_{r})\notin \left\{(u_{0},\cdots,u_{r})\in\mathbb{F}_{q}^{r+1}:u_{0}=\cdots=u_{r-2}=0\ or\ u_{0}\neq 0,u_{1}=\cdots=u_{r}=0\right\},$$ 
    we have $r\geq 4$. 
	Notice that the coefficient of $x^{r-4}$ in $\tilde{g}_{4}(x)$ is $(-1)^{r-4}a_{2}\eta S_{r-4,r-2}$ and the coefficient of $x^{3r-9}$ in $\tilde{g}(x)$ is $$(-1)^{r-4}a_{2}^3\eta^3\left(S_{1,r-2}S_{r-4,r-2}S_{r-3,r-2}^2-S_{r-3,r-2}^3-S_{1,r-2}S_{r-4,r-2}^2S_{r-2,r-2}
	\right).$$
    
    If $r=4$, then the constant term in $\tilde{g}_{4}(x)$ is $a_{2}\eta\neq 0$ and the coefficient of $x^{3}$ in $\tilde{g}(x)$ is $-a_{2}^3\eta^3 S_{1,2}S_{2,2}$. Thus, we just choose $x_{1}\in \mathbb{F}_{q}^{*}$ and $x_{2}\in\mathbb{F}_{q}^{*}\backslash\{-x_{1}\}$ such that $-a_{2}^3\eta^3S_{1,2}S_{2,2}\neq 0$. 
    
    If $r\geq 5$, from Lemma~\ref{Lem:C.2}, since $1\leq r-4\leq q-8<q-5$, there exists $\left\{x_{1},\cdots,x_{r-3}\right\}\subseteq \mathbb{F}_{q}^{*}$ such that $S_{r-5,r-3}\neq 0$ and $S_{r-4,r-3}^2-S_{r-5,r-3}\cdot S_{r-3,r-3}\neq 0.$
	Let
	\begin{equation*}
	\begin{aligned}
	f(X)&=(S_{1,r-3}+X)(S_{r-4,r-3}+S_{r-5,r-3}X)(S_{r-3,r-3}+S_{r-4,r-3}X)^2-(S_{r-3,r-3}+S_{r-4,r-3}X)^3\\
	&-(S_{1,r-3}+X)(S_{r-4,r-3}+S_{r-5,r-3}X)^2S_{r-3,r-3}X\\
	\end{aligned}
	\end{equation*} 
	and $C_{1}$ denote the root in the $\mathbb{F}_{q}$ in $f(X)$.
	Because the coefficient of $X^4$ in $f(X)$ is $S_{r-5,r-3}(S_{r-4,r-3}^2-S_{r-5,r-3}\cdot S_{r-3,r-3})\neq 0$, $S_{r-5,r-3}\neq 0$ and $5\leq r\leq q-4$,
	there exists $$x_{r-2}\in \mathbb{F}_{q}^{*}\backslash\left(
	\left\{x_{1},\cdots,x_{r-3},-\frac{S_{r-4,r-3}}{S_{r-5,r-3}}\right\}\cup C_{1}\right)$$ such that
	$x_{1},\cdots,x_{r-2}$ are pairwise distinct,
	$S_{r-4,r-2}=S_{r-4,r-3}+S_{r-5,r-3}x_{r-2}\neq 0$ and $$S_{1,r-2}S_{r-4,r-2}S_{r-3,r-2}^2-S_{r-3,r-2}^3-S_{1,r-2}S_{r-4,r-2}^2S_{r-2,r-2}=f(x_{r-2})\neq 0.$$

		\item If $a_{0}=a_{1}=a_{2}= 0$ and there exists $3\leq t\leq r-2$ such that $a_{t}\neq 0$.  Let $U_{1}=\min\left\{3\leq i\leq r-2:a_i\neq 0\right\}$. Then $r\geq U_{1}+2\geq 5$.
		We know that the coefficient of $x^{r-U_{1}-2}$ in $\tilde{g}_{4}(x)$ is $(-1)^{r-U_{1}-2}a_{U_{1}}\eta S_{r-U_{1}-2,r-2}$ and the coefficient of $x^{3r-3U_{1}-3}$ in $\tilde{g}(x)$ is 
        \begin{equation*}
            \begin{aligned}
                &(-1)^{r-U_{1}-2}a_{U_{1}}^3\eta^3\left(S_{1,r-2}S_{r-U_{1}-2,r-2}S_{r-U_{1}-1,r-2}^2-S_{r-U_{1}-1,r-2}^3\right.\\
                &~~~~~~~~~~~~~~~~~~~~~~~~~\left.-S_{1,r-2}S_{r-U_{1}-2,r-2}^2S_{r-U_{1},r-2}+S_{r-U_{1}-2,r-2}^2S_{r-U_{1}+1,r-2}
		\right).
            \end{aligned}
        \end{equation*}
        If $r=U_{1}+2$, then the constant term in $\tilde{g}_{4}(x)$ is $a_{U_{1}}\eta\neq 0$
         and the coefficient of $x^{3}$ in $\tilde{g}(x)$ is $-a_{U_{1}}^3\eta^3(S_{1,r-2}S_{2,r-2}-S_{3,r-2})$. Since $1\leq 2\leq r-3\leq q-5$, from Lemma~\ref{Lem:C.1}, we have $S_{1,r-2}S_{2,r-2}-S_{3,r-2}\neq 0$. Thus, $\tilde{g}_{4}(x),\tilde{g}(x)\neq 0$.

         If $r\geq U_{1}+3$, from Lemma~\ref{Lem:C.2} and $1\leq r-U_{1}-2\leq r-4\leq q-5$, there exists $\left\{x_{1},\cdots,x_{r-3}\right\}\subseteq \mathbb{F}_{q}^{*}$ such that $S_{r-U_{1}-3,r-3}\neq 0$ and $S_{r-U_{1}-2,r-3}^2-S_{r-U_{1}-3,r-3}\cdot S_{r-U_{1}-1,r-3}\neq 0.$
		Let
		\begin{equation*}
		\begin{aligned}
		&f(X)=(S_{1,r-3}+X)(S_{r-U_{1}-2,r-3}+S_{r-U_{1}-3,r-3}X)(S_{r-U_{1}-1,r-3}+S_{r-U_{1}-2,r-3}X)^2\\
		&-(S_{1,r-3}+X)(S_{r-U_{1}-2,r-3}+S_{r-U_{1}-3,r-3}X)^2(S_{r-U_{1},r-3}+S_{r-U_{1}-1,r-3}X)\\
		&-(S_{r-U_{1}-1,r-3}+S_{r-U_{1}-2,r-3}X)^3+(S_{r-U_{1}-2,r-3}+S_{r-U_{1}-3,r-3}X)^2(S_{r-U_{1}+1,r-3}+S_{r-U_{1},r-3}X),
		\end{aligned}
		\end{equation*} 
		and $C_{2}$ denote the root in the $\mathbb{F}_{q}$ in $f(X)$.
		 Because the coefficient of $x^4$ in $f(x)$ is $$S_{r-U_{1}-3,r-3}\left(S_{r-U_{1}-2,r-3}^2-S_{r-U_{1}-3,r-3}\cdot S_{r-U_{1}-1,r-3}\right)\neq 0,S_{r-U_{1}-3,r-3}\neq 0$$
		and $r\leq q-4$,  there exists $x_{r-2}\in \mathbb{F}_{q}^{*}\backslash\left(
		\left\{x_{1},\cdots,x_{r-3},-\frac{S_{r-U_{1}-2,r-3}}{S_{r-U_{1}-3,r-3}}\right\}\cup C_{2}\right)$ such that
		$x_{1},\cdots,x_{r-2}$ are pairwise distinct,
		$S_{r-U_{1}-2,r-2}=S_{r-U_{1}-2,r-3}+S_{r-U_{1}-3,r-3}x_{r-2}\neq 0$ and
        \begin{equation*}
            \begin{aligned}
              &S_{1,r-2}S_{r-U_{1}-2,r-2}S_{r-U_{1}-1,r-2}^2-S_{r-U_{1}-1,r-2}^3-S_{1,r-2}S_{r-U_{1}-2,r-2}^2S_{r-U_{1},r-2}\\
              &-S_{r-U_{1}-2,r-2}^2S_{r-U_{1}+1,r-2}=f(x_{r-2})\neq 0.  
            \end{aligned}
        \end{equation*}
        Thus, $\tilde{g}_{4}(x),\tilde{g}(x)\neq 0.$

		\item If $a_{0}\neq 0,a_{1}=\cdots=a_{r-1}=0$ and $a_{r}\neq 0$, then $\tilde{g}(x)=a_{r}\tilde{g}_{4}(x)^2=a_{r}a_{0}^2\eta^2S_{r-2,r-2}^2x^{2r-4}\neq 0,\tilde{g}_{4}(x)=\eta a_{0}S_{r-2,r-2}x^{r-2}\neq 0$. 
		\item If $a_{0}\neq 0,a_{1}=a_{2}=0$ and there exists $3\leq i\leq r-1$ such that $a_{i}\neq 0$. Let $U_{2}=\min\left\{3\leq i\leq r-1:a_{i}\neq 0\right\}$, then the coefficients of $x^{r-2}$ in $\tilde{g}_{4}(x)$ is $(-1)^{r-2}a_{0}\eta S_{r-2,r-2}\neq 0$ and the coefficient of $x^{3r-3-U_{2}}$ in $\tilde{g}(x)$ is 
		\begin{equation*}
		\begin{aligned}
		&(-1)^{r-U_{2}+1}a_{0}^2a_{U_{2}}\eta^3S_{1,r-2}S_{r-U_{2},r-2}S_{r-2,r-2}^2+(-1)^{r-U_{2}}a_{0}^2a_{U_{2}}\eta^3S_{r-U_{2}+1,r-2}S_{r-2,r-2}^2\\
		&=(-1)^{r-U_{2}+1}a_{0}^2a_{U_{2}}\eta^3\left(S_{1,r-2}S_{r-U_{2},r-2}-S_{r-U_{2}+1,r-2}\right)S_{r-2,r-2}^2.
		\end{aligned}
		\end{equation*}
		From Lemma~\ref{Lem:C.1}, since $1\leq r-U_{2}\leq r-3\leq q-5$,  there exists $\left\{x_{1},\cdots,x_{r-2}\right\}\subseteq \mathbb{F}_{q}^{*}$ such that  $S_{1,r-2}S_{r-U_{2},r-2}-S_{r-U_{2}+1,r-2}\neq 0$. Thus, $\tilde{g}_{4}(x),\tilde{g}(x)\neq 0$.
		
		\item If $a_{0},a_{2}\neq 0,a_{1}=0$, then the coefficients of $x^{r-2}$ in $\tilde{g}_{4}(x)$ is $(-1)^{r-2}a_{0}\eta S_{r-2,r-2}\neq 0$ and the coefficient of $x^{3r-5}$ in $\tilde{g}(x)$ is $(-1)^{r-1}a_{0}^2a_{2}\eta^3S_{1,r-2}S_{r-2,r-2}^3.$ For any $\left\{x_{1},\cdots,x_{r-3}\right\}\subseteq \mathbb{F}_{q}^{*}$, Since $r\leq q$, we can choose $x_{r-2}\in \mathbb{F}_{q}^{*}\backslash\{x_{1},\cdots,x_{r-3},-\sum\limits_{i=1}^{r-3}x_{i}\}$ such that $S_{1,r-2}\neq 0$. Thus, $\tilde{g}(x),\tilde{g}_{4}(x)\neq 0$.
		
		\item If $a_{0},a_{1}\neq 0$ and there exists $2\leq t\leq r-1$ such that $a_{i}=a_{1}^ia_{0}^{-(i-1)}$ for all $2\leq i\leq t-1$ and $a_{t}\neq a_{1}^ta_{0}^{-(t-1)}$. Then the coefficients of $x^{r-2}$ in $\tilde{g}_{4}(x)$ is $(-1)^{r-2}a_{0}\eta S_{r-2,r-2}\neq 0$ and the coefficient of $x^{3r-3-t}$ in $\tilde{g}(x)$ is
        \begin{equation*}
\begin{aligned}
&S_{1,r-2}\eta^3\sum\limits_{
	0\leq i\leq r-2\atop{
		1\leq j\leq r-1\atop{
			1\leq s\leq r-1\atop{
				i+j+s=t}}}}(-1)^{r-t}a_{i}a_{j}a_{s}S_{r-2-i,r-2}S_{r-j-1,r-2}S_{r-s-1,r-2}\\
-&S_{1,r-2}\eta^3\sum\limits_{
	2\leq i\leq r-1\atop{
		0\leq j\leq r-2\atop{
			0\leq s\leq r-2\atop{
				i+j+s=t}}}}(-1)^{r-t}a_{i}a_{j}a_{s}S_{r-i,r-2}S_{r-j-2,r-2}S_{r-s-2,r-2}\\
&-\eta^2\sum\limits_{
	0\leq i\leq r-2\atop{
		1\leq j\leq r-1\atop{
			1\leq s\leq r-1\atop{
				i+j+s=t-1}}}}(-1)^{r-t-1}a_{i}a_{j}a_{s}S_{r-2-i,r-2}S_{r-j-1,r-2}S_{r-s-1,r-2}\\
+&\eta^2\sum\limits_{
	2\leq i\leq r-1\atop{
		0\leq j\leq r-2\atop{
			0\leq s\leq r-2\atop{
				i+j+s=t-1}}}}(-1)^{r-t+1}a_{i}a_{j}a_{s}S_{r-i,r-2}S_{r-j-2,r-2}S_{r-s-2,r-2}\\
+&\eta^3\sum\limits_{
	1\leq i\leq r-1\atop{
		1\leq j\leq r-1\atop{
			1\leq s\leq r-1\atop{
				i+j+s=t}}}}(-1)^{r-t-1}a_{i}a_{j}a_{s}S_{r-1-i,r-2}S_{r-j-1,r-2}S_{r-s-1,r-2}\\
-&\eta^3\sum\limits_{
	3\leq i\leq r-1\atop{
		0\leq j\leq r-2\atop{
			0\leq s\leq r-2\atop{
				i+j+s=t}}}}(-1)^{r-t+1}a_{i}a_{j}a_{s}S_{r-i+1,r-2}S_{r-j-2,r-2}S_{r-s-2,r-2}.\\
\end{aligned}
\end{equation*}

 If $t=2$, then the coefficient of $x^{3r-5}$ in $\tilde{g}(x)$ is $(-1)^{r-2}a_{0}^2\eta^3\left(\frac{a_{1}^2}{a_{0}}-a_{2}\right)S_{1,r-2}S_{r-2,r-2}^3.$
 On the one hand, we can choose $r-2$-subset $\left\{x_{1},\cdots,x_{r-2}\right\}\subseteq \mathbb{F}_{q}^{*}$ such that $S_{1,r-2}\neq 0$. On the other hand, $a_{2}\neq a_{1}^2a_{0}^{-1}$. Thus, there exists $r-2$-subset $\left\{x_{1},\cdots,x_{r-2}\right\}\subseteq \mathbb{F}_{q}^{*}$ such that $\tilde{g}(x)\neq 0$.
 
 If $3\leq t\leq r-1$, then 
\begin{equation*}
\begin{aligned}
&S_{1,r-2}\eta^3\sum\limits_{
	0\leq i\leq r-2\atop{
		1\leq j\leq r-1\atop{
			1\leq s\leq r-1\atop{
				i+j+s=t}}}}(-1)^{r-t}a_{i}a_{j}a_{s}S_{r-2-i,r-2}S_{r-j-1,r-2}S_{r-s-1,r-2}\\
-&S_{1,r-2}\eta^3\sum\limits_{
	2\leq i\leq r-1\atop{
		0\leq j\leq r-2\atop{
			0\leq s\leq r-2\atop{
				i+j+s=t}}}}(-1)^{r-t}a_{i}a_{j}a_{s}S_{r-i,r-2}S_{r-j-2,r-2}S_{r-s-2,r-2}\\
=&S_{1,r-2}\eta^3\sum\limits_{
	2\leq i\leq t\atop{
		0\leq j\leq t-2\atop{
			0\leq s\leq t-2\atop{
				i+j+s=t}}}}(-1)^{r-t}\left(a_{i-2}a_{j+1}a_{s+1}-a_{i}a_{j}a_{s}\right)S_{r-i,r-2}S_{r-j-2,r-2}S_{r-s-2,r-2}\\
\stackrel{(1)}{=}&S_{1,r-2}\eta^3\sum\limits_{
	1\leq j+s\leq t-2\atop{
		0\leq j\leq t-2\atop{
			0\leq s\leq t-2
				}}}(-1)^{r-t}\left(\frac{a_{1}^t}{a_{0}^{t-3}}-\frac{a_{1}^t}{a_{0}^{t-3}}\right)S_{r-t+j+s,r-2}S_{r-j-2,r-2}S_{r-s-2,r-2}\\
				+&(-1)^{r-t}\eta^3\left(a_{t-2}a_{1}^2-a_{t}a_{0}^2\right)S_{1,r-2}S_{r-t,r-2}S_{r-2,r-2}^2\\
                =&(-1)^{r-t}\eta^3a_{0}^2\left(\frac{a_{1}^{t}}{a_{0}^{t-1}}-a_{t}\right)S_{1,r-2}S_{r-t,r-2}S_{r-2,r-2}^2,
\end{aligned}
\end{equation*}
where $(1)$ follows from $a_{i}=a_{1}^ia_{0}^{-(i-1)}$ for all $3\leq i\leq t-1$. In addition,

\begin{equation*}
\begin{aligned}
-&\eta^2\sum\limits_{
	0\leq i\leq r-2\atop{
		1\leq j\leq r-1\atop{
			1\leq s\leq r-1\atop{
				i+j+s=t-1}}}}(-1)^{r-t-1}a_{i}a_{j}a_{s}S_{r-2-i,r-2}S_{r-j-1,r-2}S_{r-s-1,r-2}\\
+&\eta^2\sum\limits_{
	2\leq i\leq r-1\atop{
		0\leq j\leq r-2\atop{
			0\leq s\leq r-2\atop{
				i+j+s=t-1}}}}(-1)^{r-t+1}a_{i}a_{j}a_{s}S_{r-i,r-2}S_{r-j-2,r-2}S_{r-s-2,r-2}\\
=&-\eta^2\sum\limits_{
	2\leq i\leq t-1\atop{
		0\leq j\leq t-3\atop{
			0\leq s\leq t-3\atop{
				i+j+s=t-1}}}}(-1)^{r-t-1}\left(a_{i-2}a_{j+1}a_{s+1}-a_{i}a_{j}a_{s}\right)S_{r-i,r-2}S_{r-j-2,r-2}S_{r-s-2,r-2}\\
=&-\eta^2\sum\limits_{
	2\leq i\leq t-1\atop{
		0\leq j\leq t-3\atop{
			0\leq s\leq t-3\atop{
				i+j+s=t-1}}}}(-1)^{r-t-1}\left(
                 \frac{a_{1}^{t-1}}{a_{0}^{t-4}}-\frac{a_{1}^{t-1}}{a_{0}^{t-4}}
\right)S_{r-i,r-2}S_{r-j-2,r-2}S_{r-s-2,r-2}=0\\
\end{aligned}
\end{equation*}
and
\begin{equation*}
\begin{aligned}
&\eta^3\sum\limits_{
	1\leq i\leq r-1\atop{
		1\leq j\leq r-1\atop{
			1\leq s\leq r-1\atop{
				i+j+s=t}}}}(-1)^{r-t-1}a_{i}a_{j}a_{s}S_{r-1-i,r-2}S_{r-j-1,r-2}S_{r-s-1,r-2}\\
-&\eta^3\sum\limits_{
	3\leq i\leq r-1\atop{
		0\leq j\leq r-2\atop{
			0\leq s\leq r-2\atop{
				i+j+s=t}}}}(-1)^{r-t+1}a_{i}a_{j}a_{s}S_{r-i+1,r-2}S_{r-j-2,r-2}S_{r-s-2,r-2}\\
=&\eta^3\sum\limits_{
	3\leq i\leq t\atop{
		0\leq j\leq t-3\atop{
			0\leq s\leq t-3\atop{
				i+j+s=t}}}}(-1)^{r-t-1}\left(a_{i-2}a_{j+1}a_{s+1}-a_{i}a_{j}a_{s}\right)S_{r-i+1,r-2}S_{r-j-2,r-2}S_{r-s-2,r-2}\\
=&\eta^3\sum\limits_{
	1\leq j+s\leq t-3\atop{
		0\leq j\leq t-3\atop{
			0\leq s\leq t-3}}}(-1)^{r-t-1}\left(
\frac{a_{1}^t}{a_{0}^{t-3}}-\frac{a_{1}^t}{a_{0}^{t-3}}
\right)S_{r-i+1,r-2}S_{r-j-2,r-2}S_{r-s-2,r-2}\\
+&(-1)^{r-t-1}\eta^3(a_{t-2}a_{1}^2-a_{t}a_{0}^2)S_{r-t+1,r-2}S_{r-2,r-2}^2\\
=&(-1)^{r-t-1}\eta^3a_{0}^2\left(\frac{a_{1}^t}{a_{0}^{t-1}}-a_{t}\right)S_{r-t+1,r-2}S_{r-2,r-2}^2.
\end{aligned}
\end{equation*}

Thus, the coefficient of $x^{3r-3-t}$ in $\tilde{g}(x)$ is
\begin{equation*}
    (-1)^{r-t}\eta^3a_{0}^2\left(\frac{a_{1}^t}{a_{0}^{t-1}}-a_{t}\right)\left(S_{1,r-2}S_{r-t,r-2}-S_{r-t+1,r-2}\right)S_{r-2,r-2}^2
\end{equation*}

		On the one hand, from Lemma~\ref{Lem:C.1}, since $1\leq r-t\leq r-3\leq q-5$, there exists $\left\{x_{1},\cdots,x_{r-2}\right\}\subseteq \mathbb{F}_{q}^{*}$ such that  $S_{1,r-2}S_{r-t,r-2}-S_{r-t+1,r-2}\neq 0$.
		 On the other hand, $a_{t}\neq \frac{a_{1}^t}{a_{0}^{t-1}}$. Thus, there exists $\left\{x_{1},\cdots,x_{r-2}\right\}\subseteq \mathbb{F}_{q}^{*}$ such that $\tilde{g}_{4}(x)\neq 0$.

         In summary, there exists $\left\{x_{1},\cdots,x_{r-2}\right\}\subseteq \mathbb{F}_{q}^{*}$ such that $\tilde{g}(x),\tilde{g}_{4}(x)\neq 0$.
		\item If $a_{0},a_{1}\neq 0,a_j=a_{1}^ja_{0}^{-(j-1)}$ for all $2\leq j\leq r-1$  and $a_{r}\neq a_{1}^ra_{0}^{-(r-1)}-\eta a_{1}^{r+1}a_{0}^{-r}$. Then the coefficients of $x^{r-2}$ in $\tilde{g}_{4}(x)$ is $(-1)^{r-2}a_{0}\eta S_{r-2,r-2}\neq 0$ and the coefficients of constant terms in $\tilde{g}(x)$ is 
		\begin{equation*}
		-\eta^2a_{r-2}a_{r-1}^2+\eta^3a_{r-1}^3+a_{r}\eta^2a_{r-2}^2=\eta^2a_{r-2}^2\left(a_{r}-a_{1}^ra_{0}^{-(r-1)}+\eta a_{1}^{r+1}a_{0}^{-r}\right)\neq 0.
		\end{equation*}
		Thus, $\tilde{g}(x),\tilde{g}_{4}(x)\neq 0$.

	\end{itemize}
	
\end{proof}

\bibliographystyle{plain}
\bibliography{TRS}

\newpage

\end{document}